\documentclass{m2an_blank}
\usepackage{graphicx}
\usepackage{amsmath,amssymb, amsthm,amsbsy}
\usepackage{url}
\usepackage{pstricks}
\usepackage{bbm}
\usepackage{mathtools}
\mathtoolsset{showonlyrefs}

\voffset        -0.5in
\textheight     9.0in

\newcommand{\one}{\mathbbm{1}}

\newcommand{\D}{\mathcal D}
\newcommand{\EE}{\mathbb E}

\newcommand{\Gb}{{\mathbf G}}
\newcommand{\LL}{\mathcal L}
\newcommand{\NN}{\mathbb N}
\newcommand{\PP}{\mathbb P}

\newcommand{\qb}{{\mathbf q}}
\newcommand{\qtb}{\widetilde{\qb}}
\newcommand{\Qb}{{\mathbf Q}}

\newcommand{\Qtb}{\widetilde{\Qb}}

\newcommand{\rh}{\hat{r}}
\newcommand{\RR}{\mathbb R}
\newcommand{\SSph}{{\mathbb S}}
\newcommand{\ub}{\mathbf{u}}
\newcommand{\Vt}{\widetilde{V}}
\newcommand{\Vtb}{\mathbf{{\Vt}}}
\newcommand{\Vh}{\widehat{{V}}}
\newcommand{\Vavg}{\mathring{V}}
\newcommand{\Vavgb}{\mathring{\mathbf{V}}}
\newcommand{\Vhm}{{\Vh_{\mathrm{min}}}}
\newcommand{\Vb}{{\mathbf V}}
\newcommand{\vmax}{{v_{\mathrm{max}}}}

\newcommand{\xm}{{x_{\rm min}}}
\newcommand{\Wb}{{\mathbf W}}

\newcommand{\ZZ}{{\mathbb Z}}

\newcommand{\eb}{\mathbf{e}}

\newcommand{\is }{I_{\rm slow}}
\newcommand{\ifa}{I_{\rm fast}}
\newcommand{\iex}{I_{\rm rem}}
\newcommand{\ime}{I_{\rm mech}}
\newcommand{\ima}{I_{\rm Mark}}
\newcommand{\iext}{I_{\rm add}}
\newcommand{\tint}{{\{0 \leq t \leq T\}}}
\newcommand{\vb}{\mathbf{v}}
\newcommand{\vce}{{\overline{v}}}

\newcommand{\vcen}{{\overline{\vb}}}
\newcommand{\Vcen}{{\overline{\Vb}}}
\newcommand{\pconv}{{\stackrel{p}{\longrightarrow}}}
\newcommand{\lconv}{{\stackrel{\LL}{\longrightarrow}}}

\newcommand{\eps}{\varepsilon}

\DeclareMathOperator{\tr}{tr}
\DeclareMathOperator{\divop}{div}

\DeclareMathOperator*{\esssup}{ess\,sup}

\numberwithin{equation}{section}

\theoremstyle{plain}
\newtheorem{theorem}{Theorem}[section]
\newtheorem{lemma}{Lemma}[section]

\newtheorem{definition}{Definition}[section]

\theoremstyle{remark}
\newtheorem{remark}{Remark}[section]

\begin{document}
\title{Derivation
of Langevin Dynamics in a Nonzero Background Flow Field}
\date{\today}
\author{Matthew Dobson}\sameaddress{3, 4}
\secondaddress{now at: Department of Mathematics and Statistics,
710 N.~Pleasant Street,
University of Massachusetts,
Amherst, MA 01003-9305, USA
\email{dobson@math.umass.edu}
}
\author{Fr\'ed\'eric Legoll}\address{Laboratoire Navier - Ecole des Ponts ParisTech,
6 et 8 avenue Blaise Pascal,
Cit\'e Descartes - Champs sur Marne,
77455 Marne la Vall\'ee Cedex 2, France
\email{legoll@lami.enpc.fr}
}
\secondaddress{INRIA Rocquencourt, MICMAC team-project,
{\footnotesize Domaine de Voluceau, B.P. 105,
  78153 Le Chesnay Cedex, France}
}
\author{Tony Leli\`evre}
\sameaddress{3}
\secondaddress{CERMICS - Ecole des Ponts ParisTech,
6 et 8 avenue Blaise Pascal,
Cit\'e Descartes - Champs sur Marne,
77455 Marne la Vall\'ee Cedex 2, France
\email{\{lelievre,stoltz\}@cermics.enpc.fr}
}
\author{Gabriel Stoltz}\sameaddress{3, 4}

\keywords{nonequilibrium, Langevin dynamics, multiscale, molecular
simulation}
\begin{abstract}
We propose a derivation of a nonequilibrium Langevin dynamics for a
large particle immersed in a background flow field.  A single large
particle is placed in an ideal gas heat bath composed of point
particles that are distributed consistently with the background flow
field and that interact with the large particle through elastic
collisions.  In the limit of small bath atom mass, the large particle
dynamics converges in law to a stochastic dynamics.  This
derivation follows the ideas of~\cite{durr83,cald89,durr80} and
provides extensions to handle the nonzero background flow.  The
derived nonequilibrium Langevin dynamics is similar to the dynamics
in~\cite{mcph01}.  Some numerical experiments illustrate the use of
the obtained dynamic to simulate homogeneous liquid materials under
shear flow.
\end{abstract}

\subjclass{82C05,82C31}
\maketitle

\section{Introduction}

Molecular dynamics simulations have been increasingly used to bring
atomistic accuracy to macroscopic fluid
models~\cite{ocon95,werd05,hadj05,ren_05}.  One example is the
computation of the constitutive relation between the strain rate
$A=\nabla \ub$ and the stress tensor $\boldsymbol{\sigma}(\nabla \ub, T)$ in
complex fluids at temperature $T$, where one uses a microscopic
simulation to determine the closure relation for the continuum
equations~\cite{lebr_visco}.   We thus wish to simulate
molecular systems at temperature $T$ that are subject to a steady,
nonzero macroscopic flow, and the goal of this paper is the
derivation of a dynamics to sample such states. 

Simulation of molecular systems with a nonzero background flow is one
goal of nonequilibrium molecular dynamics (NEMD)
techniques~\cite{cicc05,evan07}.  Several strategies have been
proposed to sample molecular system under nonzero flow: for example,
the SLLOD and g-SLLOD equations of
motion~\cite{evan07,edwa06,todd07,tuck97} or dissipative particle
dynamics~\cite{sodd03}.  Typically, these are used in conjunction with
consistent boundary conditions such as the
Lees-Edwards boundary conditions~\cite{evan07} in the case of shear
flow or the Kraynik-Reinelt boundary conditions~\cite{todd99} for
elongational flow.  The basic equations of motion for these algorithms
typically exhibit energy growth and need some form of modification if
one wants temperature control for the system.  Two common choices are
the isokinetic Gaussian thermostat and the Nos\'e-Hoover thermostat.
It has been shown that the Nos\'e-Hoover dynamics is non-ergodic for
the NVT ensemble~\cite{lego07,lego09}, and furthermore we are not
aware of an analysis of the ergodicity of the Gaussian thermostat for
these types of nonequilibrium molecular systems.  Instead, we 
work in a framework that leads to stochastic equations that have a 
Langevin-type thermostat.
In the case of no background flow, the standard Langevin dynamics has
long been used to sample the NVT ensemble, and it can be shown to be
ergodic.  

In the following, we \emph{derive} a system of equations for a single
particle in a microscopic heat bath with a nonzero, constant-in-time
background flow in such a way that velocity gradient control and
temperature control are incorporated at the same point in the model.
The resulting system of equations, which we refer to as the 
\emph{nonequilibrium Langevin dynamics}, is
\begin{equation}
\begin{split}
\label{ou_flow_intro}
d \Qb &= \Vb dt, \\
M d \Vb &= - \gamma (\Vb-A \Qb) dt + \sigma d\Wb,
\end{split}
\end{equation}
where $(\Qb,\Vb)$ are the particle's position and velocity, $M$ is the
mass of the particle, $\Wb$ is a standard Brownian motion, $A$
is the homogeneous strain rate, and $\gamma$ and $\sigma$ are scalar
constants satisfying the fluctuation-dissipation relation
\begin{equation}
\label{fdr}
\gamma = \frac{1}{2} \sigma^2 \beta
\end{equation}  
where $\beta = \frac{1}{k_B T}$ is the inverse temperature.  This system
of equations is a generalization of Langevin dynamics, which we recover
in the limit $A \rightarrow 0.$  In the
sequel, we follow the
derivation of D\"urr, Goldstein, and Lebowitz
(DGL)~\cite{durr80,durr83} who consider the case of a heat bath that
has zero background flow.  

We briefly summarize the ingredients of the mechanical model used 
in the derivation.  A full description is given in Section 2.
The microscopic mechanical model consists
of a single large particle
immersed in a bath composed of infinitely many small bath atoms.  
The mass of the large particle, $M,$ is held constant while
the mass of an individual bath atom, $m,$ is a parameter of the
mechanical model.  The large particle moves in a ballistic trajectory
until it collides with a bath atom, at which point an elastic collision
occurs according to~\eqref{collision2} below, causing a jump in the velocity.  
The heat bath is constructed in such a way that most of these jumps are independent
events distributed according to a velocity measure centered around the
background flow.   
More precisely, the bath atoms have a random initial velocity 
distribution that is centered around the desired background flow~\eqref{bath_meas2}-\eqref{fvq},
with its mean relative kinetic energy proportional to the macroscopic
temperature.    A microscopic dynamics for the bath
atoms is chosen so that the velocity distribution of the bath
atoms is preserved (one choice is given by~\eqref{nonHam1}), up until they collide with the particle.  A typical bath atom has velocity much
larger than the velocity of the large particle, and such a bath atom
will collide at most once with the large particle (a fact made rigorous in
Appendix~\ref{norecollide}).
The nonequilibrium Langevin dynamics is then derived as the limiting dynamics
of the large particle when
$m \rightarrow 0$ (Theorem~\ref{thm:main}).  
  
The full description of the original DGL model
as well as two approaches to incorporating background flow are given
in Section~\ref{sec:model} culminating in the main convergence result,
Theorem~\ref{thm:main}.  The proof of convergence of the heat bath
model to the derived stochastic equations~\eqref{ou_flow_intro} is
carried out  in the Appendices.  The proof is structured as
in~\cite{durr80}, with added arguments to control how the flow in the
heat bath affects the error growth.  In Section~\ref{sec:num} we
include numerical results showing the application of the derived
equations to computing the shear stress of a Lennard-Jones fluid.

\section{Models for non-uniform background flow}
\label{sec:model}
In the following, we consider a system in $\RR^d$ (for $d=2 \text{
or } 3$)  composed of a single, distinguished large {particle}
immersed in a heat bath of light {atoms} that have mean velocity
$A\qb$ at point $\qb.$  We derive the equations of motion of the large
particle in the limit as the mass of the individual bath atoms goes to
zero.  We note that we use the terms `particle' and `atoms' to
differentiate the two types of objects, but the heat bath could well
be composed of light molecules.  We follow the arguments of D\"urr,
Goldstein, and Lebowitz (DGL)~\cite{durr80}, who consider a single
large particle placed in an infinite, constant-temperature heat bath
with zero background flow (note that the case of a constant uniform
background velocity is equivalent to the case of zero background flow
by a change of coordinates).  The only forces which act on the large
particle are due to the heat bath (see Section~\ref{sec:gen} for
extensions to multiple particles and more general interactions).  In
the limit of small bath atom mass for zero background flow, DGL
recover the Langevin dynamics 
\begin{equation}
\label{lang1}
\begin{split} 
d \Qb &= \Vb dt, \\
M d \Vb &= - \gamma \Vb dt + \sigma d\Wb,
\end{split} 
\end{equation}
where $\Wb$ denotes a $d$-dimensional Brownian motion, $\Qb,\Vb \in
\RR^d$ denote the position and velocity of the large particle, and
$\gamma, \sigma \in \RR$ are scalar constants that are determined by
the parameters of the large particle and the heat bath and that
satisfy the fluctuation-dissipation relation~\eqref{fdr}.  Both the
drift and diffusion terms in the velocity equation are caused by the
elastic collisions of the large particle with the bath atoms, and the
randomness stems only from the random initial configuration of the
bath.  In this paper we extend the work of DGL~\cite{durr80,durr83} to
the case of a nonzero background flow.

In Section~\ref{mechdef}, we review the construction of the heat bath
and trajectory of the large particle in the zero-flow case
of~\cite{durr80} before outlining in the remaining subsections 
two possible approaches for
extending this to the case of nonzero background flow.  The first approach, 
outlined in Section~\ref{sec:lam1}, applies to shear flows modeled by
a heat bath with one or multiple unidirectional laminar flows. 
In this approach, the limiting nonequilibrium dynamics is not
the dynamics given in~\eqref{ou_flow_intro}.  For reasons outlined
in Section~\ref{sec:lam1}, this first approach is not the one we focus on in this paper.
We then describe a second approach in Section~\ref{meas_dyn} involving
a modified non-Hamiltonian dynamics for the bath atoms, which yields in the limit
$m \rightarrow 0$ the nonequilibrium
Langevin dynamics~\eqref{ou_flow_intro} for general incompressible background flows.  
Note that, in contrast to the first approach, this second approach is not
restricted to shear flows.

\medskip

In the following, we use bold notation for vectors and normal weight
for scalars and matrices.  Capital letters are used to distinguish the
large particle's position and velocity $(\Qb, \Vb)$ from that of a
generic bath atom $(\qb, \vb)$. 
 
\subsection{Heat bath with zero background flow}
\label{mechdef}

Here we recall the model and results of~\cite{durr80}.  The bath
consists of infinitely many light atoms each with position $\qb \in
\RR^d$ and velocity $\vb \in \RR^d$.  All bath atoms have the same
mass, $m>0,$ and zero radius.  The initial bath configuration is drawn
from a Poisson field (whose definition we recall below) with an
$m$-dependent measure given by
\begin{equation}
\label{bath_meas1}
d \mu_m(\vb,\qb) = \lambda_m f_m(\vb) \, d \qb \, d \vb,
                           \qquad \qb,\vb \in \RR^d,
\end{equation}
where 
\begin{equation*}
\lambda_m = m^{-1/2} \lambda
\end{equation*} 
is the expected number of atoms per unit volume and $f_m$ is the
scaled probability distribution on the velocities.  The velocity
probability distribution scales like
\begin{equation*}
f_m(\vb) = m^{d/2} f(m^{1/2}\vb),
\end{equation*} 
which means that for a single bath atom, the expected speed
$\EE_m(|\vb|) = \int_{\RR^d} |\vb| f_m(\vb) \, d \vb$ is proportional to
$m^{-1/2}.$  This scaling ensures that the average kinetic energy per
atom is constant in the limit $m \rightarrow 0.$
\begin{figure}
\centerline{\input{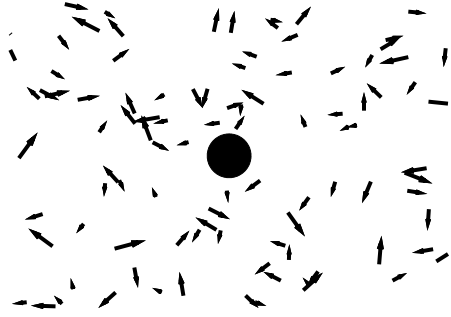}}
\caption{\label{fig:uniformbath}
A large particle (of radius $R=2$), surrounded by a heat bath with zero background flow,
whose atoms are uniformly distributed in space.  The velocity of an
atom is distributed according to~\protect\eqref{gauss1}, with
$\beta=1,$ $m=1,$ and $\lambda=\frac{1}{16}.$ }
\end{figure}
The distribution $f(\vb)$ is assumed to be rotationally invariant.  
The quantities
\begin{equation}
\label{phi_i_2}
\Phi_i = \frac{1}{2} \int_{\RR^d} |v_1|^i f(\vb) \, d \vb,
	\qquad \qquad \text{for } i = 1, \dots, 4,
\end{equation} 
where $v_1$ denotes the first component of $\vb,$ are assumed to be
finite.  For a set $S \subset \RR^d \times \RR^d$ in phase space, with
measure denoted by $\mu_m(S),$ the number of atoms in $S$, denoted
$N_m(S),$ is a Poisson random variable with parameter $\mu_m(S)$, so
that for $k\geq 0,$ 
\begin{equation*}
\PP( N_m(S) = k) = \frac{\mu_m(S)^k}{k!} e^{-\mu_m(S)}.
\end{equation*}
Figure~\ref{fig:uniformbath} shows one possible realization of the
heat bath with the choice
\begin{equation}
\label{gauss1}
f(\vb) = Z^{-1} \exp\left(- \frac{\beta}{2} |\vb|^2\right),
\end{equation}
where $Z = \left(\frac{2 \pi}{\beta}\right)^{d/2}$ is the
normalization constant.

As soon as the initial condition of the heat bath process has been chosen,
the evolution of the system is deterministic.
The bath is initialized and a large particle with a finite radius $R$
and mass $M$ is placed in the bath at $(\Qb(0),\Vb(0)),$ where we note
that the initial condition is independent of $m.$  
Any bath atoms that are underneath
the large particle at the initial time, that is, those such that
$|\qb(0)-\Qb(0)| \leq R$, are removed from the bath (we recall
that the bath atoms have zero radius).  
The bath atoms have no self-interaction and, aside from collisions
with the large particle, obey the dynamics
\begin{equation} 
\label{ballistic}
\begin{split}
d \qb &= \vb dt, \\
d \vb &= 0.
\end{split} 
\end{equation} 
Likewise, the large particle's position and velocity
evolve according to
\begin{equation}
\label{particle_dynamics} 
\begin{split} 
d\Qb_m &= \Vb_m \, dt, \\
d\Vb_m &= 0.
\end{split}
\end{equation} 
We explicitly denote the $m$-dependence of the mechanical process, 
while for notational convenience, we
omit the $m$-dependence of the bath coordinates.

When a bath atom encounters the large particle, it undergoes an
elastic collision.  
For a single collision, let $\eb_{\rm n}$ denote the unit vector in
the direction from the bath atom to the particle center.  Let
$\Vb_{\rm n} = (\Vb \cdot \eb_{\rm n}) \eb_{\rm n}$ and $\Vb_{\rm t} =
\Vb - \Vb_{\rm n}$ denote the velocity of the large particle in the
normal and tangential directions.  Similarly, define $\vb_{\rm n} =
(\vb \cdot \eb_{\rm n}) \eb_{\rm n}$ and $\vb_{\rm t} = \vb - \vb_{\rm
n}$ for the bath atoms.  We distinguish between the component of
$\Vb$ in the normal direction $\Vb_{\rm n} = (\Vb \cdot \eb_{\rm n})
\eb_{\rm n}$ and its magnitude $V_{\rm n} = \Vb \cdot \eb_{\rm n}.$
The elastic collision rule is
\begin{equation}
\label{collision2}
\begin{split}
\Vb'_{\rm t} &= \Vb_{\rm t}, \qquad \qquad \vb'_{\rm t} = \vb_{\rm t}, \\
\Vb'_{\rm n} &= \hphantom{-}\frac{M-m}{M+m} \Vb_{\rm n} + \frac{2 m}{M+m} \vb_{\rm n}, \\ 
\vb'_{\rm n} &= -\frac{M-m}{M+m} \vb_{\rm n} + \frac{2 M}{M+m} \Vb_{\rm n}, \\ 
\end{split}
\end{equation}
where primes denote the after-collision velocities.  The generated
trajectory $(\Qb_m,\Vb_m)$ is called the \emph{mechanical process}.
It is possible to only consider the case of single collisions since it
has been shown that pathologies such as multiple simultaneous
collisions or an infinite number of collisions in a finite time
interval have zero probability~\cite[Appendix]{durr80} (see also
Appendix~\ref{wellposed}).  As a result
of the above, the mechanical process is well-defined on the finite
interval~$[0,T].$

Each initial condition of the heat bath corresponds to a realization
of the mechanical process $(\Qb_m, \Vb_m).$
The velocity $\Vb_m$ is defined to have right-continuous jumps so that the 
trajectory is a c\`adl\`ag function, that is,
a piecewise continuous function that
is everywhere right-continuous and has a left-hand limit.
For a fixed initial condition of the heat bath, the particle
trajectory $(\Qb_m,\Vb_m)$ is a deterministic
process; however, since the bath's initial condition is a random
variable, the particle's position and velocity are also random
variables.

\begin{remark}
\label{rem:notMarkov}
The trajectory $(\Qb_m, \Vb_m)$ is not a Markov
process.  It would only be Markov if the rate of collisions with bath
atoms did not depend on the history of the particle; however, this is
not the case because of two effects.  First, recollisions are possible
with bath atoms that are moving sufficiently slowly.  Second, certain
collisions, called `virtual collisions', with slow-moving atoms are
made impossible due to the particle `sweeping out' a path behind it.
The large particle creates a wake behind itself, and whenever it
decelerates, there is a minimum delay before slow moving atoms can hit
it from behind.  To show convergence of the mechanical process to the
limiting Langevin dynamics~\eqref{lang1}, one must show that
these history effects are negligible in the limit $m \rightarrow 0.$
\end{remark}

We now define the appropriate topology in order to precisely state the
convergence result of~\cite{durr80}.  We fix a finite time $T$ and let
$\D([0,T])$ denote the space of c\`adl\`ag functions on the interval $[0,T]$.
For functions $f, g \in \D([0,T]),$ we define the Skorokhod metric
(see for example~\cite{bill99})
\begin{equation}
\label{skor}
\sigma_{\rm {sk}}(f,g) = \inf_{\lambda \in \Lambda} \max \{ 
   \| \lambda - \mathrm{Id} \|_{L^{\infty}([0,T])}, 
   \|f - g \circ \lambda \|_{L^{\infty}([0,T];\RR^d)} \}, 
\end{equation} 
where $\mathrm{Id} : [0,T] \rightarrow [0,T]$ denotes the identity map and
$\Lambda$ is the set of all strictly increasing, continuous bijections
of $[0,T]$.  For vector-valued $f$, we apply the Euclidean
norm in space so that the $L^{\infty}$-norm above is given by 
\begin{equation*}
   \|f \|_{L^{\infty}([0,T];\RR^d)} = \esssup_{t \in [0,T]} | f (t) |.
\end{equation*}
The mechanical process $(\Qb_m, \Vb_m)$ defined above
is in $\D([0,T]).$ 

We now recall the following definitions for the convergence of random 
variables.
\begin{definition}
\label{def:conv}
Let $Z_m$ and $Z$ be random variables in the metric space
$(\D([0,T]),\sigma_{\rm {sk}}).$  We denote convergence in law by
$(Z_m)_\tint \lconv Z_\tint$, that is,
$\EE f (Z_m) \rightarrow \EE f(Z)$ as $m\rightarrow 0$ for all
bounded, continuous functions $f: \D([0,T]) \rightarrow \RR.$
We denote convergence in probability by $(Z_m)_\tint  \pconv (Z)_\tint $,
that is, for all $\eps > 0,$ 
\begin{equation*}
\lim_{m \rightarrow 0} \PP(\{\sigma_{\rm {sk}}(Z_m,Z) > \eps\}) = 0.
\end{equation*} 
\end{definition} 

The following theorem gives not only convergence in law of the
mechanical process to the Langevin dynamics~\eqref{lang1}, but also an
explicit expression for the coefficients of the limiting dynamics:
\begin{equation}
\label{lang_coeff}
\gamma = \frac{4 \lambda R^{d-1} S_{d-1}}{d} \Phi_1, 
  \qquad 
\sigma = \left[\frac{4 \lambda R^{d-1} S_{d-1}}{d} \Phi_3
\right]^{\frac{1}{2}},
\end{equation}
where $S_{d-1}$ is the surface area of the $(d-1)$-sphere
$\SSph^{d-1}$,  $\Phi_i$ is defined in~\eqref{phi_i_2}, $R$ and $M$
are the radius and mass of the large particle, and $\lambda$
is a density parameter of the bath in~\eqref{bath_meas1}.

\begin{theorem}[{\cite[Theorem 2.1]{durr80}}]
\label{thm:dlg}
Consider the mechanical process $(\Qb_m, \Vb_m),$ defined in~\eqref{ballistic},~\eqref{particle_dynamics}, and~\eqref{collision2} with 
bath measure defined in~\eqref{bath_meas1}, and 
let $(\Qb, \Vb)$ denote the solution to the 
Langevin equation~\eqref{lang1} 
with coefficients~\eqref{lang_coeff}.
For any $T>0,$ 
\begin{equation*}
(\Qb_m, \Vb_m)_\tint \lconv (\Qb, \Vb)_\tint
\end{equation*}
as $m \rightarrow 0,$ where convergence is with respect to 
the Skorokhod metric~\eqref{skor} on $\D([0,T])$.
\end{theorem}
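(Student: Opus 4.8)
The plan is to show convergence in law by establishing convergence of the finite-dimensional distributions together with tightness, and then identifying the limit as the unique solution of the Langevin SDE~\eqref{lang1}. The central object is the velocity increment $\Vb_m(t) - \Vb_m(0)$ accumulated over collisions. First I would decompose the total momentum transfer over a time interval into contributions from individual collisions: each collision with a bath atom of velocity $\vb$ and contact direction $\eb_{\rm n}$ changes $\Vb_m$ by the jump dictated by~\eqref{collision2}, namely $\Vb'_{\rm n} - \Vb_{\rm n} = \frac{2m}{M+m}(\vb_{\rm n} - \Vb_{\rm n})$. Since the collisions are generated by a Poisson field, for a \emph{fixed} particle trajectory the number and marks of collisions in a given time window form a Poisson process, and I would compute the infinitesimal mean and variance of the velocity change. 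The key scaling is that the jump per collision is $O(m)$ while the collision rate, governed by $\lambda_m = m^{-1/2}\lambda$ and the $O(m^{-1/2})$ typical bath speed, is $O(m^{-1})$; these combine so that the drift and the quadratic variation converge to finite limits as $m \to 0$.

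Next I would carry out the moment computations to identify the coefficients. Integrating the normal momentum transfer against the collision kernel over the sphere $\SSph^{d-1}$ and over the (rescaled) velocity distribution $f$, the leading-order drift is linear in $\Vb_m$ and produces the friction coefficient $\gamma = \frac{4\lambda R^{d-1} S_{d-1}}{d}\Phi_1$, while the second moment of the momentum transfer produces the diffusion coefficient $\sigma$ with $\sigma^2 = \frac{4\lambda R^{d-1} S_{d-1}}{d}\Phi_3$, exactly~\eqref{lang_coeff}; the rotational invariance of $f$ is what makes the off-diagonal contributions vanish and yields the scalar (isotropic) coefficients. This is essentially a weak-convergence-to-diffusion argument for a pure-jump Markov approximation, so I would invoke a martingale/generator convergence criterion (in the spirit of the theory in~\cite{bill99}): verify that the generator of the mechanical process, applied to smooth test functions, converges to the generator $\Vb\cdot\nabla_\Qb - \frac{\gamma}{M}\Vb\cdot\nabla_\Vb + \frac{\sigma^2}{2M^2}\Delta_\Vb$ of~\eqref{lang1}. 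Tightness in the Skorokhod topology on $\D([0,T])$ follows from uniform control of the increments, using that the jumps are uniformly small ($O(m)$) so the limit lands in the continuous paths.

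The main obstacle — and the reason the argument is nontrivial rather than a textbook Poisson-to-diffusion limit — is precisely the non-Markovian nature flagged in Remark~\ref{rem:notMarkov}. The collision rate depends on the particle's past through recollisions and through the `wake' of virtual collisions swept out behind a decelerating particle, so the increments are not genuinely those of a Markov process and the clean moment computation above is only the idealized `independent-collisions' answer. The heart of the proof is therefore to show these history-dependent corrections are negligible as $m \to 0$: one must prove that the probability that a bath atom collides more than once, and the measure of atoms affected by the wake, both vanish in the limit. This rests on the fact that a typical bath atom moves with speed $O(m^{-1/2})$, vastly faster than the $O(1)$ particle, so it can realistically collide at most once; making this rigorous requires the kind of estimates deferred to Appendix~\ref{norecollide} and Appendix~\ref{wellposed}. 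I would organize the proof so that, once the recollision and wake contributions are shown to be $o(1)$, the mechanical process is coupled to an auxiliary process with genuinely independent Poisson collisions, to which the generator-convergence and tightness arguments apply directly, yielding the stated convergence in law to $(\Qb,\Vb)$.
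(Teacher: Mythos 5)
Your proposal is correct and follows essentially the same route as the paper (which itself structures its proof of Theorem~\ref{thm:main} after the cited DGL argument): couple the mechanical process to an auxiliary Markov jump process with independent Poisson collisions, prove convergence of that process to~\eqref{lang1} by generator convergence (the paper uses the Kurtz-type criterion of Lemma~\ref{thm:gen_converge}, which also yields the Skorokhod-topology convergence you would get from tightness plus finite-dimensional convergence), and show the recollision and wake corrections are negligible as $m \to 0$. The only presentational difference is that the paper implements your ``negligible history effects'' step via an explicit fast/slow collision cutoff $c_m = m^{-1/5}$ and a stopping time, but this is a technical device realizing exactly the idea you describe.
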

The resulting dynamics~\eqref{lang1} satisfies
the fluctuation-dissipation relation~\eqref{fdr}
provided that 
\begin{equation*}
\frac{\Phi_1}{\Phi_3} = \frac{\beta}{2},
\end{equation*}
which holds true for $f(\vb)$ as in~\eqref{gauss1}, for example.
In this case, the temperature of the large particle is identical to
that of the bath.  The main result of this paper,
Theorem~\ref{thm:main}, is the generalization of Theorem~\ref{thm:dlg}
to the case where the heat bath has a nonzero background flow.

\subsection{Generalization to a nonzero background flow}

We generalize the model described 
above by considering a bath measure $d \mu_m(\qb, \vb)$  
where, in contrast to the previous section, the initial distribution 
of the velocities depends on the position.  We give two possible approaches
to the generalization.  In the first, which applies only to shear flow,
the velocities are restricted to laminar profiles.  For the second,
which applies to general incompressible flows, 
we change the microscopic dynamics that the bath atoms satisfy in
order to preserve the distribution of atoms.
In both cases, we require that the bath measure and 
dynamics satisfy the following hypotheses:
\begin{gather}
\label{marge_vb}
\tag{H1}
\textrm{The average velocity at 
point $\qb^*$ is equal to $A \qb^*$ in the sense that } \\
\textrm{for all } \qb^* \in \RR^d, \,  
\lim_{\eps \to 0} \frac{\displaystyle \int_{\RR^d} \int_{{\mathcal B}(\qb^*,\eps)} \vb \, \mu_m(d \qb, d \vb) } {\displaystyle \int_{\RR^d} \int_{{\mathcal B}(\qb^*,\eps)}   \, \mu_m(d \qb, d \vb) } 
\textrm{ exists and is equal to }
A \qb^*, \\
\label{invariance}
\tag{H2}
\text{The measure } d \mu_m(\qb,\vb) 
\text{ is invariant under the bath dynamics in the absence of
collisions,}
\end{gather}
where $\mathcal{B}(\qb, \eps)$ denotes the ball of radius $\eps$ about 
$\qb.$
The invariance of the bath measure means that, except for the effects
described in Remark~\ref{rem:notMarkov}, the large particle
experiences collisions with a time-independent rate.  We note that 
satisfying~\eqref{marge_vb} and~\eqref{invariance} is not automatic.  For example, if
the velocities $\vb$ have a Gaussian distribution centered around
$A \qb$ and the atoms follow ballistic trajectories~\eqref{ballistic},
then the system satisfies~\eqref{marge_vb} but not~\eqref{invariance}.
In both generalizations described below, the bath atoms interact with the large
particle only through elastic collisions as in~\eqref{collision2}.

In Section~\ref{sec:lam1}, we describe a laminar flow model for the
specific case where the background flow is a shear.  All bath atoms are
restricted to move in one of the coordinate directions, following
otherwise ballistic trajectories.  The resulting dynamics is not the
same as the nonequilibrium Langevin dynamics~\eqref{ou_flow_intro}
that form the main focus of this work.  
In Section~\ref{meas_dyn} we do not restrict
the velocity directions, and we choose a modified (non-Hamiltonian)
dynamics for the bath atoms which leaves the chosen bath measure
invariant for any traceless $A.$    This
approach leads to the nonequilibrium Langevin dynamics~\eqref{ou_flow_intro} 
as limiting equations for the
large particle in the limit $m \rightarrow 0.$

\subsection{Laminar flow models}
\label{sec:lam1}

For this subsection, 
we restrict ourselves to a heat bath in $\RR^3$ under
shear flow with the specific strain rate
\begin{equation}
\label{shear_mat}
A = \left[ \begin{array}{ccc}
0&s&0\\
0&0&0\\
0&0&0\\
\end{array} \right],
\end{equation}
for some given $s \in \RR.$
We describe two variants of a laminar flow model, which both create
a background shear flow by restricting the atom velocities
to be parallel to the coordinate axes.

\subsubsection{Single laminar flow}
We enforce a steady shear flow for the bath atoms by choosing initial
velocities that are nonzero only in the first coordinate, $\eb_1 =
[1,0,0]^T.$  We choose the initial configuration of the bath atoms according
to a Poisson field with measure
\begin{equation}
\label{fvq_shear}
d \mu_m(\qb,\vb) = \lambda_m Z^{-1} m^{1/2} \exp\left( -\frac{\beta}{2} 
       m(v_1 - s q_2)^2 \right) \delta(v_2) \delta(v_3) \, d \qb \, d \vb ,
\end{equation}
where $Z = \left(\frac{2 \pi}{\beta}\right)^{1/2}$ and $\delta$
denotes the Dirac distribution (see Figure~\ref{fig:laminarbath}
for an example initial condition).  The bath atoms follow ballistic
motion~\eqref{ballistic},
and it is easily verified that the bath satisfies~\eqref{marge_vb} 
and~\eqref{invariance}. 
The bath atoms undergo elastic collisions~\eqref{collision2} with
the large particle.
\begin{figure}
\centerline{\input{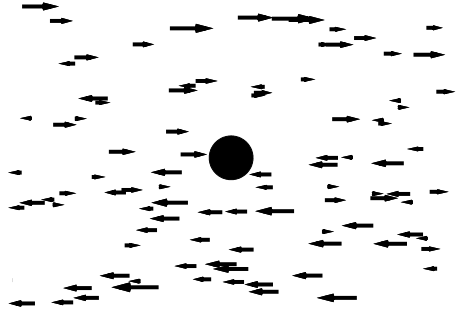}}
\caption{\label{fig:laminarbath}A large particle (of radius $R=2$) surrounded by
a laminar heat bath whose atoms are distributed according
to~\protect\eqref{fvq_shear}.  The velocity is centered around the
constant shear profile $A \qb,$ where $A$ is given
in~\protect\eqref{shear_mat}.  Since the bath atoms may only flow
parallel to the shear, the variance around the mean velocity along
each flow line $q_2=c$ is constant for all time.  We have chosen
$s=0.1$ $\beta=1,$ $m=1,$ and $\lambda=\frac{1}{16}.$ }
\end{figure}

Ignoring the effect of recollisions, we formally compute the limiting 
dynamics as $m\rightarrow 0$ to be
\begin{equation}
\label{ou_shear}
\begin{split} 
d \Qb &= \Vb dt, \\
M d \Vb &= - \gamma (\Vb - A \Qb) dt + \sigma d\Wb,
\end{split}
\end{equation}
where
\begin{equation*}
\gamma =  \frac{2 \sqrt{2\pi} \lambda R^2}{\sqrt{\beta}}
    \left( \begin{array}{ccc} 
            1 &0&0\\0&\frac{1}{2}&0\\0&0&\frac{1}{2}
           \end{array}\right), 
\qquad 
\sigma = \left[
       \frac{4 \sqrt{2\pi} \lambda R^2}{\sqrt{\beta^3}}
    \left( \begin{array}{ccc} 
            \frac{2}{3} &0&0\\0&\frac{1}{6}&0\\0&0&\frac{1}{6}
           \end{array}\right)          \right]^{\frac{1}{2}}.
\end{equation*} 
The details of this computation are given in Appendix~\ref{lam_gen}.
We note that for the laminar flow model considered here we have not
rigorously proven a convergence result and, in particular, we have not
justified ignoring recollisions.  On the other hand, a proof is
carried out for the model presented in Section~\ref{meas_dyn}.

The resulting dissipation and noise terms are both anisotropic, which
is due to the fact that the background laminar flow is itself
anisotropic.  These terms are larger in the direction of the flow, and
the dynamics do not satisfy a standard fluctuation-dissipation
relation in which $\frac{\beta}{2} \gamma^{-1} \sigma \sigma^T = 1,$
since  
\begin{equation}
\label{anisotropic}
\frac{\beta}{2} \gamma^{-1} \sigma \sigma^T = 
    \left( \begin{array}{ccc} 
            \frac{2}{3} &0&0\\0&\frac{1}{3}&0\\0&0&\frac{1}{3}
           \end{array}\right).
\end{equation} 
Furthermore, in the case where the shear flow is identically zero, $A =
0,$ the dynamics does not reduce to the standard Langevin
dynamics~\eqref{lang1} which was derived for the case of zero
background flow.  Since we desire a system of equations that
reduces to Langevin dynamics in the case $A=0$ (in particular because the Langevin dynamics is a widely accepted dynamics to model fluid flows at equilibrium), we consider a 
modification for removing the
anisotropy.

\subsubsection{Multiple laminar flows}
One may attempt to create an isotropic flow by
overlaying multiple background flows.  We consider the
superposition of three distinct laminar flows, each one with all
velocities aligned in a single direction, $\eb_i.$  Then we choose the
initial bath coordinates according to a Poisson field with distribution 
function
\begin{equation}
\label{fvq_multi_shear}
\begin{split}
d \mu_m(\qb,\vb) =& \left[ \frac{1}{3} \lambda_m Z^{-1} m^{1/2} \exp\left( 
 -\frac{\beta}{2} m(v_1 - s q_2)^2 \right) \delta(v_2) \delta(v_3),\right. \\
&+ \frac{1}{3} \lambda_m Z^{-1} m^{1/2} \delta(v_1) \exp\left( 
             -\frac{\beta}{2} m v_2^2 \right) \delta(v_3) \\
&\left. + \frac{1}{3} \lambda_m Z^{-1} m^{1/2} \delta(v_1) \delta(v_2) 
  \exp\left( -\frac{\beta}{2} m v_3^2 \right) \right] d \qb \, d \vb.
\end{split}
\end{equation}
Each atom moves in one of the three coordinate directions, since the
presence of the delta distributions means that at most one of the
velocity coordinates is nonzero.  Since the bath atoms do not have any
self-interaction, the different flows do not `mix' in any way, and in
particular this bath measure is invariant under ballistic motion.
Figure~\ref{fig:multiplebath} depicts a realization of this
bath in the two-dimensional case.  
\begin{figure}
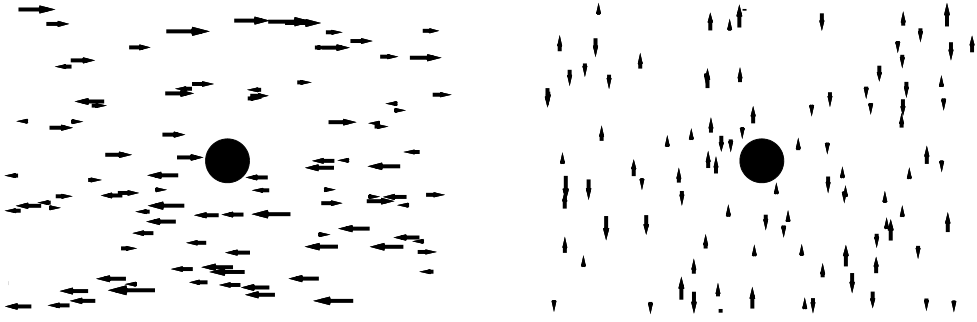

\centerline{\input{figs/bath2}  \input{figs/bath3} }
\caption{\label{fig:multiplebath} Two components of
the laminar bath which are superimposed to create an environment
for the particle consistent with the measure~\protect\eqref{fvq_multi_shear}.
We have chosen $s=0.1,$ $\beta=1,$ $m=1,$ and $\lambda=\frac{1}{16}.$}
\end{figure}

According to the formal calculations carried out in 
Appendix~\ref{lam_gen}, where we again ignore the effect of
recollisions, we find that the limiting dynamics of the large
particle is
\begin{equation} 
\label{ou_shear2}
\begin{split}
d\Qb &= \Vb dt, \\
M d\Vb &= - \gamma \left(\Vb - \frac{1}{2} A \Qb \right) dt 
	 + \sigma d\Wb,
\end{split}
\end{equation} 
where
\begin{equation*}
\gamma =  \frac{4 \sqrt{2\pi} \lambda R^2}{\sqrt{\beta}},
\qquad 
\sigma =  \left[ \frac{4 \sqrt{2\pi} \lambda R^2}{\sqrt{\beta^3}}
\right]^{1/2}.
\end{equation*}
As desired, the dynamics in~\eqref{ou_shear2} is no longer an anisotropic equation in
contrast to~\eqref{ou_shear} since $\gamma$ and $\sigma$ are scalar quantities.  We note that the damping term on the large particle is
relative to {\em half} the velocity of the background shear flow and 
that the fluctuation-dissipation relation is not satisfied, since 
$\frac{\beta}{2} \gamma^{-1} \sigma^2 = \frac{1}{2} \neq 1.$  
While we have chosen the inverse temperature for each bath equal to $\beta$ in equation~\eqref{fvq_multi_shear}, we find in Appendix~\ref{lam_gen} 
that there is no choice of inverse
temperatures for the three laminar bath components that allows us to
construct
an isotropic damping in which the damping term is relative to the 
full background flow.  We note that this issue is not specific 
to the three dimensional case considered here.  

We do not pursue this model further for two reasons.  First and foremost,
it is not clear how to generalize these laminar bath flow models 
from simple shear flows to more general incompressible flows.  Second, a heat bath with
temperature proportional to $\beta^{-1}$ and strain rate $A$ gives a 
macroscopic dynamics whose temperature is proportional to 
$\frac{1}{2} \beta^{-1}$ and whose damping is relative to $\frac{1}{2} A.$ 
We expect damping relative to $A$ as in~\eqref{ou_flow_intro}, which agrees
with NEMD dynamics such as g-SLLOD (see Section~\ref{sec:gsllod} for a description
of the g-SLLOD dynamics and their relation to~\eqref{ou_flow_intro}). 
While it would be possible to choose bath parameters $A_{\rm bath} = 2 A$ 
and $\beta_{\rm bath} = \frac{1}{2} \beta$ to give 
the desired macroscopic parameters, $A$ and $\beta,$ to the large particle, 
we instead consider a different 
approach which both handles general incompressible flows and shows agreement
between the bath parameters and the parameters of the derived dynamics.

\subsection{Non-Hamiltonian bath dynamics}
\label{meas_dyn}

In this section we describe an approach based on modifying the bath
dynamics to no longer follow ballistic trajectories~\eqref{ballistic}.
The particle velocity accelerates so that the
distribution of velocities relative to the background flow is preserved.

\subsubsection{Model and convergence results} 
We consider a bath whose initial condition is characterized by
the measure
\begin{equation}
\label{bath_meas2}
d \mu_m(\qb, \vb) = \lambda_m f_m(\qb, \vb) \, d \qb \, d \vb \qquad
\text{for } \qb,\vb \in \RR^d,
\end{equation}
where $f_m$ has the form
\begin{equation}
\label{fvq}
f_m(\qb,\vb) = m^{d/2} f(m^{1/2}(\vb - A \qb))
\end{equation} 
for $A \in \RR^{d \times d}$ (Figure~\ref{fig:genbath}) and where $\lambda_m = m^{-1/2} \lambda$.  As in
Section~\ref{mechdef}, we assume that $f(\vb)$ is a rotationally
invariant probability distribution, and that the first four moments are finite.   Any
distribution of the form~\eqref{bath_meas2} with~\eqref{fvq}
satisfies~\eqref{marge_vb}.  We additionally assume that $f$  
is a decreasing function of $|\vb|.$  This additional assumption is
used below when approximating the trajectory of the mechanical process
with a Markov process in
Appendix~\ref{me_to_ma}.
\begin{figure}
\centerline{\input{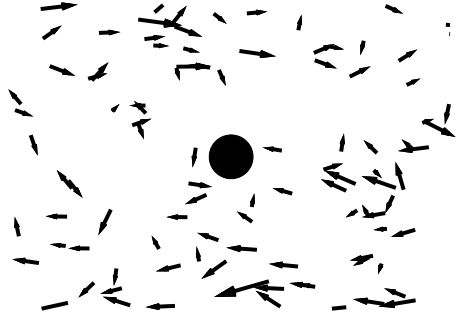}}
\caption{\label{fig:genbath}A large particle (of radius $R=2$) immersed in 
a heat bath whose atoms are distributed according to~\protect \eqref{fvq}. 
The velocities have Gaussian distribution centered around a
shear flow, with $A$ given in~\protect \eqref{shear_mat}.  
We have chosen $s=0.1,$ $\beta=1,$ $m=1,$ and $\lambda=\frac{1}{16}.$}
\end{figure}

In order to satisfy~\eqref{invariance}, we now change the underlying
bath dynamics and consider the following non-Hamiltonian dynamics for
the bath atoms: 
\begin{equation}
\label{nonHam1}
\begin{split}
d \qb &= \vb dt, \\
d \vb &= A \vb dt.\\
\end{split}
\end{equation} 
It is still the case
that bath atoms do not interact with one another, but now
the bath atoms accelerate in a way that leaves the desired velocity
distribution~\eqref{bath_meas2}-\eqref{fvq} invariant.
The corresponding Liouville equation is 
\begin{equation*}
\begin{split} 
\label{gen_arg2}
\partial_t \rho(\qb, \vb,t) 
  &= - \nabla_\qb \cdot (\vb \rho(\qb, \vb,t)) 
- \nabla_\vb \cdot (A \vb \rho(\qb, \vb,t)).
\end{split} 
\end{equation*}
As long as $\tr A = 0,$ which is equivalent to having an
incompressible background flow, any function of the form $\rho(\qb, \vb, t) =
f(\vb-A\qb)$ is invariant under the dynamics.  This invariance
is our motivation
for choosing the dynamics~\eqref{nonHam1}, and in particular, we are not
restricted to shear flows.  We interpret the dynamics by considering
the relative velocity 
\begin{equation} 
\label{relvel}
\vcen = \vb - A \qb.
\end{equation}   
In terms of $(\qb, \vcen),$ the dynamics~\eqref{nonHam1} is
\begin{equation}
\label{nonHam2}
\begin{split}
d \qb &= (A \qb + \vcen) dt, \\
d \vcen &= 0.\\
\end{split}
\end{equation} 
Thus the relative velocity, $\vcen,$ does not change in time and the
velocity of the atoms is at all times equal to the background flow
plus an initial perturbation of thermal origin.  However, the choice of dynamics~\eqref{nonHam1} is motivated by the fact that it
satisfies~\eqref{invariance}, rather than by the above physical
interpretation.

As before, the large particle undergoes elastic collisions with the 
bath atoms according to the rule~\eqref{collision2}. In the limit 
$m\rightarrow 0,$ the particle dynamics
converges to the nonequilibrium Langevin dynamics~\eqref{ou_flow_intro}
which we recall here:
\begin{equation}
\begin{split}
\label{ou_flow}
d \Qb &= \Vb dt, \\
M d \Vb &= - \gamma (\Vb-A \Qb) dt + \sigma d\Wb,
\end{split}
\end{equation}
where, as in the zero background flow case~\eqref{lang_coeff},  
\begin{equation}
\label{neld_coeff}
\gamma = \frac{4 \lambda R^{d-1} S_{d-1}}{d} \Phi_1, 
  \qquad 
\sigma = \left[\frac{4 \lambda R^{d-1} S_{d-1}}{d} \Phi_3
\right]^{\frac{1}{2}},
\end{equation} 
where $S_{d-1}$ denotes the surface area of the $(d-1)$-sphere, and
$\Phi_i$ is defined in~\eqref{phi_i_2}.  As in the case of zero background
flow in Section~\ref{mechdef}, this dynamics satisfies a
standard fluctuation-dissipation relation~\eqref{fdr}, with temperature
equal to the bath's temperature, provided that 
\begin{equation*}
\frac{\Phi_1}{\Phi_3} = \frac{\beta}{2},
\end{equation*}
which holds true for the Gaussian distribution $f(\vb) = Z^{-1} \exp(-\frac{\beta}{2} |\vb|^2),$
for example. 
We establish the following result in the
Appendices~\ref{sec:bath_gen}-\ref{wellposed}.
\begin{theorem}
\label{thm:main}
Consider the mechanical process $(\Qb_m, \Vb_m),$ defined
in~\eqref{particle_dynamics},~\eqref{collision2},
and~\eqref{nonHam1} with bath measure defined in~\eqref{bath_meas2}
and~\eqref{fvq}, and let $(\Qb, \Vb)$ denote the solution to the
nonequilibrium Langevin equations~\eqref{ou_flow} with
coefficients~\eqref{neld_coeff}.  For any $T>0,$ 
\begin{equation*}
(\Qb_m, \Vb_m)_\tint \lconv (\Qb, \Vb)_\tint
\end{equation*}
as $m \rightarrow 0,$ where convergence is with respect to 
the Skorokhod metric~\eqref{skor} on $\D([0,T])$.
\end{theorem}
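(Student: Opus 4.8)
The plan is to follow the three-stage strategy that DGL use for Theorem~\ref{thm:dlg}, treating the background flow as a perturbation that must be controlled over the finite horizon $[0,T]$. The starting observation is that, in the relative-velocity variable $\vcen = \vb - A\qb$ of~\eqref{relvel}, the non-Hamiltonian dynamics~\eqref{nonHam1} reduces to the trivial dynamics~\eqref{nonHam2}, so that the thermal part $\vcen$ of each bath atom is conserved and distributed according to the fixed, flow-independent, rotationally invariant density $f$. Moreover, since $\tr A = 0$, the invariance hypothesis~\eqref{invariance} holds, which is what will make the collision rate time-homogeneous. The only place the flow enters the single-collision physics~\eqref{collision2} is through the velocity of the particle relative to the local bath: at a collision the struck atom lies on the sphere $|\qb - \Qb| = R$, where $\vb = A\qb + \vcen$, so the incoming relative velocity is $\vb - \Vb = \vcen - (\Vb - A\Qb) + O(R)$. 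Rescaling $\vcen = m^{-1/2}\ub$ with $\ub \sim f$, the thermal part is $O(m^{-1/2})$ while $\Vb - A\Qb$ is $O(1)$: this is precisely the DGL scaling with $\Vb$ replaced by the relative velocity $\Vb - A\Qb$, which is what will produce the drift $-\gamma(\Vb - A\Qb)$ while leaving $\sigma$ unchanged.

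With this reduction in hand, the first rigorous step (Appendix~\ref{wellposed}) is to show that the mechanical process is almost surely well defined on $[0,T]$, ruling out simultaneous and infinitely many collisions as in~\cite{durr80}. The second step (Appendix~\ref{me_to_ma}) is to replace the non-Markov process $(\Qb_m,\Vb_m)$ by an auxiliary Markov jump process $(\Qtb_m,\Vtb_m)$ in which the two history effects of Remark~\ref{rem:notMarkov}, recollisions and virtual collisions in the particle's wake, are discarded, so that between jumps the particle moves ballistically and collisions arrive from a freshly sampled bath at a state-dependent rate set by the incoming flux through the sphere $|\qb - \Qb| = R$. One must then show $\sigma_{\rm sk}\big((\Qb_m,\Vb_m),(\Qtb_m,\Vtb_m)\big) \pconv 0$; this rests on the estimate of Appendix~\ref{norecollide} that a typical atom, having speed $O(m^{-1/2})$, leaves a neighborhood of the particle and does not return within $[0,T]$ with probability tending to one.

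The third step is a diffusion approximation for the Markov process. I would compute its generator $\LL_m$: the transport part is $\Vb\cdot\nabla_\Qb$ coming from~\eqref{particle_dynamics}, and the jump part integrates $\varphi(\Qb,\Vb+\Delta\Vb) - \varphi(\Qb,\Vb)$ against the collision-rate measure, where $\Delta\Vb = \tfrac{2m}{M+m}\big((\vb-\Vb)\cdot\eb_{\rm n}\big)\eb_{\rm n}$. Because the jump rate is $O(m^{-1})$ while $\Delta\Vb = O(m^{1/2})$, a Taylor expansion in $m^{1/2}$ produces an $O(1)$ first-order (drift) term and an $O(1)$ second-order (diffusion) term, the higher orders being negligible. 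Carrying out the surface and velocity integrals with the flux weight $|(\vb-\Vb)\cdot\eb_{\rm n}|$ and using the rotational invariance of $f$ and the moments $\Phi_1,\Phi_3$ of~\eqref{phi_i_2}, I expect
\begin{equation*}
\LL_m \varphi \longrightarrow \Vb\cdot\nabla_\Qb\varphi - \frac{\gamma}{M}(\Vb - A\Qb)\cdot\nabla_\Vb\varphi + \frac{\sigma^2}{2M^2}\Delta_\Vb\varphi =: \LL\varphi
\end{equation*}
uniformly on compact sets for $\varphi \in C^\infty_c$, with $\gamma,\sigma$ as in~\eqref{neld_coeff}; here the drift acquires $A\Qb$ from the local bath velocity (the geometric $O(R)$ correction dropping out of the limit by the angular symmetry of $f$), while $\sigma$ is unaffected because the leading $O(m^{1/2})$ part of each jump is purely thermal. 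Together with tightness of the laws of $(\Qtb_m,\Vtb_m)$ in $\D([0,T])$ and well-posedness of the martingale problem for the linear SDE~\eqref{ou_flow} (whose globally Lipschitz coefficients guarantee a unique solution), this gives $(\Qtb_m,\Vtb_m)_\tint \lconv (\Qb,\Vb)_\tint$, and combining with the second step yields the theorem.

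I expect the main obstacle to be the second step in the presence of the flow. In the zero-flow case the no-recollision and wake estimates are geometric: a fast atom moving in a straight line escapes and cannot overtake the slowly drifting particle. Under~\eqref{nonHam1} the bath atoms accelerate along curved trajectories $d\vb = A\vb\,dt$ and the particle's own position may drift by as much as $O(|A\Qb|\,T)$, so both the escape argument and the bound on the number and effect of recollisions must be redone with flow-distorted trajectories, and the resulting error estimates shown to remain uniform in $m$ and to vanish as $m\to 0$. Controlling this flow-induced error growth over $[0,T]$, rather than the single-collision asymptotics, which follow the DGL template closely, is where the real work lies.
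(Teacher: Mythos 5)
Your proposal is correct and follows essentially the same route as the paper: well-posedness of the mechanical process, a Markov jump approximation obtained by discarding history effects together with a coupling argument showing the two processes are close in probability (resting on the no-recollision estimate for fast atoms and control of the flow-induced error growth, which the paper implements via the cutoff $c_m = m^{-1/5}$ and the stopping time~\eqref{stop_time_mech}), and a generator expansion yielding the drift $-\gamma(\Vb - A\Qb)$ and unchanged $\sigma$. The only minor deviation is in the last step, where you invoke tightness plus uniqueness for the martingale problem while the paper instead applies a Kurtz-type semigroup convergence lemma (Lemma~\ref{thm:gen_converge}), which requires uniform convergence of $L_m\psi$ on all of $\RR^{2d}$ for compactly supported test functions and thereby avoids a separate tightness argument.
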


The structure of the proof is the same as in~\cite{durr80}: we use a
Markov approximation to the mechanical process and split the
convergence proof into two steps.  In Appendix~\ref{sec:bath_gen}, we
define the Markov process that approximates the mechanical process and
outline in more detail the convergence proof.  In
Appendix~\ref{ma_to_ou}, we show that the Markov process converges in
law to the solution of the nonequilibrium Langevin
dynamics~\eqref{ou_flow}. In Appendices~\ref{norecollide}
and~\ref{me_to_ma} we show that the difference between the Markov
process and the mechanical process converges in probability to zero.
These two convergence results are enough to deduce that the mechanical
process converges in law to the nonequilibrium Langevin
dynamics~\cite{bill99}.  In Appendix~\ref{wellposed}, we show that the
mechanical process is well-defined for
almost every initial condition up to a positive stopping time.

\subsubsection{Relationship to g-SLLOD}
\label{sec:gsllod}
The g-SLLOD equations of motion~\cite{tuck97,evan07} (also called
p-SLLOD~\cite{edwa06}) are given by 
\begin{equation}
\label{gsllod}
\begin{split}
d \Qb_i &= (A \Qb_i + \Vcen_i) dt, \\
M d \Vcen_i &= (- \nabla_{\Qb_i} E(\Qb) - M A \Vcen_i - M A A \Qb_i) dt. \\
\end{split} 
\end{equation}
Note that these equations are written in terms of relative velocity, 
$\Vcen_i = \Vb_i - A \Qb_i.$
This system of equations is used to simulate a molecular system in a
non-zero background flow.  The particles interact through the potential field
$E(\Qb).$
The g-SLLOD equations are simply a change of variables from the Newton
equations of motion in the reference frame
\begin{equation*}
\begin{split}
d \Qb_i &= \Vb_i \, dt, \\
M d \Vb_i &= - \nabla_{\Qb_i} E(\Qb) \, dt, \\
\end{split} 
\end{equation*}
to the relative velocity.  If we add a
Langevin damping term and noise term to the dynamics
in~\eqref{gsllod}, we have
\begin{equation*}
\begin{split}
d\Qb_i  &= (\Vcen_i + A \Qb_i) dt, \\
M d\Vcen_i &= (- \nabla_{\Qb_i} E(\Qb) - M A \Vcen_i - M A A \Qb_i) dt 
             - \gamma \Vcen_i \, dt + \sigma d\Wb. 
\end{split} 
\end{equation*}
Transforming back to the reference frame, this gives
\begin{equation}
\label{gsllod_T}
\begin{split}
d\Qb_i  &= \Vb_i  dt, \\
M d\Vb_i &= - \nabla_{\Qb_i} E(\Qb) dt - \gamma (\Vb_i - A \Qb_i) dt 
+ \sigma d\Wb, 
\end{split} 
\end{equation}
which is~\eqref{ou_flow}, with the addition of an external
potential $E$.
Thus, our construction is consistent with the application of a Langevin
thermostat to the g-SLLOD equations of motion and thus provides
a {\em{derivation}} of the g-SLLOD equations from a heat bath model.

\subsubsection{Generalizations}
\label{sec:gen}

We have chosen the microscopic dynamics~\eqref{particle_dynamics} for
the large particle and~\eqref{nonHam1} for the bath atoms.   
First, we may make a more general choice of
microscopic dynamics for the large particle between collisions,
\begin{equation}
\label{gen_particle_dynamics}
\begin{split} 
d\Qb_m &= \Vb_m \, dt, \\
M d\Vb_m &= F(\Qb_m, \Vb_m) dt,
\end{split}
\end{equation}
replacing~\eqref{particle_dynamics} with~\eqref{gen_particle_dynamics}
in the definition of the mechanical process.
We assume $F$ is uniformly Lipschitz continuous on $\RR^d \times \RR^d$
and $\nabla_{\Vb} \cdot F(\Qb,\Vb) = 0$ 
which implies that both the mechanical process for finite $m$ 
and the limiting SDE are  
well-posed (the proof given in Appendix~\ref{wellposed} for the $F=0$
case depends on the fact that the flow map of the mechanical process
preserves Lebesgue measure). 
The limiting dynamics of the large particle becomes 
\begin{equation}
\begin{split} 
d\Qb &= \Vb dt, \\
M d\Vb &= F(\Qb, \Vb) dt - \gamma (\Vb-A \Qb) dt + \sigma d\Wb.
\end{split}
\end{equation}
The proof of this result is a straightforward extension of the
analysis given here, where the Lipschitz continuity allows us to
bound the particle trajectory over finite time intervals.

For the specific case of $F(\Qb,\Vb)=M A \Vb$ with $\tr A =0,$ 
which is the same acceleration
term as in the microscopic
dynamics chosen for the bath, we find the stochastic dynamics
\begin{equation}
\label{ou_A_flow}
\begin{split} 
d\Qb &= \Vb dt, \\
M d\Vb &= M A \Vb dt- \gamma (\Vb-A \Qb) dt + \sigma d\Wb.
\end{split}
\end{equation}
This dynamics was obtained in~\cite{mcph01} for the case of shear 
flow by applying the Kac-Zwanzig formalism, thereby obtaining a generalized
Langevin dynamics, and assuming that the memory
kernel converges to a delta function (that is, there is no
 memory in the system).
The dynamics~\eqref{ou_A_flow} is of particular interest since we 
can explicitly check that 
$\rho(\Qb, \Vb) = \exp\left( -\frac{\beta}{2} |\Vb -A \Qb|^2   \right)$
is an invariant measure of~\eqref{ou_A_flow}.  In
contrast we do not know an analytical expression for stationary states
of~\eqref{ou_flow} for
general choice of $A.$

Second, we could consider the case of a non-homogeneous incompressible 
background flow, where
the average velocity at the point $\qb^*$ in~\eqref{marge_vb} is $\ub(\qb^*)$ where 
$\ub$ is a divergence-free vector field.
For any such $\ub,$ the microscopic bath dynamics 
\begin{equation}
\label{nonHam_gen}
\begin{split}
d \qb &= \vb dt, \\
d \vb &= \nabla \ub(\qb) \vb dt,\\
\end{split}
\end{equation} 
preserves any bath measure of the form~\eqref{bath_meas2}-\eqref{fvq}
with  $\vb - A \qb$ replaced by $\vb - \ub(\qb).$
In this case, the dynamics of the large particle in the 
mechanical process converges, in the
limit $m \rightarrow 0,$ to 
\begin{equation}
\begin{split} 
d\Qb &= \Vb dt, \\
M d\Vb &= - \gamma (\Vb- \ub(\Qb)) dt + \sigma d\Wb.
\end{split}
\end{equation}
This extension likewise does not pose any theoretical
difficulties, at least when $\ub$ is Lipschitz continuous.
This
reduces to the case presented above when $\ub(\qb) = A \qb,$
where $\divop \ub(\qb) = 0$ is equivalent to $\tr A = 0.$

Finally, a much more challenging extension is the case of {\em
{multiple}}
large particles.  The restriction to a single large particle is a necessary
assumption for the argument here and in~\cite{durr80,durr83,cald89},
since it allows us to estimate the distribution of velocities of the
bath atoms that collide with the particle.  In particular, fast moving
bath atoms can collide at most once with the large particle, whereas
in the multi-particle case these atoms could bounce between large
particles and possibly collide with them many times.  Extending the
proof of Theorems~\ref{thm:dlg} and~\ref{thm:main} to the case of
multiple particles immersed in a bath is a non-trivial task, as atoms
of any speed can bounce between the large particles, and there is also
a shadowing effect that particles have on one another.
Kotelenez~\cite{kote08} treated the multi-particle case using a 
mean field interaction representing a mesoscale interaction and showed
how the bath can generate close-range forces among the large particles.  
Kusuoka
and Liang treated the multi-particle case as well, using a weaker
bath-particle interaction which prevents atoms from bouncing back and
forth among the large particles~\cite{kusu06}.  We do not pursue a
multi-particle derivation in this work; however, we do perform
numerical tests of a natural extension of the limiting
equations~\eqref{ou_flow} to the multi-particle case in
Section~\ref{sec:num}.

\section{Numerics for the multi-particle case}
\label{sec:num}

We consider the application of the derived equations~\eqref{ou_flow}
to the simulation of a fluid composed of many identical large
particles.  We note that the derivation for Theorem~\ref{thm:main}
only applies to the case of a {\em{single}} particle, with no forcing
except through the bath, whereas we now apply the equation to a system
with {\em{many}}
large particles that interact with one another.  The many-particle
case is the case of interest in applications, and we numerically
investigate the qualitative behavior of the proposed dynamics to test
its suitability for sampling flows with a given mean behavior.  We
consider a system in $\RR^3$ consisting of $N$ large particles whose
position and velocity $(\Qb, \Vb) \in \RR^{6 N}$ evolve according to
the dynamics~\eqref{gsllod_T} which we recall here:
\begin{equation}
\label{neld_many}
\begin{split}
d \Qb_i &= \Vb_i \, dt, \\
M d \Vb_i &= - \nabla_{\Qb_i} E(\Qb) \, dt  - \gamma (\Vb_i-A \Qb_i) dt + \sigma
d\Wb_i.
\end{split}
\end{equation}
The index $i = 1,\dots,N$ runs over all particles, whereas the lack
of index in the argument of $E$ denotes the fact that it is a function
of the full vector
$\Qb \in \RR^{3N}$. 
The Fokker-Planck equation for~\eqref{neld_many} is
\begin{equation}
\label{fp_main}
\begin{split}
\partial_t \rho =& -\nabla_\Qb \cdot (\Vb \rho) 
+ M^{-1} \nabla_\Vb \cdot (\nabla E(\Qb) \rho)  
   + M^{-1} \nabla_\Vb \cdot [\gamma(\Vb - A \Qb)\rho\,]
+ \frac{1}{2}  M^{-2} \sigma^2 \Delta_{\Vb} \rho.
\end{split}  
\end{equation}  
We do not have an analytic expression for solutions of the
time-dependent equation~\eqref{fp_main} or any steady-state solutions.  
Thus, it is useful to carry out numerical experiments
on the response of the multi-particle system to the background forcing
and explore the resulting constitutive relation.

\subsection{Simple Lennard-Jones fluid}

We now run numerical tests on a 3D flow of particles with
Lennard-Jones interactions and background shear flow 
$$A = \left[
\begin{array}{ccc} 0&s&0 \\ 0&0&0 \\ 0&0&0 \end{array} \right].$$
A closely related dynamics, differing in the choice of boundary conditions 
(we detail our choice below) and how the external forcing is handled, is 
considered in~\cite{joub11} where the authors perform
rigorous asymptotic analysis on the invariant measure as well as
numerical viscosity experiments.

We let $\Omega = [-L/2, L/2]^3$ be the computational domain, and we
initially arrange $N$ particles on a uniform cubic lattice with
spacing $a,$ where $L = a N^{1/3}.$  The initial velocities are random, drawn
from the measure $Z^{-1} \exp(-\frac{\beta}{2} |\Vb - A \Qb|^2).$  We
note that over the long times that we simulate, the solution is
insensitive to the choice of initial conditions (we have also tested
with zero initial velocities).  The potential $E$ denotes the
Lennard-Jones potential energy with cutoff
$$
E(\Qb) = \sum_{i=1}^{N} \sum_{\substack{j=i+1}}^{N} \phi(r_{ij}),
$$
where 
\begin{equation*}
\phi(r) = 
\begin{cases}
\displaystyle 4 \eps \left(\frac{1}{r^{12}} - \frac{1}{r^6}
\right) + c_1 r + c_2& \textrm{ if } r < r_{\rm cut},  \\
0 & \textrm{ if } r \geq r_{\rm cut}.\\
\end{cases}
\end{equation*} 
The constants $c_1$ and $c_2$ are chosen so that $\phi(r)$ and
$\phi'(r)$ are continuous at $r_{\rm cut}.$ 
The distance $r_{ij} =|\Qb_i - \Qb_j|$ is computed taking into 
account the boundary conditions which we now make precise.  

We apply the Lees-Edwards
boundary conditions~\cite{alle89,evan07} to the system, which
generalize standard periodic boundary conditions to be
consistent with shear flow.  
At time $t$, the particle 
$(\Qb(t), \Vb(t))$
has periodic replicas 
\begin{equation*}
\begin{split}
(Q_1(t), &Q_2(t), Q_3(t), V_1(t), V_2(t), V_3(t))\\
&= (Q_1(t) + m L + t n s L, Q_2(t) + n L, Q_3(t) + k L, 
                 V_1(t) + n sL, V_2(t), V_3(t))
\end{split}
\end{equation*} 
for all $m,n,k \in \ZZ.$  This ensures that the periodic replicas of the
system move consistently with the shear.

\begin{figure}
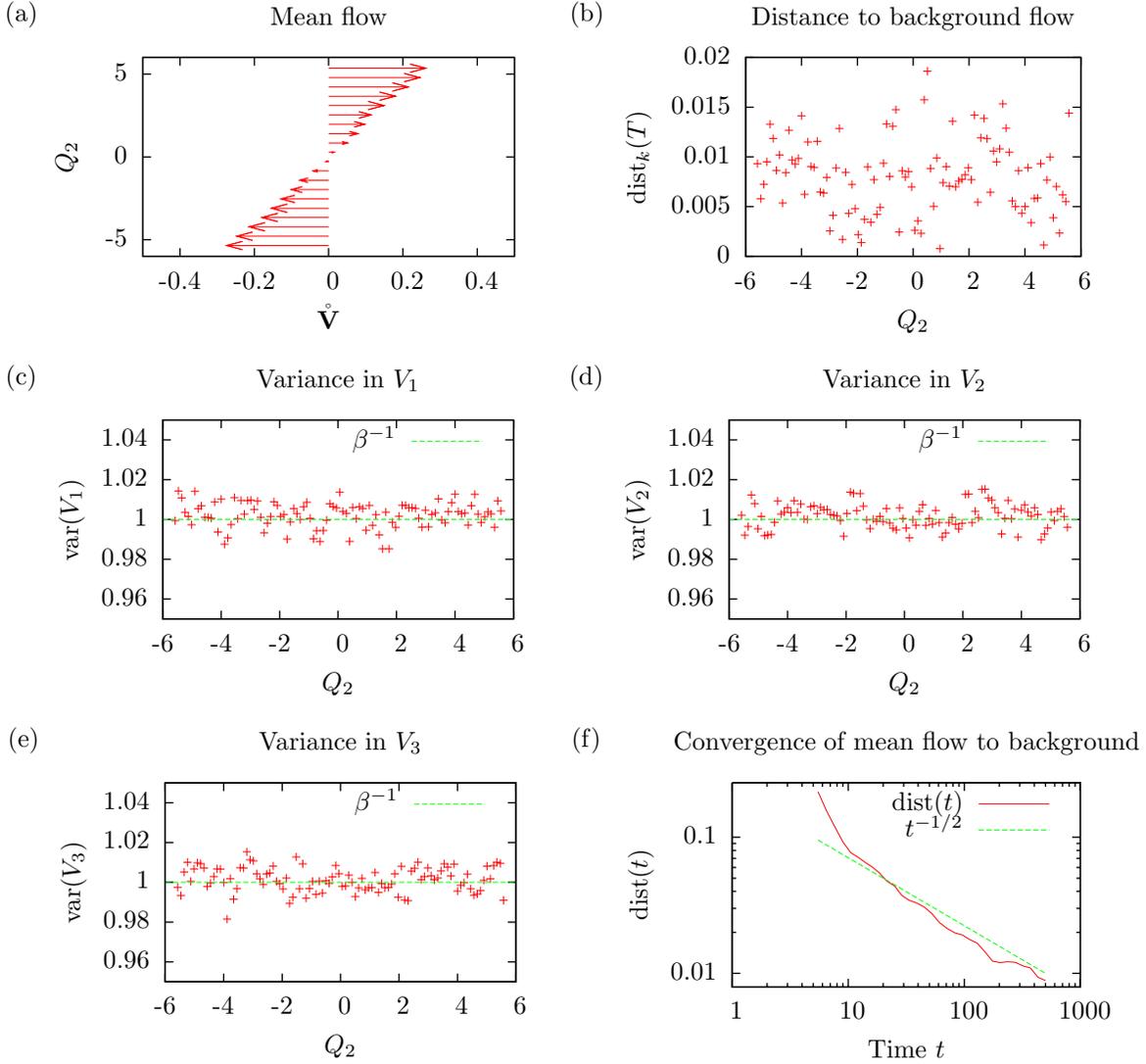

\centerline{ 
(a) {\raisebox{-1.7in}{\input{figs/vel_1_shear.tex}}}
(b) {\raisebox{-1.7in}{\input{figs/vel_1_error.tex}}} } 
\centerline{ 
(c) {\raisebox{-1.7in}{\input{figs/var_1_x.tex}}}
(d) {\raisebox{-1.7in}{\input{figs/var_1_y.tex}}} }
\centerline{
(e) {\raisebox{-1.7in}{\input{figs/var_1_z.tex}}} 
(f) {\raisebox{-1.7in}{\input{figs/vel_1_evt.tex}}}  } 
\centerline{\parbox{16cm}{\caption{\label{fig:vel}
(a) The mean velocity~\protect \eqref{mv_avg} of the particles within
horizontal slices~\protect \eqref{slice} for a shear flow with strain
rate $s=0.05$, averaged in $t$.  We plot 20 of the 100 bins used in
creating the statistics.  
(b) The distance~\protect \eqref{rel_err} between the computed mean velocity and
the background flow  is plotted versus $Q_2$ at the final time
$t_\ell= T=500.$ 
(c), (d), and (e) The variance of particle velocity around the mean is
plotted versus $Q_2,$ where we again average in $Q_1,Q_3,$ and $t.$  This is well centered around $\beta^{-1}.$  
(f) The distance~\protect \eqref{mv_err} of the computed mean velocity to the
background flow is plotted
versus time in a $\log$-$\log$ axis.  We observe the expected
O($t^{-1/2}$) decay of the distance.  
}}}
\end{figure}
We first perform a single run with the parameters
$\eps = 1, N = 1000, \beta = 1.0, s=0.05, M=1,
a={0.7}^{-1/3}, r_{\rm cut}
= 2.6, \gamma = 1, \sigma = \sqrt{2 \gamma \beta^{-1}}.$  For this
choice of parameters, the Lennard-Jones particles are in a
fluid regime (see e.g.~\cite{rowl97}).  We run the simulation up to time
$T = 500,$ with a stepsize of $\Delta t = 0.005$ using the following
splitting algorithm.  We let $(\Qb^n, \Vb^n)$ denote the approximate
particle position and velocity at time $t= n \Delta t.$  We let
$\alpha = e^{-\frac{\gamma}{M} \Delta t},$ and for $i = 1,\dots, N$ we define
$(\Qb_i^{n+1}, \Vb_i^{n+1})$
by
\begin{equation*}
\begin{split} 
\Vb_i^{n+1/2} &= \Vb_i^{n} - \frac{\Delta t}{2} M^{-1} 
                             \nabla_{\Qb_i} E (\Qb^n)\\
\Qb_i^{n+1} &= \Qb_i^n + \Delta t \Vb_i^{n+1/2} \\
\Vb_i^{*} &= \Vb_i^{n+1/2} - \frac{\Delta t}{2} M^{-1} 
                             \nabla_{\Qb_i} E (\Qb^{n+1})\\
\Vb_i^{n+1} &= \alpha \Vb_i^{*} + (1-\alpha) A \Qb_i^{n+1} +
       \left(\frac{1-\alpha^2}{\beta M} \right)^{1/2} \Gb_i^{n} \\
\end{split} 
\end{equation*}
where $\Gb_i^n \sim \mathcal{N}(0,I)$ where $I$ denotes the $d \times d$
identity matrix.
This is a composition of a Verlet step applied to the Hamiltonian
portion
\begin{equation}
\begin{split}
d \Qb_i &= \Vb_i dt, \\
M d \Vb_i &= - \nabla_{\Qb_i} E(\Qb) \, dt
\end{split}
\end{equation}
with an exact integration of the remaining terms which
represent the effect of the heat bath on the large particles,
\begin{equation}
\begin{split}
d \Qb_i &= 0, \\
M d \Vb_i &= - \gamma (\Vb_i-A \Qb_i) dt + \sigma d\Wb_i .
\end{split}
\end{equation}
 
\subsection{Mean flow and stress}
In Figure~\ref{fig:vel} we display the results of our numerical 
simulation of shear flow.  We partition the domain into $K=100$
slices, 
\begin{equation}
\label{slice}
R_k=[-L/2,L/2] \times \left[-L/2+\frac{k-1}{K} L,-L/2+\frac{k}{K}
L\right] 
\times [-L/2,L/2] \end{equation}
for $k=1,\dots,K.$
At any time $t_\ell = \ell \Delta t,$ we define a time average of the mean velocity in each
slice $k$ by 
\begin{equation}
\label{mv_avg}
\Vavgb(t_\ell, k) = 
\frac{\sum_{n=0}^{\ell} \sum_{i=1}^{N} \Vb^n_i \one_{R_k}(\Qb_i^n)}
     {\sum_{n=0}^{\ell} \sum_{i=1}^{N} \one_{R_k}(\Qb_i^n)}
\end{equation}
where $\Vb^n_i$ denotes the velocity of particle $i$
at time $n \Delta t$ and where $\one_{R_k}$ denotes the indicator
function of the slice $R_k.$  Since the flow is uniform in all but the
$\eb_2$-direction, averaging on slices  
helps improve the convergence of the flow statistics.  The time average
over the full simulation of the mean
velocity is plotted in Figure~\ref{fig:vel}(a), and we observe a linear
profile equal to the applied background flow.

We define the distance of the mean flow in each slice to the
background flow,
\begin{equation}
\label{rel_err}
\begin{split}
\mathrm{dist}(t_\ell,k) 
= \left((\Vavg_{1}(t_\ell,k) - s Q_{2}(k))^2+
\Vavg_{2}^2(t_\ell,k) + \Vavg_{3}^2(t_\ell,k)
\right)^{1/2},
\end{split}
\end{equation}
where $Q_2(k) = -L/2 + \frac{k-1/2}{K} L$ is the $y$-coordinate of the
center of rectangle $R_k.$ We plot the distance as a function of $Q_2$
in Figure~\ref{fig:vel}(b) and we observe that it is uniform throughout
the domain.  The variance of the velocity is computed for each slice
and is displayed in Figure~\ref{fig:vel}(c), (d), and (e).  The
variance closely matches the expected variance due to the background
temperature, $\frac{1}{k_B \beta} = 1.$ In particular, it is
statistically the same in each direction so that we observe a scalar
temperature in contrast to the laminar flow case
of~\eqref{anisotropic}.  In Figure~\ref{fig:vel}(f), we plot  the
distance of the mean flow $\Vavgb(t_\ell,k)$ to the background flow
versus time, 
\begin{equation}
\label{mv_err}
\mathrm{dist}(t_\ell) = \left(\frac{1}{K} 
\sum_{k=1}^{K} \left(\Vavg_{k,1}(t_\ell) - s Q_{k,2}\right)^2 +
\Vavg_{k,2}^2+ \Vavg_{k,3}^2
\right)^{1/2}.
\end{equation}
We observe an O($t^{-1/2}$) decay in the distance, which is the
convergence rate to the mean expected for Monte Carlo empirical
averages.

\begin{figure}
\centerline{ \input{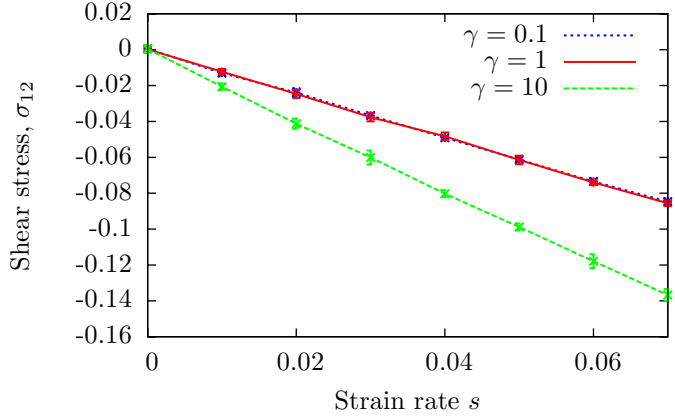}}
\caption{\label{fig:stress}Shear stress $\sigma_{12}$ 
vs velocity gradient.  We observe that the viscosity increases with the heat bath
parameter $\gamma,$ though for the two smaller values of $\gamma,$ the viscosity is 
extremely close.  For the values $\gamma=0.1,$ $\gamma=1.0,$ and
$\gamma=10,$ we find viscosities $\eta= 1.2175,$ $\eta = 1.2254,$ and
$\eta = 1.9518$, respectively.}
\end{figure}
In Figure~\ref{fig:stress}, we display the $\sigma_{12}$ term of
the 
shear stress, computed using the Irving-Kirkwood
formula~\cite{irvi50} which we average over all time steps 
\begin{equation}
\label{irv_kirk}
\boldsymbol{\sigma} = \frac{1}{|\Omega|} \sum_{i =1}^{N} 
\left(M (\Vb_i^{n}- A \Qb_i^n) \otimes
(\Vb_i^{n}- A \Qb_i^n) + \frac{1}{2} \sum_{\substack{j = 1\\j \neq i}}^{N} ( \Qb_i - \Qb_j)
\otimes f^{(i j)}\right)
\end{equation}
where 
\begin{equation*}
f^{(i j)} = - \frac{\phi'(|\Qb_i - \Qb_j|) (\Qb_i - \Qb_j)}{|\Qb_i - \Qb_j|}
\end{equation*} 
denotes the force on atom $i$ due to atom $j$ and $|\Omega| = L^3$ is
the volume of $\Omega.$  We note that it is standard to subtract the
local average velocity from the pressure term~\cite{irvi50}: hence we
have subtracted the background flow at each step.  We run ten 
independent simulations 
for each choice of the strain rate
$s=0, 0.01, 0.02, \dots, 0.07$ and the parameter $\gamma =
0.1, 1.0, 10.$ 
The simulations are all run to time
$T=500,$ with $N,$
$\beta,$ $M,$ $a,$ $r_{\rm cut},$ and $\sigma$ chosen as above.   We
plot the stress along with error bars equal to 3 times the standard
deviation.  Time averages of the shear stress $\sigma_{12}$ are plotted
against the strain rate $s.$  We notice
a linear relation, and define the shear viscosity $\eta =
-\frac{\sigma_{12}}{s}.$  Fitting the data, we find viscosity (with
standard deviation) $\eta=1.2175 \pm 0.0085$ for $\gamma=0.1$ whose
confidence interval overlaps with the value $\eta = 1.142 \pm 0.087$
reported in~\cite[Table IV]{rowl97}.  
The algorithm successfully simulates a system out of equilibrium, and
the computed viscosity is consistent with previous computations for
Lennard-Jones fluids.

\appendix

\section{Convergence in the small mass limit}
\label{sec:bath_gen}

In this section, we outline the proof of the convergence of the
mechanical process $(\Qb_m, \Vb_m)$ to the solution of the
nonequilibrium Langevin dynamics~\eqref{ou_flow}, as stated in
Theorem~\ref{thm:main}.  We construct a Markov approximation
$(\Qtb_m,\Vtb_m)$ that only counts the `fast collisions' of the
mechanical process (which we define below).  
The Markov approximation acts as an intermediate process between
the
mechanical process and the nonequilibrium Langevin dynamics.
We prove in Appendix~\ref{ma_to_ou} that the Markov approximation
$(\Qtb_m,\Vtb_m)$ converges to the solution of the nonequilibrium
Langevin dynamics~\eqref{ou_flow} in the sense that
\begin{equation}
\label{lconv}
(\Qtb_m,\Vtb_m)_\tint \lconv (\Qb, \Vb)_\tint  \textrm{ as  } m \rightarrow 0.
\end{equation} 
In Appendix~\ref{me_to_ma} we show that the mechanical process and
Markov approximation are close to each other, in the sense that
\begin{equation}
\label{pconv}
(\Qtb_m, \Vtb_m)_\tint  - (\Qb_m, \Vb_m)_\tint  \pconv 0 
	  \textrm{ as } m \rightarrow 0. 
\end{equation}  
This is shown by coupling the Markov process to the mechanical
process, that is for each realization of the mechanical process we  
associate a (random) set of realizations of the Markov process
that are (with high probability) close to the realization of the
mechanical process.  By a standard result in probability
theory~\cite[Theorem 3.1]{bill99}, properties~\eqref{lconv}
and~\eqref{pconv} 
allow us to conclude Theorem~\ref{thm:main}, that is, that 
\begin{equation}
(\Qb_m, \Vb_m)_\tint \lconv (\Qb, \Vb)_\tint 
\textrm{ as } m \rightarrow 0. 
\end{equation}

The proof here is structured as in~\cite{durr80}, with changes made to
handle the fact that the distribution of the initial bath atom velocities 
depend on space.
We work with the pairing $(\Qb_m, \Vb_m),$ since in our case, we
cannot create a Markov approximation for the velocity alone.  
We give a self-contained proof, while also pointing out where
it differs from the original argument in~\cite{durr80}.

\subsection{Rate of fast collisions}

As noted in Remark~\ref{rem:notMarkov}, the mechanical process is not
a Markov process itself since a bath atom may collide more than once
with the large particle or certain collisions may be impossible due to
past motion of the particle.  However, the expected value of the
relative speed of a bath atom is  
$O(m^{-1/2}),$ which is large since $m$ is assumed small, whereas the
expected value of the relative speed of the large particle is much
smaller, $O(1)$.
Most collisions happen between a fast-moving bath atom and a 
slower-moving large particle, and we show below that
for such collisions there is no chance of recollision between the
large particle and the bath atom involved.  This motivates the
following definition of ``fast collisions'' and the introduction below of a
stopping time on the trajectory of the mechanical process that controls
the large particle's velocity and position.

\begin{definition}
\label{fast}
We call a collision a {\emph{fast}} collision if the
normal component of the relative velocity~\eqref{relvel} of the bath
atom before the collision occurs satisfies 
\begin{equation*}
\vce_{\rm n} > c_m := m^{-1/5}
\end{equation*}  
where $\vce_{\rm n} = \vcen \cdot \eb_{\rm n}.$
Every other collision is called a {\emph{slow}} collision.
\end{definition}
The particular scaling of $c_m$ is chosen so
that $c_m\rightarrow \infty$ and  $c_m^2 m^{1/2} \rightarrow 0$ as $m
\rightarrow 0.$  We use the second limit in Appendix~\ref{sec:slow} to
bound the total effect of slow collisions.

\begin{definition}
Let $T>0$ be given.  For a given realization of the mechanical process
$(\Qb_m, \Vb_m),$ we define the stopping time 
\begin{equation}
\label{stop_time_mech}
\tau_m = \min\left( \inf_{t \geq 0} \left\{  t : \|A\| \ |\Qb_m(t)| + |\Vb_m(t)|  
\geq c_m/8 \right\}, T\right).
\end{equation}
\end{definition}
In the following, we assume that $m$ is sufficiently small so that the
initial condition satisfies $\|A\| \ |\Qb(0)| + |\Vb(0)|  
< c_m/8,$ where we recall that the initial condition is independent of $m.$

In Appendix~\ref{wellposed} we show that almost surely $\tau_m > 0$
and that the mechanical process is well-posed up to $\tau_m.$  In
particular we show that almost surely, before time $\tau_m$ there are
only finitely many collisions and none of these collisions involve
more than one bath atom at the same time. Intuitively, this means that
the fast collisions have no memory of the large particle's trajectory,
and furthermore it means that the rate of fast collisions is governed
only by the initial bath configuration~\eqref{bath_meas2}.  In
Appendix~\ref{norecollide}, we show that if a bath atom experiences a
fast collision with the mechanical process's large particle sometime
in the interval $[0, \tau_m),$ then this is the only collision that
the bath atom undergoes in $[0, \tau_m).$ From the coupling that we
construct in Appendix~\ref{me_to_ma} between the mechanical process
and the Markov approximation and a consideration of the expected size of the
Markov approximation (\eqref{gm} and~\eqref{Gmp1}), we can show that
$\lim_{m \rightarrow 0} \PP(\tau_m = T) = 1,$ so that for the sake of
our convergence proof, on our time interval of interest, $[0,T],$
fast collisions do not introduce memory terms. We thus define below a
rate of collisions that the large particle experiences whenever the
bath has configuration distributed according to~\eqref{bath_meas2}.

\begin{lemma}
Suppose that a large particle is at {\em{all}} times surrounded by a bath of
atoms
distributed according to~\eqref{bath_meas2}.  
Then the instantaneous probability of a collision between the particle
at position $\Qb$ and velocity $\Vb$ and a bath atom with velocity
in the ball $B(\vb; d \vb)$ on the surface element 
$R^{d-1} d\Omega$ centered around $\qb = \Qb - R \eb_{\rm n}$ in the time interval $[t,t+dt]$ is given by
\begin{equation}
\label{bg_measure1}
\begin{split}
r_m(\vb, \eb_{\rm n}, \Qb, \Vb) \ d\vb \ d \Omega \ dt 
= &\lambda_m R^{d-1} \ \max(v_{\rm n} - V_{\rm n}, 0)  
f_m(\Qb-R \eb_{\rm n},\vb) \, d\vb \,
d\Omega \, dt
\end{split}
\end{equation}
where $R$ is the radius of the large particle, $\lambda_m = m^{-1/2}
\lambda$ is the expected number of bath atoms per unit volume, $v_{\rm n}
= \vb \cdot \eb_{\rm n}$ is the normal velocity of the incoming bath
atom, and $V_{\rm n} = \Vb \cdot \eb_{\rm n}$ is the normal velocity
of the large particle.
\end{lemma}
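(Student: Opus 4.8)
The plan is to treat this as a standard kinematic flux computation for a Poisson field, carried out in the instantaneous reference frame of the large particle. First I would fix the geometry: the collision surface is the sphere of radius $R$ centered at $\Qb$, and I parametrize a surface point by the unit normal $\eb_{\rm n}$ via $\qb = \Qb - R\eb_{\rm n}$, so that $\eb_{\rm n}$ points from the atom toward the particle center exactly as in the collision rule~\eqref{collision2}, and the surface element is $R^{d-1}\,d\Omega$ with $d\Omega$ the solid-angle element on $\SSph^{d-1}$. Because a collision takes place in the moving frame of the particle, the relevant quantity is the relative velocity $\vb - \Vb$, whose normal component is $v_{\rm n} - V_{\rm n}$ with $v_{\rm n} = \vb\cdot\eb_{\rm n}$ and $V_{\rm n} = \Vb\cdot\eb_{\rm n}$. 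An atom crosses the surface patch inward during $[t,t+dt]$ precisely when the relative normal velocity is positive, which is the origin of the factor $\max(v_{\rm n} - V_{\rm n}, 0)$.

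Next I would carry out the swept-volume argument. The atoms that strike the patch $R^{d-1}\,d\Omega$ with velocity in $B(\vb; d\vb)$ during $[t, t+dt]$ are exactly those occupying the infinitesimal cylinder whose base is that patch and whose height is the relative normal displacement $(v_{\rm n} - V_{\rm n})\,dt$; its volume is $R^{d-1}\max(v_{\rm n}-V_{\rm n},0)\,d\Omega\,dt$. Here I would note that the non-Hamiltonian acceleration $d\vb = A\vb\,dt$ of the bath atoms, as well as any acceleration of the particle, alter $\vb$, $\Vb$, and the cylinder only at order $dt^2$, so they do not affect the leading-order flux.

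I would then invoke the Poisson structure. Since the bath is a Poisson field with intensity $\lambda_m f_m(\qb,\vb)\,d\qb\,d\vb$, and the hypothesis that the particle is \emph{at all times} surrounded by this distribution removes any depletion or history dependence, the expected number of atoms in the cylinder with velocity in $d\vb$ is the intensity integrated over the cylinder. Evaluating $f_m$ at the base point $\qb = \Qb - R\eb_{\rm n}$ introduces only a higher-order error, since $f_m$ varies by $O(dt)$ across the cylinder, which yields
\[
\lambda_m f_m(\Qb - R\eb_{\rm n}, \vb)\,R^{d-1}\max(v_{\rm n}-V_{\rm n},0)\,d\Omega\,dt\,d\vb .
\]
For a Poisson field the probability of exactly one such atom agrees with this expected number to first order in $dt$ (two or more occupancies contribute $O(dt^2)$), which is precisely~\eqref{bg_measure1}.

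I expect the only genuinely delicate point to be conceptual rather than computational: justifying that this idealized rate, computed as if the bath were permanently in its reference state~\eqref{bath_meas2}, is the correct rate to attach to the true mechanical process, in which atoms are removed upon collision and the particle carves out a wake behind itself. That reconciliation is exactly where the invariance hypothesis~\eqref{invariance} and the fast/slow dichotomy of Definition~\ref{fast} do their work; the later appendices must show that, up to the stopping time $\tau_m$, the deviation of the true collision rate from~\eqref{bg_measure1} is due only to slow collisions and recollisions and vanishes in probability as $m\to 0$. Within the statement of this lemma itself, however, the permanent-distribution hypothesis is assumed, so the flux computation above is the complete content of the proof.
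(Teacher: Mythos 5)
Your proposal is correct and follows essentially the same route as the paper: the paper's proof is exactly this swept-volume (flux) argument, identifying the differential volume element of size $R^{d-1}\,d\Omega\,(v_{\rm n}-V_{\rm n})\,dt$ with base on the particle surface and using that the velocities are constant to leading order over $[t,t+dt]$. Your extra bookkeeping (the $O(dt^2)$ effect of the accelerations, evaluating $f_m$ at the base point, and the Poisson single-occupancy estimate) only spells out details the paper leaves implicit.
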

\begin{remark}
The assumption that the large particle always sees the same
environment is too strong, and indeed does not hold for the large particle
in the mechanical process.  However, as we show in
Appendix~\ref{norecollide}, for bath atoms that undergo fast
collisions with the large particle, their pre-collision distribution
does satisfy~\eqref{bath_meas2}.  Thus, the rate~\eqref{bg_measure1}
is correct for fast collisions of the mechanical process. 
\end{remark}
\begin{proof}
In Figure~\ref{fig:fast} we sketch the differential volume element of
size $R^{d-1} d \Omega (v_{\rm n} - V_{\rm n}) \, dt$ with base on the
surface of the particle and height determined by the velocity of the
incoming bath atom.  The velocity of the bath atom and the large particle
are constant on the time interval $dt$, so this surface gives the
infinitesimal rate $r_m$ of fast collisions.
\end{proof}

\begin{figure}
\centerline{\input{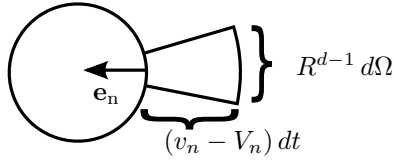}}
\caption{\label{fig:fast}The differential area carved out by fast
atoms that collide with the large particle in the time interval
$[t, t+dt]$, see~\protect\eqref{bg_measure1}. }
\end{figure}

\subsection{Rate of jumps and the Markov approximation}
We now use the rate~\eqref{bg_measure1} to define a rate for a Markov
approximation to the mechanical process, where we only count the
effect of fast collisions.  We change variables to be in terms of 
jumps in velocity for the large particle rather than velocities 
of bath atoms.  We first integrate out
tangential directions $\vb_t$ from $r_m,$
and define the marginal
\begin{equation}
\label{marge}
f^1(x) = \int_{\vb_t} f(x \eb_{\rm n} + \vb_t) \, d\vb_t,
\end{equation}
where we recall that $f$ was assumed to be rotationally invariant,
so that the left hand side does not depend on $\eb_{\rm n}.$ 
We define also the scaled marginal,
\begin{equation}
\label{fv1} 
f^1_m(x) =  m^{1/2} f^1(m^{1/2} x).
\end{equation} 
Rewriting the collision rule~\eqref{collision2}, we have
that the change in normal velocity of the large particle in the
$\eb_{\rm n}$-direction, $\Vh = V'_{\rm n} - V_{\rm n},$ due to a bath
atom with normal velocity $\vb_{\rm n}$ satisfies
\begin{equation}
\label{Vt2}
\Vh = \frac{2m}{M+m} (v_{\rm n} - V_{\rm n}).
\end{equation}
We integrate out the tangential
direction $\vb_t$ from the rate 
$r_m$, make the change of variables from $v_{\rm n}$ to $\Vh$
in~\eqref{Vt2}, and introduce a Heaviside function to restrict to
jumps caused by fast collisions.  The rate on jumps is
\begin{equation}
\begin{split} 
\label{bg_measure2}
\hat{r}_m(\Vh, \eb_{\rm n}, \Qb, \Vb)  
=& \lambda_m R^{d-1} \left(\frac{M+m}{2m}\right)^2 
H\left(\frac{M+m}{2m}\Vh + V_{\rm n}-(A\qb)_{\rm n} - c_m\right)\\
&  \times \max(\Vh,0) f^1_m\left(\frac{M+m}{2m}\Vh + V_{\rm n} - (A \qb)_{\rm n} \right),
\end{split} 
\end{equation}  
where  
\begin{equation*} 
\qb = \Qb - R \eb_{\rm n}
\end{equation*} 
represents the position on the large particle surface where the bath
atom collides.  Due to the Heaviside function, the minimum jump size
$\Vh$ with nonzero probability is
\begin{equation}
\label{vhm}
\Vhm = \frac{2 m}{M+m} \max\left(\left( c_m - V_{\rm n} + (A\qb)_{\rm
n} \right),0 \right),
\end{equation}
where we have kept implicit the dependence of $\Vhm$ on $\Qb$ and
$\Vb.$
We let 
\begin{equation}
\label{intensity}
\Lambda_m(\Qb,\Vb) = \int_{\SSph^{d-1}} \int_{\Vhm}^\infty 
\rh_m(\Vh, \eb_{\rm n}, \Qb, \Vb) \, d \Vh \, d \Omega
\end{equation} 
denote the jump intensity, and a short calculation shows that 
$\Lambda_m(\Qb, \Vb)$ is independent of $(\Qb, \Vb)$
whenever
\begin{equation*}
||A|| \ |\Qb| + |\Vb| \leq c_m/8
\end{equation*}
 holds, due to the Heaviside function in $\rh_m.$
\begin{definition}
\label{def:mark}
We define the Markov process $(\Qtb_m, \Vtb_m)$
starting from initial condition 
$(\Qtb_m(0), \Vtb_m(0))=(\Qb_m(0), \Vb_m(0))$ to be 
a Poisson jump process where the velocity experiences jumps
$\Vh \eb_{\rm n}$ with intensity 
$\Lambda_m(0,0)$ and with probability density function 
\begin{equation*}
\Lambda_m(0, 0)^{-1} \rh_m(\Vh, \eb_{\rm n}, \Qtb_m,\Vtb_m),
\end{equation*} 
and between jumps, $(\Qtb, \Vtb)$ evolve according
to~\eqref{particle_dynamics}.
\end{definition}

\section{Convergence of the Markov approximation to the SDE}
\label{ma_to_ou}

To show Theorem~\ref{thm:main}, we first show that the Markov
approximation $(\Qtb_m, \Vtb_m),$ with rate given
by~\eqref{bg_measure2}, converges in law to~\eqref{ou_flow}.  We write
the generator for the Markov approximation and show that it converges
as $m \rightarrow 0$ to the generator for~\eqref{ou_flow}.  These
calculations are carried out explicitly below, allowing us to get the
coefficients for the SDE in terms of the parameters of the heat bath
and the large particle.  The setup and calculations follow as
in~\cite{durr80}.  Because of the $\Qb$-dependence of the process, we
work with test functions $\psi(\Qb,\Vb)$ that are functions of the
position and velocity.  The inhomogeneity of the velocity field does
not change the coefficients $\gamma$ and $\sigma$ compared to the zero
background flow case~\eqref{lang_coeff}.
\begin{lemma}
\label{thm:ma_to_ou} 
Let $T>0$ be given.  The Markov process $(\Qtb_m, \Vtb_m)$ satisfies
\begin{equation*} 
(\Qtb_m, \Vtb_m)_\tint \lconv (\Qb, \Vb)_\tint 
\quad \text{ as } m\rightarrow 0
\end{equation*}
on $[0,T],$ where $(\Qb, \Vb)$ satisfies the nonequilibrium Langevin
dynamics~\eqref{ou_flow}.
\end{lemma}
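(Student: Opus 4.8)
\emph{Proof strategy (plan).} I would prove Lemma~\ref{thm:ma_to_ou} by generator convergence together with well-posedness of the limiting martingale problem, following the structure of~\cite{durr80}. The Markov approximation $(\Qtb_m,\Vtb_m)$ is a piecewise-deterministic jump process that flows according to~\eqref{particle_dynamics} between jumps, so its generator on a test function $\psi\in C_c^\infty(\RR^d\times\RR^d)$ is
\begin{equation*}
\LL_m\psi(\Qb,\Vb) = \Vb\cdot\nabla_\Qb\psi(\Qb,\Vb) + \int_{\SSph^{d-1}}\int_{\Vhm}^{\infty}\left[\psi(\Qb,\Vb+\Vh\eb_{\rm n})-\psi(\Qb,\Vb)\right]\rh_m(\Vh,\eb_{\rm n},\Qb,\Vb)\,d\Vh\,d\Omega,
\end{equation*}
where on the region $\|A\|\,|\Qb|+|\Vb|\leq c_m/8$ the total intensity $\Lambda_m(\Qb,\Vb)$ is constant, as noted after~\eqref{intensity}. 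The target is the generator of~\eqref{ou_flow},
\begin{equation*}
\LL\psi(\Qb,\Vb) = \Vb\cdot\nabla_\Qb\psi - \frac{\gamma}{M}(\Vb-A\Qb)\cdot\nabla_\Vb\psi + \frac{\sigma^2}{2M^2}\Delta_{\Vb}\psi,
\end{equation*}
with $\gamma,\sigma$ given by~\eqref{neld_coeff}. The first task is to show $\LL_m\psi\to\LL\psi$ uniformly on compact sets (valid since $c_m\to\infty$ forces any fixed compact into the region above for $m$ small), and then to upgrade this to convergence in law on $\D([0,T])$.

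The computational core is a Taylor expansion of the integrand,
\begin{equation*}
\psi(\Qb,\Vb+\Vh\eb_{\rm n})-\psi(\Qb,\Vb) = \Vh\,\eb_{\rm n}\cdot\nabla_\Vb\psi + \tfrac12\Vh^2\,(\eb_{\rm n}\cdot\nabla_\Vb)^2\psi + O(\Vh^3),
\end{equation*}
which splits $\LL_m$ into the transport term, a first-moment (drift) term, a second-moment (diffusion) term, and a remainder. For each direction $\eb_{\rm n}$ I would change variables from $\Vh$ to the bath atom's normal velocity via~\eqref{Vt2} and rescale by $y=m^{1/2}\vce_{\rm n}$, where $\vce_{\rm n}=\vb\cdot\eb_{\rm n}-(A\qb)_{\rm n}$ is the normal relative velocity~\eqref{relvel}; the fast-collision cutoff $\vce_{\rm n}>c_m$ becomes $y>m^{1/2}c_m=m^{3/10}\to 0$, so the lower limit vanishes in the limit. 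Writing $W_{\rm n}=(\Vb-A\qb)\cdot\eb_{\rm n}$, the inner integrals collapse to one-dimensional moments of the marginal~\eqref{marge}, namely $\Phi_1$ and $\Phi_3$ from~\eqref{phi_i_2}. The leading symmetric part of the first moment is independent of $\eb_{\rm n}$ and is annihilated by $\int_{\SSph^{d-1}}\eb_{\rm n}\,d\Omega=0$; the surviving $O(1)$ part carries $\int_{\SSph^{d-1}}\eb_{\rm n}\otimes\eb_{\rm n}\,d\Omega=(S_{d-1}/d)I$ and reproduces $-\tfrac{\gamma}{M}(\Vb-A\qb)\cdot\nabla_\Vb\psi$. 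The replacement $\qb=\Qb-R\eb_{\rm n}$ contributes a correction proportional to the odd third moment $\int_{\SSph^{d-1}}\eb_{\rm n}\,(\eb_{\rm n}^{\,T}A\eb_{\rm n})\,d\Omega$, which vanishes, so the drift is relative to $A\Qb$ \emph{exactly} and the coefficients $\gamma,\sigma$ are unchanged from the zero-flow case~\eqref{lang_coeff}. The same rescaling on the quadratic term gives the second moment $\to(\sigma^2/M^2)\,I$, hence $\tfrac{\sigma^2}{2M^2}\Delta_{\Vb}\psi$.

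Two elementary estimates finish the generator convergence. Since the typical jump is $\Vh=O(m^{1/2})$, the third-order term weighted by the jump rate is $O(m^{1/2})$ and vanishes provided $\Phi_4<\infty$, which is assumed; the boundary contribution from the cutoff $\Vhm$ is controlled using $c_m^2 m^{1/2}\to 0$. Granting $\LL_m\psi\to\LL\psi$ uniformly on compacts, I would conclude by the martingale-problem method: tightness of $\{(\Qtb_m,\Vtb_m)\}$ in $\D([0,T])$ follows from the uniform control of $\LL_m$ on the coordinate and quadratic test functions, together with the fact that jump amplitudes tend to zero so that every limit point has continuous paths; any limit point then solves the martingale problem for $\LL$. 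Because~\eqref{ou_flow} has affine drift and constant diffusion, this martingale problem is well-posed, so the limit is the unique law of $(\Qb,\Vb)$, giving $(\Qtb_m,\Vtb_m)_\tint\lconv(\Qb,\Vb)_\tint$ in the Skorokhod topology~\cite{bill99}.

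The main obstacle I anticipate is bookkeeping rather than conceptual: carrying the background-flow shift cleanly through the moment integrals so that the drift emerges relative to $A\Qb$ (and not to $A\qb$ or with spurious $O(1)$ corrections from $\qb=\Qb-R\eb_{\rm n}$), while simultaneously verifying that the symmetric leading-order terms cancel and that the surviving pieces match $\gamma$ and $\sigma$ of~\eqref{neld_coeff}. The delicate points are the vanishing of the odd sphere moment that removes the $R\eb_{\rm n}$ correction and the uniformity of the remainder bounds in $(\Qb,\Vb)$ on compacts, which is exactly what the scaling choices $c_m\to\infty$, $c_m^2 m^{1/2}\to 0$ in Definition~\ref{fast} are designed to secure.
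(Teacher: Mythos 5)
Your computational core coincides with the paper's: Appendix~\ref{ma_to_ou} writes the same generator $L_m$, performs the same Taylor expansion in the jump size $\Vh$, rescales by $m^{1/2}$, and extracts the coefficients from the same sphere identities --- the $O(m^{-1/2})$ symmetric part of the first moment is killed by $\int_{\SSph^{d-1}}\eb_{\rm n}\,d\Omega=0$, the correction coming from $\qb=\Qb-R\,\eb_{\rm n}$ dies as an odd third moment over the sphere, the surviving terms carry $\Phi_1$ and $\Phi_3$ and reproduce exactly~\eqref{neld_coeff}, and the remainder is $O(m^{1/2})$ using the four finite moments and the scaling $c_m^2 m^{1/2}\to 0$ (Sections~\ref{sec:mark_rem}--\ref{sec:drift}). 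Where you genuinely diverge is the last step, from generator convergence to convergence in law. The paper invokes the Kurtz--Skorokhod semigroup result recorded as Lemma~\ref{thm:gen_converge} (\cite[Lemma 4.1]{durr80}, cf.~\cite{kurt75}): it requires that $C_c^\infty$ be a core for the limiting generator and that $L_m\psi\to L\psi$ uniformly on all of $\RR^{2d}$ (which the compact support of $\psi$ makes available), and in exchange it delivers convergence in law on $\D([0,T])$ with no separate tightness argument. Your martingale-problem route trades those hypotheses for two obligations: well-posedness of the limiting martingale problem --- unproblematic here, since~\eqref{ou_flow} has Lipschitz affine drift and constant nondegenerate diffusion --- and tightness of $(\Qtb_m,\Vtb_m)$ in $\D([0,T])$, which you dispatch too quickly. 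The coordinate and quadratic functions you propose to test against are not compactly supported, so the uniform control of $L_m$ established in Section~\ref{sec:mark_gen} does not apply to them; you would need a localization or compact-containment argument (for instance Aldous' criterion combined with a bound on $L_m$ applied to a cutoff of $|\Qb|^2+|\Vb|^2$), and a similar localization to pass to the limit in the martingale relation when the generator convergence is only locally uniform. These additions are standard but real; the paper's choice of Lemma~\ref{thm:gen_converge} is precisely what lets it avoid them, at the price of the core hypothesis your route does not need.
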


For a stochastic process, we let $T_t: C_0(\RR^{2d};\RR) \rightarrow
C_0(\RR^{2d};\RR)$  denote the evolution semigroup, where
$C_0(\RR^{2d};\RR)$ denotes the set of continuous functions with zero
limit at infinity.  We define the infinitesimal generator $L$ by
\begin{equation} \label{gendef} L \psi(\Qb, \Vb) = \lim_{t\rightarrow
0} \frac{T_t \psi(\Qb, \Vb) - \psi(\Qb, \Vb)}{t} \end{equation} where
the domain of $L$ is the set of $C_0$ functions such that the above
limit exists in $C_0.$  In the following, we use $L_m$ to denote the
infinitesimal generator for the Markov approximation $(\Qtb_m,
\Vtb_m)$ and $L$ for the nonequilibrium Langevin
dynamics~\eqref{ou_flow}. 
We prove Lemma~\ref{thm:ma_to_ou} with the use of
Lemma~\ref{thm:gen_converge} below, which is a general lemma relating
convergence of generators to convergence in law of the process.  We
recall that $\D([0,T])$ represents the space of c\`adl\`ag functions
on the interval $[0,T].$ A linear subspace $K$ of the domain of $L$ is
called a core for $L$ if $L$ is the closure of $L|_K$~\cite{kurt75}.
One can show that $C^\infty_c,$ the set of compactly supported
$C^\infty$ functions, is a core for the nonequilibrium Langevin
dynamics~\eqref{ou_flow} using the ideas of~\cite{kurt75}.

The process $(\Qb, \Vb)$ is a Markov-$C_0$-process, which means its
evolution semigroup satisfies $T_t C_0
\subset C_0$ and $\lim_{t\rightarrow 0} ||T_t \psi - \psi|| =0$ for all 
$\psi \in C_0.$

\begin{lemma}
\label{thm:gen_converge}
Consider a family of Markov processes $Z_{m}$ on $\D([0,T])$ with
generators $L_{m}$.  Suppose that the Markov-$C_0$-process $Z$ has
generator $L$.  Let $K$ be a core for $L$ such that 
$\psi \in K \Longrightarrow \psi \in \D(L_m)$ for all sufficiently
small $m.$ Suppose that the initial distribution of $Z_{m}$
converges weakly to that of $Z$ and that
\begin{equation}
\label{gen_converge}
\forall \psi \in K, \qquad \lim_{m \rightarrow 0} 
   \sup_{z \in \RR^{2d}} | L_m \psi(z) - L \psi(z) | = 0 .
\end{equation} 
Then
\begin{equation*}
Z_{m} \lconv Z \quad m \rightarrow 0. \qed
\end{equation*} 
\end{lemma}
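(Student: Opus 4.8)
The plan is to deduce convergence in law on $(\D([0,T]), \sigma_{\rm sk})$ from strong convergence of the evolution semigroups, obtaining the latter from the generator convergence~\eqref{gen_converge} via the Trotter--Kurtz approximation theorem, and then upgrading convergence of finite-dimensional distributions to convergence in $\D([0,T])$ by a tightness argument. First I would note that the hypotheses are precisely those of the Trotter--Kurtz semigroup approximation theorem~\cite{kurt75}. For each $\psi \in K$ we have $\psi \in \D(L_m)$ for all small $m$, and~\eqref{gen_converge} asserts $L_m \psi \to L \psi$ in the sup norm, that is, in $C_0$. Taking the constant approximating sequence $\psi_m = \psi$ (so that $\psi_m \to \psi$ and $L_m \psi_m \to L \psi$) and using that $K$ is a core for $L$, the theorem yields strong convergence of the contraction semigroups $T_t^m$ and $T_t$ of $Z_m$ and $Z$:
\begin{equation*}
\lim_{m \rightarrow 0} \sup_{t \in [0,T]} \left\| T_t^m f - T_t f \right\| = 0 \qquad \text{for every } f \in C_0.
\end{equation*}
Here I use that $L$ generates a strongly continuous contraction semigroup on $C_0$, which holds because $Z$ is assumed to be a Markov-$C_0$-process.

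Next I would promote this to convergence of the finite-dimensional distributions. For $0 \le t_1 < \cdots < t_k \le T$ and $f_1, \dots, f_k \in C_0$, the Markov property writes
\begin{equation*}
\EE \left[ \prod_{j=1}^{k} f_j(Z_m(t_j)) \right] = \int \left( T_{t_1}^m \Big( f_1\, T_{t_2 - t_1}^m \big( f_2 \cdots f_{k-1}\, T_{t_k - t_{k-1}}^m f_k \big) \Big) \right)(z)\, \nu_m(dz),
\end{equation*}
where $\nu_m$ denotes the law of $Z_m(0)$. Since each $T_t^m$ maps $C_0$ into $C_0$ and multiplication by a $C_0$ function preserves $C_0$, the integrand lies in $C_0$; an induction on $k$ using the uniform semigroup convergence of the previous step shows that it converges in sup norm to the analogous expression with $T$ in place of $T^m$. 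Combined with the assumed weak convergence $\nu_m \Rightarrow \nu$ of the initial laws, where $\nu$ is the law of $Z(0)$, this gives convergence of the expectation above to its limit, i.e.\ convergence of the finite-dimensional distributions of $Z_m$ to those of $Z$.

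The remaining and most delicate step is tightness of $\{Z_m\}$ in $\D([0,T])$. For each $\psi \in K$, Dynkin's formula makes
\begin{equation*}
M_m^\psi(t) = \psi(Z_m(t)) - \psi(Z_m(0)) - \int_0^t L_m \psi(Z_m(s))\, ds
\end{equation*}
a martingale, and by~\eqref{gen_converge} the integrand $L_m \psi$ is bounded uniformly in $m$ since $L_m \psi \to L \psi \in C_0$. This uniform control of the drift, together with an analogous bound on the quadratic variation, feeds the Aldous--Rebolledo criterion to yield tightness of the real processes $\psi(Z_m(\cdot))$ for each $\psi$ in the core $K$, which is dense in $C_0$ and hence separates points. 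The genuinely missing ingredient is a compact containment estimate ruling out escape to infinity on $[0,T]$, and I expect this to be the main obstacle. I would supply it by passing to the one-point compactification of $\RR^{2d}$, on which the extended Feller semigroups make containment automatic, and then invoke the fact that the limiting Markov-$C_0$-process $Z$ has sample paths in $\D([0,T];\RR^{2d})$ to conclude that no mass reaches the added point; in the application to Theorem~\ref{thm:main} this is exactly what the stopping time~\eqref{stop_time_mech} and the uniform boundedness of the jump intensity are designed to control.

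Finally, tightness yields relative compactness of $\{Z_m\}$ in $\D([0,T])$, and since every subsequential limit is identified through the finite-dimensional distributions as the law of $Z$, the standard argument combining tightness with convergence of finite-dimensional distributions~\cite{bill99} forces $Z_m \lconv Z$ as $m \rightarrow 0$, completing the proof.
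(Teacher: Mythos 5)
You should first be aware that the paper does not prove this lemma at all: it is imported verbatim as \cite[Lemma 4.1]{durr80}, with the remark that it can be found in similar form in \cite{kurt75} or \cite{skor58}, and the statement is closed with a QED mark precisely because no argument is supplied. Your proposal is therefore a reconstruction of a cited black-box result, and the route you take --- Trotter--Kurtz approximation to pass from convergence of generators on a core to strong convergence of the semigroups uniformly on $[0,T]$, then finite-dimensional convergence via the Markov property together with weak convergence of the initial laws, then tightness in $\D([0,T])$ combined with the standard identification argument of \cite{bill99} --- is in substance the standard proof of this theorem in the literature (it is essentially how the result is established by Kurtz and in Ethier--Kurtz). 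So the overall architecture is sound and matches the provenance of the lemma.

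Two steps, however, are asserted rather than proved, and both need shoring up before this could stand as a complete argument. First, the Aldous--Rebolledo step requires a uniform-in-$m$ bound on the predictable quadratic variation of $M_m^\psi$, which for Markov processes of this type is
\begin{equation*}
\langle M_m^\psi \rangle_t = \int_0^t \left[ L_m(\psi^2) - 2\,\psi\, L_m\psi \right](Z_m(s))\,ds,
\end{equation*}
and controlling this via \eqref{gen_converge} requires $\psi^2$ to lie in $K$ (or at least in $\D(L_m)$ with $L_m(\psi^2)$ uniformly bounded). Under the hypotheses of the lemma $K$ is merely a core, not an algebra, so this is a genuine gap at the stated level of generality --- though it closes in the paper's application, where $K = C^\infty_c$ is stable under products. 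Second, the compact containment argument via the one-point compactification needs more than you give: one must check that the semigroups extend to Feller semigroups on the compactified space (i.e.\ that the processes are conservative), prove convergence there, and then convert convergence in the Skorokhod space over the compactification into convergence in $\D([0,T])$ over $\RR^{2d}$, using that the limit $Z$ almost surely never reaches the point at infinity. The set of paths avoiding that point is not open in the larger Skorokhod space, so this last conversion is a lemma in its own right rather than an immediate consequence. Both repairs are exactly what the cited references carry out, so your proposal is best read as a correct outline of the known proof with these two steps left to the literature.
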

The above lemma is~\cite[Lemma 4.1]{durr80}, which can be found in
similar form in~\cite{kurt75} or~\cite{skor58}.

\subsection{Generator for the Markov process}
\label{sec:mark_gen}
We apply the generator $L_m$ for the Markov process to a
$C^{\infty}_c(\RR^{2d})$ test function, expand in
powers of $m,$ and show that we have, to leading order, a drift and
diffusion term.

Applying~\eqref{gendef} to the Markov process, we have for any
$\psi(\Qb,\Vb) \in C^\infty_c(\RR^{2d})$ that the generator satisfies
\begin{equation*}
\begin{split} 
L_m \psi(\Qb,&\Vb) = \Vb \cdot \nabla_\Qb \psi(\Qb,\Vb)
+ \int_{\SSph^{d-1}} \int_{\Vhm}^\infty 
              [\psi(\Qb, \Vb + \Vh \eb_{\rm n}) -\psi(\Qb, \Vb)] \ 
                 \hat{r}_m(\Vh, \eb_{\rm n}, \Qb, \Vb) \, d \Vh
d\Omega,
\end{split} 
\end{equation*}
where $\rh_m$ is defined in~\eqref{bg_measure2} and $\Vhm$, which we
note depends on $\eb_{\rm n},$ is defined in~\eqref{vhm}.  We recall,
from the scaling of bath atom velocities~\eqref{fvq}, that
$\EE_m(|\vb|)$ is $O(m^{-1/2}).$  From~\eqref{Vt2} this scaling
implies that $\EE_m(\Vh)$ is $O(m^{1/2})$.    To find the limiting
generator in the limit $m \rightarrow 0,$ we expand in powers of $\Vh$
and write 
\begin{equation*}
\begin{split} 
 \psi(\Qb, \Vb + \Vh \eb_{\rm n}) - \psi(\Qb, \Vb)
   =& \Vh \eb_{\rm n} \cdot \nabla_\Vb \psi(\Qb, \Vb) 
	+ \frac{1}{2} \ \Vh^2 (\eb_{\rm n} \otimes \eb_{\rm
n}) : \nabla_{\Vb}^2 \psi(\Qb, \Vb)  \\
   &+ \frac{1}{6} \Vh^3 (\eb_{\rm n} \otimes \eb_{\rm n} 
          \otimes \eb_{\rm n}) \cdot3\cdot \nabla_\Vb^3 
	      \psi(\Qb,\Vb + \Vh^* \eb_{\rm n}),
\end{split} 
\end{equation*} 
for some $\Vh^* \in (0, \Vh)$ and where $\cdot3\cdot$ denotes the third-order
contraction product, which we apply to the 3-tensors 
$(\eb_{\rm n} \otimes \eb_{\rm n} \otimes \eb_{\rm n})$ and $\nabla_\Vb^3.$
We then write
\begin{equation}
\label{Lm_expand}
\begin{split} 
L_m \psi (\Qb,\Vb) =& \Vb \cdot \nabla_\Qb \psi (\Qb,\Vb) 
	+ C_m \left( I_1 \cdot \nabla_\Vb \psi (\Qb,\Vb)  
+ \frac{1}{2} \  I_2 : \nabla_{\Vb}^2 \psi (\Qb,\Vb)
+ \mathcal{R}_m \right),
\end{split} 
\end{equation}
where, recalling that $\qb = \Qb - R \eb_{\rm n},$ we have
\begin{align}
\notag 
I_1 &= m^{-5/2} \int_{\SSph^{d-1}} \int_\Vhm^\infty
     \Vh^2 \eb_{\rm n} \ 
     f^1_m\left(\frac{M+m}{2m}\Vh + (\Vb - A \qb)_{\rm n}\right) d\Vh
     d\Omega, \\
\notag 
I_2 &= m^{-5/2} \int_{\SSph^{d-1}} \int_\Vhm^\infty
     \Vh^3 (\eb_{\rm n} \otimes \eb_{\rm n}) \ 
     f^1_m\left(\frac{M+m}{2m}\Vh + (\Vb - A \qb)_{\rm n}\right)  d\Vh
d\Omega,\\
\notag 
\mathcal{R}_m &= \frac{m^{-5/2}}{6} \int_{\SSph^{d-1}} \int_\Vhm^\infty
     \Vh^4 (\eb_{\rm n} \otimes \eb_{\rm n} \otimes \eb_{\rm n})
\cdot3\cdot \nabla_{\Vb}^3 \psi (\Qb,\Vb+\Vh^* \eb_{\rm n})  
f^1_m\left(\frac{M+m}{2m}\Vh + (\Vb - A \qb)_{\rm n}\right)  d\Vh
     d\Omega,
\end{align} 
with coefficient
\begin{align}
\label{cm}
C_m &= \lambda R^{d-1} \left(\frac{M+m}{2}\right)^2.
\end{align}
Note that in our notation, we have suppressed the dependence of $I_1,$
$I_2,$ and $\mathcal{R}_m$ on $(\Qb, \Vb)$.

\subsubsection{Remainder term}
\label{sec:mark_rem}
We begin with the remainder term $\mathcal{R}_m.$  Since the test
function $\psi$ belongs to $C^{\infty}_c,$ we can restrict to the case
where $\Qb$ and $\Vb$ are bounded, and we may assume $m$ is
sufficiently small so that $|\Vb| + \|A\| \, |\Qb| \leq c_m.$ This,
along with the finiteness of the moments~\eqref{phi_i_2}, allows us to
estimate the order of $\mathcal{R}_m.$  Letting 
$x= \left(\frac{M+m}{2m^{1/2}} \Vh + m^{1/2} (\Vb-A\qb)_{\rm n}\right),$
we compute
\begin{equation*}
\begin{split}
|\mathcal{R}_m|
&\leq C m^{-2} \int_{\SSph^{d-1}}  \int_\Vhm^\infty
\Vh^4 \ \left|\nabla^3_{\Vb} \psi(\Qb, \Vb + \Vh^* \eb_n)\right| \ 
f^1\left(\frac{M+m}{2m^{1/2}}\Vh + m^{1/2} (\Vb - A \qb)_{\rm n}\right) 
	 d\Vh d\Omega\\
&\leq \frac{C m^{1/2}}{(M+m)^5} 
     \int_{\SSph^{d-1}}  \int_{m^{1/2} c_m}^\infty (x - m^{1/2}  
     (\Vb- A\qb)_{\rm n})^4 \ \left|\nabla^3_{\Vb} \psi(\Qb, \Vb +
\Vh^* \eb_n)\right| \  f^1\left(x\right) \, dx \, d\Omega.
\end{split}
\end{equation*}
We note that since we have assumed that $m$ is sufficiently small,
the minimum value of $x$ is given by $\xm = m^{1/2} c_m.$
Recalling that $\psi$ is compactly supported and that 
$\Vh^* \in (0, \Vh) = \left(0, 
\left( \frac{2 m^{1/2}}{M+m} \right) \left( x - m^{1/2}
(\Vb-A\qb)_{\rm n}\right) \right),$
we can bound $\Qb, \Vb$ in the innermost integrand to find the
estimate
$(x - m^{1/2}  (\Vb- A\qb)_{\rm n})^4  \ \left|\nabla^3_{\Vb} \psi(\Qb, \Vb +
\Vh^* \eb_n)\right| \leq C (1+x)^4$ where $C$ may
depend on $\psi$ but not on $m.$  
We find an upper bound on $\mathcal{R}_m$ by extending the
interval of integration and using the boundedness of the
marginals~\eqref{phi_i_2},
\begin{equation}
\label{I3}
\begin{split}
| \mathcal{R}_m| &\leq 
C \frac{m^{1/2}}{(M+m)^5} \int_0^\infty (1+x)^4 f\left(x\right) dx  
	\leq C m^{1/2}.
\end{split}
\end{equation}

\subsubsection{Diffusion coefficient}
\label{sec:diffusion}
We next turn to $I_2$, and compute
\begin{equation*}
\begin{split}
I_2 &= m^{-5/2} \int_{\SSph^{d-1}}  \int_\Vhm^\infty 
	(\eb_{\rm n} \otimes \eb_{\rm n}) \Vh^3 \ 
	f^1_m\left(\frac{M+m}{2m}\Vh + (\Vb - A \qb)_{\rm n}\right) 
	 d\Vh d\Omega\\
&= m^{-2} \int_{\SSph^{d-1}}  \int_\Vhm^\infty 
	(\eb_{\rm n} \otimes \eb_{\rm n}) \Vh^3 
	f^1\left(\frac{M+m}{2m^{1/2}}\Vh + m^{1/2} 
	(\Vb - A \qb)_{\rm n}\right) d\Vh d\Omega.\\
\end{split}
\end{equation*}
Let $x= \left(\frac{M+m}{2m^{1/2}} \Vh + m^{1/2} (\Vb-A\qb)_{\rm n}\right).$  
We expand in powers of $m,$ using the finiteness of the moments~\eqref{phi_i_2} 
and the fact that $\psi$ is compactly supported, giving
\begin{equation*}
\begin{split} 
I_2 &= \frac{16}{(M+m)^4} \int_{\SSph^{d-1}} 
\int_{m^{1/2} c_m}^{\infty} 
	(\eb_{\rm n} \otimes \eb_{\rm n}) (x - m^{1/2}  (\Vb- A\qb)_{\rm n} )^3 \ 
	f^1(x) \, dx \, d\Omega \\
&= \frac{16}{M^4} \int_{\SSph^{d-1}}  \int_0^{\infty} 
	(\eb_{\rm n} \otimes \eb_{\rm n}) x^3 \ 
	f^1(x) \, dx \, d\Omega + O(m^{1/2}) \\
&= \frac{16}{M^4} \Phi_3 \int_{\SSph^{d-1}}  
	(\eb_{\rm n} \otimes \eb_{\rm n}) 
      \, d\Omega + O(m^{1/2}) \\
&= \frac{16 S_{d-1}}{ M^4 d} \Phi_3 I + O(m^{1/2}), \\
\end{split}
\end{equation*}
where $I$ denotes the $d \times d$ identity matrix and we recall that
$S_{d-1}$ denotes the surface area of the $(d-1)$-sphere $\SSph^{d-1}$.  The
second line uses the estimate $\int_0^{m^{1/2} c_m} x^3 f(x) \, dx
\leq  m^{3/2} c_m^3 \int_\RR f(x) \, dx = O(m^{1/2})$ 
along with a gathering of
higher order terms in $m.$ The third line uses the
definition~\eqref{phi_i_2} of $\Phi_3$, and the fourth line uses the
following identity
\begin{equation*}
 \int_{\SSph^{d-1}}  
	(\eb_{\rm n} \otimes \eb_{\rm n}) 
      \,  d\Omega = \frac{S_{d-1}}{d} I.
\end{equation*} 

Multiplying by $C_m$ of~\eqref{cm} and letting $m\rightarrow 0$ leads
to the isotropic diffusion coefficient
\begin{equation}
\label{D}
D I = \lim_{m \rightarrow 0} C_m I_2 
= \frac{4 \lambda R^{d-1} S_{d-1}}{ M^2 d} \Phi_3 I.
\end{equation}
In the case of $f(v) = Z^{-1} \exp\left(-\frac{\beta v^2}{2}\right)$
with
$Z= \frac{\sqrt{2 \pi}}{\sqrt{\beta}},$ 
we have $\Phi_3 = \frac{\sqrt{2}}{\beta^{3/2} \sqrt{\pi}}$ and
\begin{equation*}
D = \frac{4 \sqrt{2} \lambda R^{d-1} S_{d-1}}{ \beta^{3/2} \sqrt{\pi} M^2 d}. 
\end{equation*}

\subsubsection{Drift coefficient}
\label{sec:drift}
We now similarly expand $I_1$ and find that the lowest order term, which is $O(m^{-1/2}),$ cancels out leaving an $O(1)$ drift term.  Indeed, we have
\begin{equation}
\label{i1}
\begin{split} 
I_1  
&= \int_{\SSph^{d-1}} \eb_{\rm n} 
m^{-\frac{5}{2}} \int_\Vhm^\infty 	\Vh^2 \ 
	f^1_m\left(\frac{M+m}{2m}\Vh + V_{\rm n} - (A \qb)_{\rm n}\right) 
	 \,  d\Vh d\Omega. \\
\end{split} 
\end{equation}
As before, we let 
$x= \frac{M+m}{2m^{1/2}} \Vh + m^{1/2} (\Vb - A\qb )_{\rm n}.$  We obtain
\begin{equation*}
\begin{split} 
I_1=&\frac{ 8 m^{-1/2}}{(M+m)^3} 
  \int_{\SSph^{d-1}} \eb_{\rm n} \int_{m^{1/2} c_m}^\infty 
       (x - m^{1/2} (\Vb - A \qb)_{\rm n})^2 \ f^1 (x ) \,  dx \, d\Omega\\
=& \int_{\SSph^{d-1}} \eb_{\rm n} 
\left[ \frac{ 8 m^{-1/2}}{M^3} \int_{m^{1/2} c_m}^\infty 
       x^2 \ f^1 (x ) \, dx +\frac{ 16 }{M^3}  
  \int_{0}^\infty 
       (- x (\Vb-A\qb)_{\rm n}) \ f^1 (x ) \, dx \right] \, d\Omega +
        O(m^{3/10}), 
\end{split} 
\end{equation*}
where the error terms are dominated by
$\int_0^{m^{1/2} c_m} x f(x) \, dx
\leq  m^{1/2} c_m \int_\RR f(x) \, dx = O(m^{3/10}).$ The first term vanishes, 
\begin{equation*} 
\int_{\SSph^{d-1}} \eb_{\rm n}   
  \int_{m^{1/2} c_m}^\infty 
       x^2 \, f^1 (x )  \, dx \, d\Omega = 0,
\end{equation*} 
since $\int_{\SSph^{d-1}} \eb_{\rm n} \, d\Omega = 0$.
We recall 
$\qb = \Qb - R \eb_{\rm n},$ and let $\Wb = \Vb- A \Qb.$ 
We have the identities
\begin{equation*}
\begin{split}
&\int_{\SSph^{d-1}} \eb_{\rm n} \cdot \Wb \eb_{\rm n} \, d\Omega 
= \int_{\SSph^{d-1}} (\eb_{\rm n} \otimes \eb_{\rm n}) :
   \Wb  \, d\Omega = \frac{S_{d-1}}{d} \Wb,\\
&\int_{\SSph^{d-1}} \eb_{\rm n} (\eb_{\rm n} \cdot \eb_{\rm n}) \, d\Omega 
= \int_{\SSph^{d-1}} \eb_{\rm n} \, d\Omega = 0.\\
\end{split}
\end{equation*}
Combining the above, we have
\begin{equation*}
I_1 = - \frac{16 S_{d-1}}{M^3 d} (\Vb - A \Qb) \Phi_1 + O(m^{3/10}).
\end{equation*}
Multiplying by $C_m$ of~\eqref{cm} gives the drift coefficient
\begin{equation}
\label{gamma}
-\frac{\gamma}{M} (\Vb - A\Qb) = \lim_{m \rightarrow 0} C_m I_1 = - \frac{4
\lambda R^{d-1} S_{d-1}}{M d} \Phi_1 (\Vb-A \Qb).
\end{equation} 
In the case of $f(v) = Z^{-1} \exp\left(-\frac{\beta v^2}{2}\right)$
with
$Z= \frac{\sqrt{2 \pi}}{\sqrt{\beta}},$ we have
$\Phi_1=\frac{1}{\sqrt{2 \beta \pi}}$ and hence
\begin{equation*}
\gamma = \frac{2 \sqrt{2} \lambda R^{d-1} S_{d-1}}{\sqrt{\pi \beta} d}.
\end{equation*}

\subsection{Stochastic limit}

Combining the expansion of $L_m$ in~\eqref{Lm_expand}, with the
expressions for $I_1$ and $I_2$~\eqref{D}
and~\eqref{gamma} and with the bound on the remainder~\eqref{I3}, 
we have
\begin{equation*}
\begin{split}
L_m \psi(\Qb, \Vb) &= \Vb \cdot \nabla_Q \psi(\Qb, \Vb) - M^{-1} \gamma (\Vb- A \Qb) 
\cdot \nabla_V \psi(\Qb, \Vb)+ \frac{1}{2} \  D \Delta_{V} \psi(\Qb, \Vb) 
+ O(m^{3/10}).
\end{split}
\end{equation*}
We thus have a generator $L_m$ that in the limit $m \rightarrow 0$
converges in the sense of~\eqref{gen_converge} to the generator of the
nonequilibrium Langevin dynamics~\eqref{ou_flow}, where $\gamma$ is
given in~\eqref{gamma} and $\sigma = M \sqrt{D}$ is defined
using~\eqref{D}, in agreement with~\eqref{neld_coeff}.  Thus, we use
Lemma~\ref{thm:gen_converge} to conclude the convergence of the Markov
process as stated in Lemma~\ref{thm:ma_to_ou}.

\section{Fast collisions cannot lead to recollisions}
\label{norecollide}

We show in this section that in the mechanical process until the
stopping time~\eqref{stop_time_mech},  any bath atom that undergoes a
fast collision (recall Definition~\ref{fast}) cannot recollide with
the large particle and that the bath atom cannot have previously
collided with the large particle.  This shows the fast collisions
experienced by the large particle have rate $r_m$
in~\eqref{bg_measure1}.  This observation is necessary when coupling
the Markov approximation to the mechanical process, which we do in
Appendix~\ref{me_to_ma}.

\begin{lemma}
\label{thm:norecollide}
For a given trajectory $(\Qb_m, \Vb_m),$ suppose that the large
particle experiences a fast collision with a bath atom at time $t_1
\in [0,\tau_m).$ Then there are no other collisions between this atom
and the large particle in the time interval $[0, \tau_m),$ where
$\tau_m$ denotes the stopping time~\eqref{stop_time_mech}.  
\end{lemma}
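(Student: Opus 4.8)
The plan is to track the relative position $\mathbf{r}(t) = \qb(t) - \Qb_m(t)$ of the designated bath atom and the large particle and to show that $|\mathbf{r}(t)| > R$ for every $t \in [0,\tau_m)\setminus\{t_1\}$; since a collision between this atom and the particle happens exactly when $|\mathbf{r}| = R$ with the two approaching, this rules out any second collision. First I would record the behavior of $|\mathbf{r}|$ across $t_1$. Writing $\eb_{\rm n}$ for the collision normal, we have $\mathbf{r}(t_1) = -R\eb_{\rm n}$, and a direct substitution into the elastic rule~\eqref{collision2} gives the standard identity that the relative normal velocity merely reverses sign, $v'_{\rm n} - V'_{\rm n} = -(v_{\rm n} - V_{\rm n})$. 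Consequently
\[
\frac{d}{dt}|\mathbf{r}|\Big|_{t_1^-} = -(v_{\rm n}-V_{\rm n}) < 0, \qquad \frac{d}{dt}|\mathbf{r}|\Big|_{t_1^+} = +(v_{\rm n}-V_{\rm n}) > 0,
\]
so the atom approaches before the collision and separates immediately afterward.

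Next I would quantify this separation speed. Since $v_{\rm n} = \vce_{\rm n} + (A\qb)_{\rm n}$ and, before $\tau_m$, one has $\|A\|\,|\Qb_m| + |\Vb_m| < c_m/8$ together with $|\qb| \le |\Qb_m| + R$, the fast-collision hypothesis $\vce_{\rm n} > c_m$ yields
\[
v_{\rm n} - V_{\rm n} = \vce_{\rm n} + (A\qb)_{\rm n} - V_{\rm n} > c_m - \tfrac{c_m}{8} - \|A\|R - \tfrac{c_m}{8} \ge \tfrac{c_m}{2}
\]
for $m$ small. Two structural facts then enter. First, between collisions the atom's relative velocity is conserved, since $d\vcen = 0$ under~\eqref{nonHam2}; hence on any collision-free subinterval following $t_1$ it equals the post-collision value $\vcen'$, and a short computation from~\eqref{collision2} shows its normal component satisfies $\vce'_{\rm n} \le -c_m/4$, so $|\vcen'| \gtrsim c_m$. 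Second, the flow map $e^{At}$ has norm bounded by $e^{\|A\|T}$ on $[0,T]$, so the atom's speed $|\vb(t)| = |e^{A(t-t_1)}\vb'|$ stays comparable to $c_m$ while the particle speed remains $< c_m/8$.

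With these I would establish local monotonicity of the separation through a differential inequality. Writing $\vb = A\mathbf{r} + A\Qb_m + \vcen'$ gives
\[
\tfrac12\tfrac{d}{dt}|\mathbf{r}|^2 = \mathbf{r}\cdot\vcen' + \mathbf{r}\cdot A\mathbf{r} + \mathbf{r}\cdot(A\Qb_m - \Vb_m) \ge \mathbf{r}\cdot\vcen' - \|A\|\,|\mathbf{r}|^2 - \tfrac{c_m}{8}|\mathbf{r}|,
\]
which is strictly positive at $t_1$, where $\mathbf{r}\cdot\vcen' = -R\,\vce'_{\rm n} \gtrsim Rc_m/2$, and stays positive as long as $|\vcen'| \gtrsim c_m$ dominates the $\|A\|\,|\mathbf{r}|$ term, i.e. while $|\mathbf{r}| \lesssim c_m/\|A\|$. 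Thus $|\mathbf{r}|$ is strictly increasing away from $t_1$ while the atom remains within a ball of radius $\sim c_m/\|A\|$ of the particle. Running the same computation backward in time disposes of earlier collisions: if one occurred at $t_0 < t_1$, then $\vcen$ would be conserved on $(t_0,t_1)$ and backward monotonicity would force $|\mathbf{r}(t_0^+)| > R$, contradicting $|\mathbf{r}(t_0)| = R$ by continuity.

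The main obstacle is the far field. Unlike the straight-line trajectories of the zero-flow DGL model, the acceleration $A\vb$ curves the atom's path, so the differential inequality only guarantees monotonicity while $|\mathbf{r}| \lesssim c_m/\|A\|$, and one must separately preclude the curved trajectory carrying the atom back to distance $R$ within the remaining time $\le T$. Here I would exploit that reaching distance $\sim c_m/\|A\|$ already consumes an $O(1)$, $m$-independent time at relative speed $\sim c_m$, that incompressibility $\tr A = 0$ forbids any contraction of phase-space volume toward the particle, and above all that $c_m = m^{-1/5} \to \infty$: on the fixed horizon $[0,T]$ the atom's relative speed eventually dominates every flow- and particle-induced effect, so for $m$ small enough no return is possible. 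Turning this into a quantitative lower bound on the earliest possible return time that diverges as $m \to 0$ is the technical heart of the argument and precisely the point where the nonzero background flow complicates the reasoning of~\cite{durr80}.
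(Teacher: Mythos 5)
Your near-field analysis is essentially correct (the sign reversal of the relative normal velocity under~\eqref{collision2}, the bound $v_{\rm n}-V_{\rm n}>c_m/2$, the conservation of $\vcen$ between collisions, and the differential inequality for $|\mathbf{r}|^2$ all check out), but the proposal is not a proof: the step you defer --- ruling out a return of the atom from the far field within the time horizon --- is the actual content of the lemma, and you end by naming it ``the technical heart'' rather than proving it. Worse, the mechanisms you suggest for closing it cannot work. Incompressibility, $\tr A=0$, is a statement about phase-space volume and says nothing about whether an individual solution of $\dot\qb=A\qb+\vcen'$ can revisit a neighborhood of the particle. And the claim that $c_m=m^{-1/5}\to\infty$ makes return impossible for small $m$ is false for general traceless $A$: if $A$ is antisymmetric (a rotation at rate $s$), the between-collision trajectory $\dot\qb=A\qb+\vcen'$ is a circle about the stagnation point $-A^{-1}\vcen'$ with period $2\pi/s$ \emph{independent of} $|\vcen'|$; a faster atom merely traverses a larger circle in the same time and returns exactly to its departure point after $2\pi/s$, independently of $m$. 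So no lower bound on the return time can diverge as $m\to 0$ on the strength of the atom's speed alone; any honest completion of your argument must use structural information about $A$ (for the shear matrix~\eqref{shear_mat}, $A$ is nilpotent, the atom's transverse coordinates $q_2,q_3$ drift monotonically, and the region $|A\qb|\lesssim c_m/8$, once left, is never re-entered).

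For the part that can be done, the paper's proof is also more economical than yours, and the difference is instructive. Instead of the radial quantity $|\mathbf{r}(t)|$ --- whose derivative contains the term $\mathbf{r}\cdot A\mathbf{r}$, so that monotonicity holds only for $|\mathbf{r}|\lesssim c_m/\|A\|$, and which additionally requires control of the angle between $\mathbf{r}$ and $\vcen'$ --- the paper projects everything onto the \emph{fixed} direction $\eb_{{\rm n},1}$ of the fast collision. Since $\vcen$ is constant between collisions with $\vcen'\cdot\eb_{{\rm n},1}\le-\tfrac{2}{3}c_m$, and since any recollision must occur where the particle is confined by the stopping time (hence $|A\qb|\le c_m/8+\|A\|R$ at contact), the atom's velocity satisfies $\vb\cdot\eb_{{\rm n},1}\le -c_m/2$ at every point where a collision could take place, while the particle's speed is at most $c_m/8$; thus the separation projected on $\eb_{{\rm n},1}$, equal to $R$ at $t_1$, strictly increases whenever the two bodies are close enough to touch, and the time-reversed version excludes earlier collisions. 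This projection argument is insensitive to the possibly enormous tangential component of $\vcen'$ that your radial inequality has to fight. That said, you have correctly located the delicate point: the paper's ``too slow to overtake'' conclusion also tacitly assumes that an atom which exits the collision region cannot re-enter it with its projected separation reduced --- automatic for the straight-line trajectories of the zero-flow model and justifiable for shear, but not a consequence of the stated bounds for arbitrary traceless $A$, as the rotation example shows. Diagnosing that difficulty is valuable, but it is not the same as resolving it, and as submitted your argument does not establish the lemma.
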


\begin{proof}
We recall that the relative velocity of the bath atom is $\vcen = \vb
- A \qb.$   By the choice of bath dynamics~\eqref{nonHam2}, the
relative velocity only changes by colliding with the large particle
whereas the velocity $\vb$ changes with time according
to~\eqref{nonHam1}.  We consider a fast collision, 
\begin{equation*}
\vce_{\rm n} > c_m = m^{-1/5}.
\end{equation*}
We write the collision rule~\eqref{collision2} in terms of relative
velocity before making use of the bound on particle position and velocity
in~\eqref{stop_time_mech} (note that since the collision occurs on the
particle's surface, $\qb(t_1)$ can be bounded in terms of $\Qb(t_1)$).
We then have the following bound on the relative normal velocity of
the bath atom after the collision
\begin{equation*}
\begin{split}
\vce'_{\rm n} &= -\frac{M-m}{M+m} \vce_{\rm n} 
   + \frac{2M}{M+m} V_{\rm n} - \frac{2M}{M+m} (A \qb(t_1))_{\rm n} \\
&\leq - \left( \frac{M-m}{M+m} \right) c_m +  \frac{2M}{M+m} \left( \frac{c_m}{8} + R \|A\| \right) \\
&\leq -\frac{2}{3} c_m.
\end{split}
\end{equation*} 
The last line holds for $m$ sufficiently small.  This shows that after colliding, the 
bath atom moves away from the particle with a large velocity.

We let $\eb_{\rm n,1}$ denote the normal vector for the fast collision
at time $t_1$ and look at the future velocity of the bath atom,
$\vb(t) = A \qb(t) + \vcen'(t)$ for $t \in [t_1, \tau_m).$  From the
above computation we see that the relative velocity of the bath atom
is pointed away from the large particle in the $\eb_{\rm
n,1}$-direction.  Due to the bound on the position of the large
particle in~\eqref{stop_time_mech}, any recollision with the bath atom
must occur in the region $||A|| |\Qb| \leq c_m/8;$ however, whenever
the bath atom is in this region, its velocity in the $\eb_{\rm
n,1}$-direction satisfies $\vb(t) \cdot \eb_{\rm n,1} \leq (c_m/8 - 2
c_m/3)  \leq -\frac{c_m}{2}.$  The large particle's velocity is
bounded by $c_m/8,$ and is thus 
too
low to overtake the bath atom.  Therefore, there cannot be a
recollision before time $\tau_m$.  Likewise, before the collision, the
velocity in the $\eb_{\rm n,1}$ direction satisfied $\vb(t) \cdot
\eb_{\rm n,1} \geq \frac{3 c_m}{4},$ which is faster than the large
particle's speed which is bounded by $c_m/8$, so it is impossible that
there were previous collisions before time $t=0.$
\end{proof}

\section{Coupling the mechanical process to the Markov approximation}
\label{me_to_ma}

In this section, we couple the Markov process $(\Qtb_m, \Vtb_m)$ with
the rate~\eqref{bg_measure2} to the mechanical process $(\Qb_m,
\Vb_m),$ defined in Section~\ref{mechdef}.  That is, for each
realization of the mechanical process, we associate a set of
realizations of the Markov process that are, with high probability,
close in the $L^{\infty}$-norm.  For the coupling, we prove below that 
\begin{equation*}
(\Qtb_m, \Vtb_m)_\tint-(\Qb_m, \Vb_m)_\tint \pconv 0 
	 \textrm{ as } m \rightarrow 0.
\end{equation*}  
As in~\cite{durr80}, we define a stopping time  when the processes
first differ by $\eps > 0$  and bound the total effect of the velocity
jumps not shared between the two processes to show that in the limit
$m\rightarrow 0$ the stopping time is greater than or equal to $T$ with 
probability 1.  

As described below, for the majority of the velocity jumps in the
mechanical process caused by fast collisions, we subject the Markov
process to the same velocity jump.  The construction of the Markov
process differs slightly from that in~\cite{durr80}.  Here, we couple
jumps in the velocity rather than collisions with bath atoms, which
makes the argument simpler.  This simplification is made possible by
the additional assumption that $f(x)$ is decreasing, which we have
assumed in our case in order to properly handle the fact that the bath
atom velocity distribution depends on position.

\subsection{Coupling and convergence}
We construct the coupling up to the stopping time $\tau_m$ for the 
mechanical process~\eqref{stop_time_mech}, and extend the
definition of the Markov process up to time $T,$ if necessary.
Let $\ime = \{t_1, t_2, \dots \}$ denote the set of
all times up to $\tau_m$ at which 
the large particle in the mechanical process experiences a collision.
This set is shown to be almost surely finite for any $m$ in
Appendix~\ref{wellposed}.  
We define $\vb_{{\rm n}, i} = \vb_{{\rm n}}(t_i),$ 
$\vcen_{{\rm n}, i} = \vcen_{{\rm n}}(t_i),$ 
$\Vb_{{\rm n}, i} = \Vb_{{\rm n}}(t_i),$ 
$\eb_{{\rm n}, i} = \eb_{{\rm n}}(t_i),$ 
etc.
We let $\is =
\{t_i \in \ime : \ |\vcen_{{\rm n},i}| < c_m\}$ denote the set of jump times
due to slow
collisions and $\ifa = \ime \setminus \is$ denote the set of jump times due to fast
collisions.  These sets of jump times are random variables since the initial
condition is random.

For a given trajectory of the mechanical process $(\Qb_m, \Vb_m),$ we
define $(\Qtb_m(0), \Vtb_m(0)) = (\Qb_m(0), \Vb_m(0)),$ 
and for most of the fast collisions of the mechanical process we
subject the Markov process to the same jump in velocity that the
mechanical process undergoes.  We selectively accept some subset of the
fast collisions in the time interval $[0, \tau_m)$
and add additional jumps in the velocity to ensure that the jumps  of
the Markov process have the rate $\rh_m(\Vh, \eb_{\rm n}, \Qtb_m, \Vtb_m)$
defined in~\eqref{bg_measure2}.  (Recall that the Markov process has
no slow collisions.)  More precisely, for every $t_i \in
\ifa,$ we choose to apply a jump with velocity change 
$\Vh_i \eb_{{\rm n}, i}$  to the Markov process with probability 
\begin{equation}
\label{rate_rem}
p_{\rm keep}(\Vh_i,\eb_{{\rm n}, i},\Qtb_m, \Vtb_m, \Qb_m, \Vb_m) = \min \left( 
\frac{\rh_m(\Vh_i, \eb_{{\rm n}, i}, \Qtb_m, \Vtb_m)}{\rh_m(\Vh_i, \eb_{{\rm n}, i}, \Qb_m, \Vb_m)}, 1 \right).
\end{equation}
For $t \in [0, \tau_m),$ we add additional fast collisions 
to the Markov process with the Poisson rate
\begin{equation}
\label{rate_add}
\begin{split}
{r}_{\rm add}(&\Vh,\eb_{\rm n},\Qtb_m, \Vtb_m, \Qb_m, \Vb_m)
= \max \left( \rh_m(\Vh, \eb_{\rm n}, \Qtb_m, \Vtb_m) 
              - \rh_m(\Vh, \eb_{\rm n}, \Qb_m, \Vb_m), 0 \right).
\end{split}
\end{equation} 
After accepting collisions with probability~\eqref{rate_rem}, and
adding new collisions with rate~\eqref{rate_add}, a short
calculation shows that the rate function for the Markov process in the 
time interval $[0,\tau_m)$  is
$\rh_m(\Vh, \eb_{\rm n}, \Qtb_m, \Vtb_m).$  
If $\tau_m < T,$ we extend the Markov process
to $[0,T]$ by adding additional jumps with rate 
$\rh_m(\Vh, \eb_{\rm n}, \Qtb_m, \Vtb_m).$

We let $\ima$ denote the set of all jump times of the Markov
process, $\iex \subset \ifa$ denote the set of jump times of the mechanical
process that
were not chosen for the Markov process, and $\iext \subset \ima$ the
set of additional times at which jumps were added to the Markov process (in the
whole interval $[0,T]$).  From
the construction above, we have that 
$$\ima = (\ifa \setminus \iex) \cup \iext.$$  
We note that the sets of jump times $\ime, \ima, \is, \iex,$ and
$\iext$ are random variables.

Since the chosen realization of the mechanical process may 
not be defined on the full time interval of interest $[0,T],$
we make the convention that 
\begin{equation*}
\sup_{s \in [0,t]} 
|\Vtb_m(s) - \Vb_m(s)| = \infty 
 \quad         \text{ if } \tau_m < t.
\end{equation*}
In particular, part of the proof of convergence will be the fact 
that $\lim_{m \rightarrow 0} \PP( \tau_m = T) = 1.$
We then show the following convergence in probability result.
\begin{lemma}
\label{thm:strong}  For all $T>0,$ 
\begin{equation*}
(\Qtb_m, \Vtb_m)_\tint-(\Qb_m, \Vb_m)_\tint \pconv 0 \textrm{ as } m \rightarrow 0,
\end{equation*}
where the convergence in probability is with respect to the
$L^{\infty}([0,T])$-norm.
Equivalently, for all $T > 0$ and any $\eps > 0,$
\begin{equation*}
\lim_{m \rightarrow 0} \PP\left(  
\sup_{t \in [0,T]} |\Qtb_m(t) - \Qb_m(t)| + |\Vtb_m(t) - \Vb_m(t)| \geq \eps \right)  = 0.
\end{equation*} 
\end{lemma}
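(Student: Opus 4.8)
The plan is to run the Dürr--Goldstein--Lebowitz comparison argument, adapted to the $\Qb$-dependent collision rate, and to control the accumulated discrepancy between the two processes by a Grönwall inequality whose only source term comes from slow collisions and vanishes as $m \to 0$. First I would introduce the exit time
\[
\tau^\eps_m = \inf\left\{ t \geq 0 : |\Qtb_m(t) - \Qb_m(t)| + |\Vtb_m(t) - \Vb_m(t)| \geq \eps \right\} \wedge \tau_m,
\]
with $\tau_m$ the mechanical stopping time~\eqref{stop_time_mech}, and work on $[0,\tau^\eps_m]$, where the two trajectories stay within $\eps$ of each other and where Lemma~\ref{thm:norecollide} guarantees that the fast collisions of the mechanical process genuinely occur with rate $\rh_m(\cdot,\Qb_m,\Vb_m)$. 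Since between jumps both processes obey~\eqref{particle_dynamics}, the position difference satisfies $|\Qtb_m(t) - \Qb_m(t)| \leq \int_0^t |\Vtb_m(s) - \Vb_m(s)|\,ds$, so it suffices to control the velocity difference.

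By construction of the coupling, the shared fast jumps (those indexed by $\ifa \setminus \iex$) are identical for the two processes and cancel in the difference, leaving
\[
\Vtb_m(t) - \Vb_m(t) = -\sum_{t_i \in \is,\, t_i \leq t} \Vh_i \eb_{{\rm n},i} \;-\; \sum_{t_i \in \iex,\, t_i \leq t} \Vh_i \eb_{{\rm n},i} \;+\; \sum_{t_j \in \iext,\, t_j \leq t} \Vh_j \eb_{{\rm n},j}.
\]
I would then bound the three sources separately in expectation. The slow-collision sum is an external source: using Definition~\ref{fast} and the collision rate~\eqref{bg_measure1}, its expected total variation is $O(c_m^2 m^{1/2})$, which tends to $0$ by the scaling $c_m^2 m^{1/2}\to 0$; this estimate is the content of Appendix~\ref{sec:slow}. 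The removed and added fast jumps are governed by the rates~\eqref{rate_rem} and~\eqref{rate_add}, both dominated by $|\rh_m(\Vh,\eb_{\rm n},\Qb_m,\Vb_m) - \rh_m(\Vh,\eb_{\rm n},\Qtb_m,\Vtb_m)|$. The crucial observation is that the two rates differ only through the shift $(\Vb_m-\Vtb_m)_{\rm n} - (A(\Qb_m-\Qtb_m))_{\rm n}$ in the argument of $f^1_m$ and in the Heaviside threshold, and this shift is bounded by the current discrepancy. Integrating $|\Vh|$ against the rate difference and using the Lipschitz and monotonicity properties of $f$ (the assumption that $f$ is decreasing enters precisely here, to sign-control the Heaviside boundary term) yields an instantaneous discrepancy rate bounded by $C\left(|\Vtb_m-\Vb_m| + \|A\|\,|\Qtb_m-\Qb_m|\right)$, with $C$ independent of $m$.

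Combining these bounds and setting $g_m(t) = \EE\left[|\Qtb_m(t\wedge\tau^\eps_m) - \Qb_m(t\wedge\tau^\eps_m)| + |\Vtb_m(t\wedge\tau^\eps_m) - \Vb_m(t\wedge\tau^\eps_m)|\right]$ gives a closed inequality $g_m(t) \leq \delta_m + C\int_0^t g_m(s)\,ds$ with $\delta_m = O(c_m^2 m^{1/2}) \to 0$. Grönwall's lemma yields $g_m(T) \leq \delta_m e^{CT} \to 0$, and since the unshared jumps are of size $O(m^{1/2})$ the discrepancy at $\tau^\eps_m$ is at least $\eps - o(1)$, so Markov's inequality gives $\PP(\tau^\eps_m < T) \leq g_m(T)/(\eps - o(1)) \to 0$. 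Before $\tau^\eps_m$ the Markov process, which by Lemma~\ref{thm:ma_to_ou} converges in law to the $O(1)$ solution of~\eqref{ou_flow}, stays far below the threshold $c_m/8 \to \infty$; closeness of the two processes then forces $\|A\|\,|\Qb_m| + |\Vb_m|$ to remain below $c_m/8$ as well, whence $\PP(\tau_m = T)\to 1$ (this is also what validates the well-posedness analysis of Appendix~\ref{wellposed}). Writing $\PP(\sup_{[0,T]}(|\Qtb_m-\Qb_m|+|\Vtb_m-\Vb_m|) \geq \eps) \leq \PP(\tau^\eps_m < T) + \PP(\tau_m < T)$ and letting $m \to 0$ yields the claim.

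I expect the main obstacle to be the feedback estimate of the second paragraph: the individual rates $\rh_m$ are of size $O(m^{-1})$, so naively their difference is also large, and one must show that the leading contributions cancel and leave a remainder genuinely proportional to the $O(\eps)$ displacement of the $f^1_m$-argument. Securing this cancellation, together with a sign-controlled treatment of the Heaviside threshold through the monotonicity of $f$, so that the feedback constant $C$ in the Grönwall inequality is independent of $m$, is the delicate point on which the whole self-improving argument rests.
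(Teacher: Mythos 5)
Your proposal is correct in substance and shares the paper's coupling, its stopping-time setup, its error decomposition (slow collisions plus removed/added fast jumps), and its analytic core, namely the change-of-variables cancellation, enabled by the monotonicity of $f$, which shows that
\begin{equation}
\int_{\SSph^{d-1}}\int \Vh\,\bigl|\rh_m(\Vh,\eb_{\rm n},\Qtb_m,\Vtb_m)-\rh_m(\Vh,\eb_{\rm n},\Qb_m,\Vb_m)\bigr|\,d\Vh\,d\Omega
\end{equation}
is proportional to the current discrepancy $\|A\|\,|\Qtb_m-\Qb_m|+|\Vtb_m-\Vb_m|$ up to an additive $O(c_m^2 m^{1/2})$ remainder (this is exactly the computation~\eqref{jm2_1}). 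Where you genuinely diverge from the paper is in how this feedback is converted into a probability estimate. The paper never uses Gr\"onwall: it proves an extension lemma (Lemma~\ref{thm:strong_extend}) on a time step $z$ chosen so small that the \emph{mean} discrepancy produced on $[t_0,t_0+z]$ is at most $\eps/24$, dominates the removed/added jumps by two honest compound Poisson processes $J_{m,1},J_{m,2}$ (built by enlarging jumps and extending integration regions, again via monotonicity of $f$), and obtains concentration from Chebyshev's inequality, using that $\EE_m(N_{\rm ex,2})\EE_m(\check{V}^2)=O(m^{1/2})$ while $\EE_m(J_{m,2})$ is bounded \emph{below} by $C\eps$; the result on $[0,T]$ then follows by iterating over finitely many steps of length $z$. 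Your route instead runs a single Gr\"onwall inequality in expectation over all of $[0,T]$ and concludes with Markov's inequality at the stopping time. This buys you: no choice of $z$, no iteration, no explicit compound-Poisson construction, and no need for a lower bound on the mean (which the paper's Chebyshev step requires). What it costs: to make the conditional-intensity bound rigorous you need the compensator (Doob--Meyer) identity for the jump processes, and you must replace the non-adapted event $G_m$ of~\eqref{gm} by an honest stopping time (the first exit of $(\Qtb_m,\Vtb_m)$ from the region $\|A\|\,|\Qtb_m|+|\Vtb_m|\le c_m/16$), capping your $\tau^\eps_m$ at it as well as at $\tau_m$, since the rate estimates are only valid while both processes are confined; with that modification, your resolution of the apparent circularity in proving $\PP(\tau_m=T)\to 1$ (mechanical exit before $T$ together with Markov boundedness forces a discrepancy of order $c_m\gg\eps$, which your Gr\"onwall bound rules out) is the same argument the paper makes with $G_m$ and~\eqref{Gmp1}. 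Both routes rely identically on Lemma~\ref{thm:norecollide} to guarantee that fast collisions occur at the clean rate~\eqref{bg_measure2} before $\tau_m$, and on the slow-collision estimate of Appendix~\ref{sec:slow}, which you correctly treat as a separate $O(c_m^2 m^{1/2})$ input.
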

To show Lemma~\ref{thm:strong}, we prove the following lemma which
says that the interval of convergence can be extended by a finite time
step.
\begin{lemma}
\label{thm:strong_extend}
If $t_0 \geq 0$ is such that 
\begin{equation}
\label{pconv_hyp}
\lim_{m \rightarrow 0} \PP \left( 
 \sup_{t \in [0,t_0]} \left|\Qtb_m(t) - \Qb_m(t)\right| + \left|\Vtb_m(t) - \Vb_m(t) \right|  \geq \eps \right) = 0
\end{equation}
for all $\eps >0$, we then have 
\begin{equation} 
\label{pconv_conc}
\lim_{m \rightarrow 0} \PP \left(
\sup_{t \in [0,t_0 + z]} \left|\Qtb_m(t) - \Qb_m(t)\right| + \left|\Vtb_m(t) - \Vb_m(t)\right|  \geq \eps \right) = 0
\end{equation} 
for all $\eps > 0$ where 
$z = \min\left(M (192 \lambda R^{d-1} S_{d-1} \Phi_1)^{-1},
\frac{1}{2 (1+\|A\|)}\right).$
\end{lemma}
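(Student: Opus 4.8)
The plan is to follow the coupling strategy of~\cite{durr80}, reducing the statement to a control of the velocity difference over the single short interval $[t_0,t_0+z]$ and iterating. Since the large particle obeys~\eqref{particle_dynamics} (which carries no flow term) between collisions, away from jumps the two processes satisfy $\tfrac{d}{dt}(\Qtb_m-\Qb_m)=\Vtb_m-\Vb_m$, so that
$$|\Qtb_m(t)-\Qb_m(t)|\leq |\Qtb_m(t_0)-\Qb_m(t_0)|+\int_{t_0}^t|\Vtb_m(s)-\Vb_m(s)|\,ds;$$
position differences are thus slaved to velocity differences with a factor $z$. By construction the shared fast-collision jumps are applied identically to both processes and cancel in the difference, so the velocity difference accumulated after $t_0$ is generated entirely by the three unshared families of jumps: the slow collisions $\bs=\sum_{t_i\in\is}\Vh_i\eb_{{\rm n},i}$, the rejected fast collisions $\bex=\sum_{t_i\in\iex}\Vh_i\eb_{{\rm n},i}$ (present in the mechanical but not the Markov process), and the added fast collisions $\bext=\sum_{t_i\in\iext}\Vh_i\eb_{{\rm n},i}$ (present in the Markov but not the mechanical process). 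I would introduce the stopping time $\sigma_m=\inf\{t\geq 0:|\Qtb_m(t)-\Qb_m(t)|+|\Vtb_m(t)-\Vb_m(t)|\geq\eps\}\wedge(t_0+z)\wedge\tau_m$, so that on $[0,\sigma_m)$ the two processes are $\eps$-close and stay in the region $\|A\|\,|\Qb_m|+|\Vb_m|\leq c_m/8$ where, by Lemma~\ref{thm:norecollide}, $\rh_m$ is the correct fast-collision rate; the convention that the difference is $\infty$ after $\tau_m$ absorbs $\{\tau_m<t_0+z\}$ into the target event $\{\sigma_m<t_0+z\}$.

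Next I would estimate the three families on $[t_0,\sigma_m]$. The slow collisions are handled by the estimate of Appendix~\ref{sec:slow}: the scaling $c_m^2m^{1/2}\to 0$ gives $\EE[\sup_{t\leq T}|\bs(t)|]\to 0$, disposing of $\bs$ after a Markov inequality. For the rejected and added fast collisions I would write $\bex$ and $\bext$ as compensated integer-valued jump integrals against the intensities $(1-p_{\rm keep})\,\rh_m(\Vh,\eb_{\rm n},\Qb_m,\Vb_m)$ and $r_{\rm add}(\Vh,\eb_{\rm n},\Qtb_m,\Vtb_m,\Qb_m,\Vb_m)$, each of which is pointwise dominated by $|\rh_m(\Vh,\eb_{\rm n},\Qtb_m,\Vtb_m)-\rh_m(\Vh,\eb_{\rm n},\Qb_m,\Vb_m)|$ by~\eqref{rate_rem}--\eqref{rate_add}. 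The central estimate is then the Lipschitz bound
$$\int_{\SSph^{d-1}}\int_{0}^{\infty}|\Vh|\,\bigl|\rh_m(\Vh,\eb_{\rm n},\Qtb_m,\Vtb_m)-\rh_m(\Vh,\eb_{\rm n},\Qb_m,\Vb_m)\bigr|\,d\Vh\,d\Omega\leq C\,(1+\|A\|)\bigl(|\Qtb_m-\Qb_m|+|\Vtb_m-\Vb_m|\bigr),$$
with a constant $C$ of order $\lambda R^{d-1}S_{d-1}\Phi_1/M$ that is \emph{independent of $m$}. Between the two evaluations the argument of $f^1_m$ in~\eqref{bg_measure2} is shifted only by $(\Vtb_m-\Vb_m)_{\rm n}-(A(\Qtb_m-\Qb_m))_{\rm n}$, and the hypothesis that $f$ is decreasing is exactly what converts this shift, together with the state-dependent lower limit $\Vhm$ of~\eqref{vhm} and the Heaviside factor, into a clean bound by the first moment $\Phi_1$. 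Doob's $L^2$ inequality controls the martingale parts of $\bex$ and $\bext$ via their quadratic variations $\int|\Vh|^2(\cdots)$, which carry an extra power of $m^{1/2}$ through~\eqref{Vt2} and hence vanish as $m\to 0$.

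Assembling these bounds and writing $h_m(t)=\EE\bigl[\sup_{s\in[0,t\wedge\sigma_m]}(|\Qtb_m(s)-\Qb_m(s)|+|\Vtb_m(s)-\Vb_m(s)|)\bigr]$, the compensator estimate together with the position--velocity slaving yields, for $t\in[t_0,t_0+z]$, a discrete Gronwall (contraction) inequality $\sup_{[t_0,t_0+z]}h_m\leq h_m(t_0)+o(1)+\tfrac12\sup_{[t_0,t_0+z]}h_m$, where $o(1)\to 0$ collects the slow-collision and martingale contributions. The two entries of $z=\min\bigl(M(192\lambda R^{d-1}S_{d-1}\Phi_1)^{-1},\tfrac{1}{2(1+\|A\|)}\bigr)$ are chosen precisely so that the two pieces of the coefficient are each small: the first entry bounds the velocity-discrepancy contribution of order $\tfrac{\gamma}{M}z$ (recall $\gamma=\tfrac{4\lambda R^{d-1}S_{d-1}}{d}\Phi_1$, whence the factor $192$), and the second bounds the position/flow contribution of order $(1+\|A\|)z$ arising from the shift $A(\Qtb_m-\Qb_m)$. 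Hence $\sup_{[t_0,t_0+z]}h_m\leq 2\bigl(h_m(t_0)+o(1)\bigr)$. Since the integrand is capped at $\eps$ on $[0,\sigma_m]$, the induction hypothesis~\eqref{pconv_hyp} upgrades from convergence in probability to $h_m(t_0)\to 0$ by bounded convergence, so $h_m(t_0+z)\to 0$; a final Markov inequality gives $\PP\bigl(\sup_{[0,t_0+z]}(|\Qtb_m-\Qb_m|+|\Vtb_m-\Vb_m|)\geq\eps\bigr)\leq h_m(t_0+z)/\eps\to 0$, which is~\eqref{pconv_conc}.

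I expect the Lipschitz estimate on $|\rh_m(\Qtb_m,\Vtb_m)-\rh_m(\Qb_m,\Vb_m)|$ to be the main obstacle. One must differentiate $\rh_m$ through three state-dependent ingredients simultaneously — the argument of $f^1_m$, the Heaviside cutoff, and the lower limit $\Vhm$ — and verify that the resulting constant is uniform in $m$, so that the contraction coefficient does not degenerate as $m\to 0$. This is exactly where the monotonicity of $f$ (replacing the bath-atom coupling of~\cite{durr80} by a coupling of velocity jumps) and the scaling of $c_m$ are indispensable.
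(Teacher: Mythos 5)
Your overall architecture is viable and genuinely different from the paper's in its central mechanism. Where you control the unshared jumps by a first-moment Lipschitz estimate on the rate difference, a compensated-martingale decomposition, Doob's inequality, and a Gronwall contraction, the paper instead dominates $W_{\rm ex}$ pathwise by two \emph{state-independent} compound Poisson processes $J_{m,1}$, $J_{m,2}$ (built by enlarging jumps and rates, using the monotonicity of $f$), then applies Markov's inequality to $J_{m,1}$ (whose mean is $O(c_m^2 m^{1/2})\to 0$) and Chebyshev's inequality to $J_{m,2}$ (whose mean is $\approx \frac{4\lambda R^{d-1}S_{d-1}}{M}z\eps\Phi_1\leq\eps/24$ by the choice of $z$, and whose jump second moment carries an extra $m^{1/2}$). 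Your key Lipschitz claim is correct and is not really an obstacle: it is exactly the computation the paper performs in~\eqref{jm2_1}--\eqref{jm2_2}, where monotonicity of $f$ converts the shift of the argument of $f^1_m$ into a bound linear in the state discrepancy with an $m$-uniform constant $\frac{4\lambda R^{d-1}S_{d-1}\Phi_1}{M}$, and the Heaviside band $[v_1,v_2]$ contributes only $O(c_m^2 m^{1/2})$. What the paper's route buys is that it never takes expectations of the supremum of the coupled difference, so it entirely avoids the overshoot and bounded-convergence technicalities your $h_m$-based argument must confront (the stopped difference is not literally ``capped at $\eps$'': at $\sigma_m$ it exceeds $\eps$ by one jump, and turning~\eqref{pconv_hyp} into $h_m(t_0)\to 0$ requires a separate, though routine, control of the largest jump).

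There is, however, one concrete gap. Your stopping time $\sigma_m$ includes $\wedge\,\tau_m$, and you correctly note that the convention puts $\{\tau_m<t_0+z\}$ inside the target event. But your final step, $\PP\bigl(\sup_{[0,t_0+z]}(\,\cdot\,)\geq\eps\bigr)\leq h_m(t_0+z)/\eps$, does not bound this part of the event: on $\{\tau_m<t_0+z\}$ with the difference staying below $\eps$ up to $\tau_m$, the \emph{stopped} supremum entering $h_m$ is still $<\eps$, so Markov's inequality on $h_m$ says nothing about it. Nothing in your Gronwall loop controls $\PP(\tau_m<t_0+z)$, i.e.\ the probability that the mechanical process exits the box $\|A\|\,|\Qb_m|+|\Vb_m|\geq c_m/8$ of~\eqref{stop_time_mech} before the coupling degrades. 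This is also needed upstream: your Lipschitz estimate and the identification of $\rh_m$ as the true fast-collision rate (Lemma~\ref{thm:norecollide}) are only valid while the mechanical process stays in that box.

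The paper closes exactly this hole with the good set $G_m$ of~\eqref{gm}: by the already-proved convergence of the Markov process (Lemma~\ref{thm:ma_to_ou}), $\PP(G_m)\to 1$ as in~\eqref{Gmp1}; and on $G_m$, if the two processes stay $\eps$-close then the mechanical process is confined to the $c_m/16+O(\eps)$ box, so $\tau_m$ cannot fire before the difference reaches $\eps/2$ --- whence $\{\tau^*_m<t_0+z\}\cap G_m$ forces one of the three jump-error events of~\eqref{set_decomp}. You would need to import this device (or an equivalent tightness statement for $(\Qtb_m,\Vtb_m)$) before your concluding Markov inequality; with it, and with the overshoot technicality handled, your Gronwall argument goes through.
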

Since the initial conditions for the mechanical process and Markov
process are the same, the hypothesis of Lemma~\ref{thm:strong_extend}
holds for $t_0=0$.  Thus, Lemma~\ref{thm:strong} follows immediately
from Lemma~\ref{thm:strong_extend}, and we are left with proving
Lemma~\ref{thm:strong_extend}.  The computation justifying the
particular choice of $z$ is performed in Appendix~\ref{sec:overlapping}.

\subsection{Error decomposition}
We prove Lemma~\ref{thm:strong_extend} by splitting the error in the
velocity into two contributions and bounding them individually.
We fix $\eps >0$ and $t_0 >0,$ and in the following we denote by 
$\ime' = \ime \cap [t_0,t_0+z],$ $\ima'=\ima \cap [t_0,t_0+z],$
and likewise for $\is',$
$\iex',$ and $\iext'.$ 
For times $t \in [t_0,\tau_m),$ the error in velocity 
between the Markov and mechanical processes is
\begin{equation*}
\begin{split}
\left| \Vb_m(t) - \Vtb_m(t) \right| 
  =& \left| \Vb_m(t_0) 
  + \sum_{\substack{t_i \in \ime'}} \Vh_i \eb_{{\rm n}, i} \ 
        \one_{[t_0, t]}(t_i)  
  - \Vtb_m(t_0) 
  - \sum_{\substack{t_i \in \ima'}} \Vh_i \eb_{{\rm n}, i} \
        \one_{[t_0, t]}(t_i) \right|,
\end{split}
\end{equation*} 
where we recall that 
$t_i$ is the time that the jump $\Vh_i \eb_{{\rm n}, i}$ occurs and 
where $\one_S$ denotes the characteristic function of the set 
$S$.

We split the error into two sources.  The first source of error is the
change in velocity due to slow collisions of the mechanical process,
since the Markov process does not include any slow collisions.
We denote this contribution in the interval $[t_0, t]$ by
\begin{equation}
\label{wslow}
W_{\rm slow}(t) = \left| \sum_{\substack{t_i \in \is'}}  \Vh_i 
\eb_{{\rm n}, i} \ \one_{[t_0, t]}(t_i) \right|.
\end{equation}
The second source of error is the change in velocity due to the 
jumps that we added and removed when coupling the Markov process to
the mechanical process.  We define
\begin{equation}
\label{wex}
W_{\rm ex}(t) = \sum_{t_i \in \iex'}  \Vh_i \ \one_{[t_0, t]}(t_i)
	  + \sum_{t_i \in \iext'} \Vh_i \ \one_{[t_0, t]}(t_i).
\end{equation} 
By definition, $W_{\rm ex}(t) \geq 0$ for all $t$ (recall that $\Vh \geq
0$).  
The error terms $W_{\rm slow}$ and $W_{\rm ex}$
are random variables.

We decompose the error in the velocity as 
\begin{equation}
\label{err_decomp}
\begin{split} 
| \Vb_m(t) - \Vtb_m(t) |  
=& \left| \Vb_m(t_0) 
 + \sum_{t_i \in \ime'} \Vh_i  \eb_{{\rm n}, i} \one_{[t_0, t]}(t_i) 
- \Vtb_m(t_0) - \sum_{t_i \in \ima'} \Vh_i  \eb_{{\rm n}, i} 
                          \one_{[t_0, t]}(t_i)\right|\\[6pt]
\leq& \left| \Vb_m(t_0) - \Vtb_m(t_0) \right|  + \left| \sum_{t_i \in \is'} 
  \Vh_i  \eb_{{\rm n}, i} \one_{[t_0, t]}(t_i) \right|  \\
&+ \sum_{t_i \in \iext'} \left| \Vh_i  \eb_{{\rm n}, i} 
                          \one_{[t_0, t]}(t_i) \right|
 +  \sum_{t_i \in \iex'} \left|\Vh_i  \eb_{{\rm n}, i} 
                          \one_{[t_0, t]}(t_i) \right|  \\
=& \left| \Vb_m(t_0) - \Vtb_m(t_0) \right| + W_{\rm slow}(t) + W_{\rm ex}(t).
\end{split} 
\end{equation} 

We fix $\eps > 0$ and, as in~\cite{durr80}, define the stopping time
\begin{equation}
\label{vel_close}
\tau^*_m= \min\left(\inf_{t \in [t_0,\tau_m)} 
\left\{t : \left|\Vtb_m(t) - \Vb_m(t) \right| \geq \eps/2 \right\},
               t_0+z\right).
\end{equation}
The use of a stopping time gives us a bound on how close the 
mechanical and Markov processes are, which allows us to then 
bound the difference in jumps each process experiences.
Provided that we have $\left| \Qtb_m(t_0) - \Qb_m(t_0)\right| \leq \eps/4,$ 
which we can assume from the hypothesis in
Lemma~\ref{thm:strong_extend}, 
we have that
\begin{equation}
\label{pos_close}
\begin{split}
&\sup_{t \in [t_0, \tau^*_m)} \left|\Qtb_m(t) - \Qb_m(t) \right| 
 \leq \left| \Qtb_m(t_0) - \Qb_m(t_0)\right| 
   + z \sup_{t \in [t_0, \tau^*_m)} \left|\Vtb_m(t) - \Vb_m(t) \right|
\leq \eps/2,
\end{split}
\end{equation}
where we recall $z \leq \frac{1}{2 (1+\|A\|)}.$
Thus, we have the relations
\begin{align}
&\sup_{t \in [t_0, \tau^*_m)} \left| \Qtb_m(t) - \Qb_m(t)\right| +
\left|\Vtb_m(t) - \Vb_m(t) \right|  \leq \eps, \label{l2norm_small} \\
&\sup_{t \in [t_0, \tau^*_m)} \|A\| \, \left| \Qtb_m(t) - \Qb_m(t)\right| +
\left|\Vtb_m(t) - \Vb_m(t) \right|  \leq \eps. \label{anorm_small}
\end{align} 
From the above arguments, 
we may show Lemma~\ref{thm:strong_extend} by showing that
\begin{equation}
\label{tstar_con}
 \lim_{m \rightarrow 0} \PP( \{\tau^*_m < t_0 + z\}) = 0.
\end{equation}

It is difficult to estimate {\em {a priori}} the probability
$\PP(\tau_m < t_0 + z),$ which appears indirectly in the estimate
of $\PP(\tau^*_m < t_0 + z)$ through the definition of 
$\tau^*_m$~\eqref{vel_close}.  To aid in the estimate we
define the set of trajectories where the 
Markov process remains small compared 
to~$c_m$ is 
\begin{equation}
\label{gm}
G_m = \left\{\sup_{0 \leq t \leq T} \left( \|A\| \ 
\left|\Qtb_m(t)\right| + \left|\Vtb_m(t)\right| \right) \leq c_m/16 \right\}.
\end{equation} 
From the convergence of the Markov process in
Lemma~\ref{thm:ma_to_ou}, we have that 
\begin{equation}
\label{Gmp1}
\lim_{m\rightarrow 0} \PP(G_m) = 1. 
\end{equation}
By the triangle inequality, we see that for trajectories
belonging to $G_m,$ $\tau_m^* < t_0 + z \Rightarrow 
\left|\Vtb_m(\tau^*) - \Vb_m(\tau^*) \right| \geq \eps/2.$
Thus, in order to have~\eqref{tstar_con}, it is sufficient to show that 
\begin{equation}
\label{gm_cap}
\lim_{m \rightarrow 0} \PP( \{\tau^*_m < t_0 + z\} \cap G_m) = 0.
\end{equation}

To show~\eqref{gm_cap}, we use~\eqref{err_decomp} to break apart the terms:
\begin{equation}
\label{set_decomp}
\begin{split}
\{\tau^*_m < t_0 + z\} \cap G_m \subset 
     & \left\{\{W_{\rm slow}(\tau^*_m) \geq \eps/6\}\cap G_m \right\}
     \cup \left\{\{W_{\rm ex}(\tau^*_m) \geq \eps/6\} \cap G_m \right\} \\
     &\cup \left\{\left|\Vtb_m(t_0) - \Vb_m(t_0)\right| \geq \eps/6 \right\}.
\end{split}
\end{equation}  
In terms of probability, we have
\begin{equation}
\label{prob_decomp}
\begin{split}
\PP(\{\tau^*_m < t_0 + z\} \cap G_m) 
\leq& \, \PP(\{W_{\rm slow}(\tau^*_m) \geq \eps/6\} \cap G_m)
        +\PP(\{W_{\rm ex}(\tau^*_m) \geq \eps/6\} \cap G_m)\\
&+\PP\left(\left|\Vtb_m(t_0) - \Vb_m(t_0)\right| \geq \eps/6 \right).
\end{split}
\end{equation}  
By the hypothesis of Lemma~\ref{thm:strong_extend}, 
$$
\lim_{m \rightarrow 0} \PP\left(\left\{\left|\Vtb_m(t_0) - \Vb_m(t_0)\right| \geq \eps/6 \right\}\right) = 0.
$$
We prove convergence of the other two terms in the following sections.

\subsection{Different collision rates}
\label{sec:collision_rate}

Since we restrict our attention to times $t < \tau_m,$ Appendix~\ref{norecollide}
shows that the rate of jumps due to fast collisions 
in the mechanical process is given by~\eqref{bg_measure2}.  We now estimate  
\begin{equation*}
W_{\rm ex}(\tau^*_m) = \sum_{t_i \in \iex'} \Vh_i  \ \one_{[t_0,
\tau^*_m]}(t_i) 
+ \sum_{t_i \in \iext'} \Vh_i \ \one_{[t_0, \tau^*_m]}(t_i),
\end{equation*}
which bounds the total effect of the jumps that were added and removed
in the coupling process using the expressions in~\eqref{rate_rem}
and~\eqref{rate_add}.  
These jumps have the rate function
\begin{equation*}
\begin{split}
r_{\rm ex}\left(\Vh, \eb_{\rm n}, \Qtb_m, \Vtb_m,\Qb_m, \Vb_m\right)
= \left| \rh_m\left(\Vh, \eb_{\rm n}, \Qtb_m, \Vtb_m\right) 
          - \rh_m\left(\Vh, \eb_{\rm n}, \Qb_m, \Vb_m\right)\right|.
\end{split}
\end{equation*}
The goal of this subsection is to show the following.
\begin{lemma}
\label{thm:wex}
Fix $\eps > 0$, and let $\tau^*_m$ be given by~\eqref{vel_close}. Let
$G_m$ be given by~\eqref{gm},
and let $W_{\rm ex}$ be given by~\eqref{wex}.  We have that
\begin{equation*} 
\lim_{m \rightarrow 0}
\PP(\{W_{\rm ex}(\tau^*_m) \geq \eps/6\} \cap G_m) = 0.
\end{equation*}
\end{lemma}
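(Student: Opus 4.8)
The plan is to treat $W_{\rm ex}(\tau^*_m)$ as the sum of jumps of a marked point process whose predictable intensity is the rate difference $r_{\rm ex}$, and to control it by a first- and second-moment estimate rather than by Markov's inequality alone. First I would write $W_{\rm ex}(\tau^*_m)=A_{\tau^*_m}+M_{\tau^*_m}$, where the compensator is $A_t=\int_{t_0}^t\int_{\SSph^{d-1}}\int_{\Vhm}^\infty \Vh\, r_{\rm ex}(\Vh,\eb_{\rm n},\Qtb_m,\Vtb_m,\Qb_m,\Vb_m)\,d\Vh\,d\Omega\,ds$ and $M_t$ is a mean-zero martingale with predictable quadratic variation $\langle M\rangle_t=\int_{t_0}^t\int_{\SSph^{d-1}}\int_{\Vhm}^\infty \Vh^2\, r_{\rm ex}\,d\Vh\,d\Omega\,ds$. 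Here I use that, by the no-recollision result of Appendix~\ref{norecollide}, the fast collisions of the mechanical process genuinely occur with rate $\rh_m(\Vh,\eb_{\rm n},\Qb_m,\Vb_m)$, so that the construction \eqref{rate_rem}--\eqref{rate_add} does produce exceptional jumps with intensity $r_{\rm ex}=|\rh_m(\cdot,\Qtb_m,\Vtb_m)-\rh_m(\cdot,\Qb_m,\Vb_m)|$.

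The core is to bound the inner $(\Vh,\eb_{\rm n})$-integral of $\Vh\, r_{\rm ex}$ on the event $G_m\cap\{s<\tau^*_m\}$. There, \eqref{l2norm_small}--\eqref{anorm_small} force the two arguments $a=V_{\rm n}-(A\qb)_{\rm n}$ and $\tilde a=\Vtb_{m}\cdot\eb_{\rm n}-(A(\Qtb_m-R\eb_{\rm n}))\cdot\eb_{\rm n}$ of $\rh_m$ to differ by at most $\eps$ (the shift $R\eb_{\rm n}$ cancels), while \eqref{gm} keeps $\|A\|\,|\Qtb_m|+|\Vtb_m|\le c_m/16$. After the change of variables $x=m^{1/2}\left(\frac{M+m}{2m}\Vh+\alpha\right)$ used throughout Appendix~\ref{ma_to_ou}, the integral splits into a smooth piece, where both Heaviside factors in \eqref{bg_measure2} are active, and a ``strip'' piece, where only one is active because of the mismatch $|a-\tilde a|\le\eps$ in the cutoff \eqref{vhm}. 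For the smooth piece I would exploit that $f$, hence $f^1$, is decreasing: the monotone identity $\int_0^\infty|f^1(x+\delta_1)-f^1(x+\delta_2)|\,dx=\int_{\delta_1}^{\delta_2}f^1$ avoids any derivative of $f^1$ and gives a bound by a multiple of $\eps$, with constant proportional to the drift constant $\frac{\lambda R^{d-1}S_{d-1}}{M}\Phi_1$. The strip piece carries the large weight $\Vh^2\sim(m c_m)^2$ but over a $y$-set of width $|a-\tilde a|\le\eps$ near $y\approx c_m$, hence is bounded by a multiple of $m\,c_m^2\,\eps=m^{1/2}\,(c_m^2 m^{1/2})\,\eps$, which tends to $0$ by the defining property $c_m^2 m^{1/2}\to0$ of Definition~\ref{fast}. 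Combining, on $G_m$ one gets a \emph{deterministic} bound $A_{\tau^*_m}\le (C\eps+o(1))\,z$, and the choice of $z$ made precise in Appendix~\ref{sec:overlapping} is exactly what forces $Cz<1/12$, so that $A_{\tau^*_m}<\eps/12$ on $G_m$ for all sufficiently small $m$.

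The same computation with one extra factor of $\Vh$ controls the quadratic variation: in the scaled variable the extra factor $\frac{2m}{M+m}\Vh$ contributes an additional power $m^{1/2}$, so that $\EE[\langle M\rangle_{\tau^*_m}\one_{G_m}]=O(m^{1/2}\eps)+O(m^{7/5}\eps)\to0$ (smooth and strip parts respectively). I would then conclude by writing $\{W_{\rm ex}(\tau^*_m)\ge\eps/6\}\cap G_m\subset\{A_{\tau^*_m}\ge\eps/12\}\cup\{|M_{\tau^*_m}|\ge\eps/12\}$: the first event is empty on $G_m$ for small $m$ by the deterministic compensator bound, and the second satisfies $\PP(|M_{\tau^*_m}|\ge\eps/12)\le\frac{144}{\eps^2}\EE[M_{\tau^*_m}^2]=\frac{144}{\eps^2}\EE[\langle M\rangle_{\tau^*_m}]\to0$ by Chebyshev's inequality, which yields $\lim_{m\to0}\PP(\{W_{\rm ex}(\tau^*_m)\ge\eps/6\}\cap G_m)=0$.

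I expect the main obstacle to be precisely the strip piece produced by the Heaviside cutoff. Because the coupling \eqref{rate_rem}--\eqref{rate_add} takes the \emph{absolute value} of the rate difference, the odd-in-$\eb_{\rm n}$ cancellation $\int_{\SSph^{d-1}}\eb_{\rm n}\,d\Omega=0$ that rescues the drift computation in Appendix~\ref{ma_to_ou} is unavailable, so the first moment of $W_{\rm ex}$ is only $O(\eps)$ rather than $o(1)$. This is why the argument cannot rest on Markov's inequality alone and must instead place the compensator strictly below the threshold $\eps/12$ through the tailored $z$, while showing the fluctuations vanish; verifying the strip estimate together with the precise numerical constant entering $z$ is where the scaling $c_m=m^{-1/5}$ and the monotonicity assumption on $f$ both do essential work.
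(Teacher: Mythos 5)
Your proposal reproduces the analytical core of the paper's own proof: the same splitting of $r_{\rm ex}$ into a ``strip'' region where the two Heaviside cutoffs disagree (of width at most $\tfrac{2m}{M+m}\eps$ by~\eqref{anorm_small}, whose contribution vanishes because $c_m^2 m^{1/2}\to 0$) and a ``smooth'' region where the monotonicity of $f^1$ converts the difference of shifted densities into a boundary-layer integral of size $O(\eps)$ with constant proportional to $\lambda R^{d-1}S_{d-1}\Phi_1/M$; the same recognition that the first moment is $O(\eps)$ rather than $o(1)$, so that the specific choice of $z$ in Lemma~\ref{thm:strong_extend} is what pushes the mean below the threshold $\eps/12$; and the same $O(m^{1/2})$ second-moment bound feeding a Chebyshev estimate. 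Where you differ is in the probabilistic packaging: the paper does not use a compensator/martingale (Doob--Meyer) decomposition, but instead dominates $W_{\rm ex}$ pathwise, on $G_m$, by two \emph{state-independent} compound Poisson processes $J_{m,1}$ and $J_{m,2}$ (obtained by enlarging and shifting jumps, again via monotonicity of $f^1$), and then applies Markov's inequality to $J_{m,1}$ and Chebyshev's inequality to $J_{m,2}$, whose moments are computed unconditionally.

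This difference is not cosmetic, and it is exactly where your argument has a gap. Your Chebyshev step uses $\PP(|M_{\tau^*_m}|\ge\eps/12)\le\tfrac{144}{\eps^2}\,\EE\bigl[M_{\tau^*_m}^2\bigr]=\tfrac{144}{\eps^2}\,\EE\bigl[\langle M\rangle_{\tau^*_m}\bigr]$, which requires the \emph{unrestricted} expectation of the quadratic variation; but your bound on $\langle M\rangle_{\tau^*_m}$ was established only on $G_m\cap\{s<\tau^*_m\}$, and $\EE\bigl[M^2\,\one_{G_m}\bigr]\neq\EE\bigl[\langle M\rangle\,\one_{G_m}\bigr]$ in general, so you cannot simply intersect with $G_m$ either. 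Off $G_m$ (or after $\tau_m$, where the mechanical process ceases to be controlled and the extended Markov process may leave the ball of radius $c_m/16$), the intensity $r_{\rm ex}$ grows with the state, so the global expectation is not controlled by your estimate. The fix is standard --- stop both processes at the exit time of the bounded region (and at $\tau_m$), observe that on $G_m$ this stopping is inactive, and run the martingale argument for the stopped process --- but it must be carried out. The paper's compound-Poisson domination is precisely the device that makes such localization unnecessary: the dominating processes have deterministic rates, so $\PP(\{J_{m,k}\ge\eps/12\}\cap G_m)\le\PP(J_{m,k}\ge\eps/12)$ is estimated by unconditional moment computations.
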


\begin{proof}
In the following, we bound the error $W_{\rm ex}$ in terms of a pair
of compound Poisson processes, whose definition we recall below.
The error  $W_{\rm ex}$ is
not a compound Poisson process itself, due to the dependence of
$r_{\rm ex}$ on $\Qtb_m, \Vtb_m, \Qb_m, \Vb_m,$ and $\eb_{\rm n},$ 
but we build a compound Poisson process from
realizations of $W_{\rm ex},$  by increasing the size of
certain jumps as well as adding additional jumps as detailed below.

\subsubsection{Compound Poisson Processes}

We begin by recalling a few properties of compound Poisson processes
before proving our bound on $W_{\rm ex}.$  (Definitions may be found
for instance in~\cite{cont04}.)
\begin{definition}
A {\em {compound Poisson process}} $J$ is a stochastic process defined
in terms of its {\em{ rate function}} $r(\check{V})$ by
\begin{equation*}
J(t) = \sum_{i=1}^{N(t)} \check{V}_i
\end{equation*}
where $N(t)$ is a Poisson process with rate
$\Lambda := \int_{\RR} r(\check{V}) \, d \check{V}$ that is
independent from the jumps $\{\check{V}_i\}_{i=1}^{N}$ which are 
independent, identically distributed 
random variables with probability density function
$\Lambda^{-1} r(\check{V}).$
\end{definition}

The expected value of a compound Poisson process satisfies
\begin{equation}
\label{comp_ev}
\begin{split}
\EE(J(t)) &= \EE( N(t)) \EE( \check{V} )
= t \int_\RR \check{V} r(\check{V}) \, d \check{V} \, dt. 
\end{split}
\end{equation}
Using the independence of $\{\check{V}_i\}_{1 \leq i \leq N_{\rm ex}}$ 
and $N_{\rm ex},$ we can simplify the variance
\begin{equation}
\label{comp_var}
\begin{split}
\EE( (J(t) - \EE(J(t)))^2) 
&= \EE\left( \left(\sum_{i=1}^{N(t)} \check{V}_i - \EE(J(t))\right)^2\right) \\
&= \EE\left( \left(\sum_{i=1}^{N(t)} \check{V}_i 
         - \EE(N(t))\EE(\check{V})\right)^2\right) \\
&= \EE\left( \left(\sum_{i=1}^{N(t)} 
      (\check{V}_i - \EE(\check{V})) + [N(t) - \EE(N(t))]
       \EE(\check{V})\right)^2\right) \\
&= \EE\left( \sum_{i=1}^{N(t)} (\check{V}_i - \EE(\check{V}))^2\right) 
	+ \EE\left( (N(t) - \EE(N(t))^2\right)\EE(\check{V})^2 \\
&= \EE(N(t)) \EE([\check{V} - \EE(\check{V})]^2) 
	+ \EE(N(t)) \EE(\check{V})^2 \\
&= \EE(N(t)) \EE(\check{V}^2)\\
&= t \int_\RR \check{V}^2 r(\check{V}) \, d \check{V}. 
\end{split}
\end{equation}
To go from the third to the fourth line, we have used
independence of $N(t)$ and $\check{V_i}$, 
and to go from the fourth to the fifth we have
used the independence of the jumps $\{\check{V}_i\}_{i=1}^{N(t)}$
and the following property
of the variance of exponential distributions:
\begin{equation*}
\EE\left( (N(t) - \EE(N(t))^2\right)
=\EE(N(t)). 
\end{equation*}

\subsubsection{Building the auxiliary process}
Since there are Heaviside functions within the definition of $\rh_m,$
we consider three regions: one where both rate functions are nonzero,
a second where one rate function is identically zero, and a third
where they are both identically zero.  For $t \in [t_0, \tau^*_m),$ we
define
\begin{equation*}
\begin{split}
v_1 &= \frac{2 m}{M+m} \min\left(\left( c_m - \Vt_{\rm n} 
      + (A\qtb)_{\rm n} \right),
      \left( c_m - V_{\rm n} + (A\qb)_{\rm n} \right) \right) \\
v_2 &= \frac{2 m}{M+m} \max\left(\left( c_m - \Vt_{\rm n} 
      + (A\qtb)_{\rm n} \right),
      \left( c_m - V_{\rm n} + (A\qb)_{\rm n} \right) \right) 
\end{split}
\end{equation*}
where we recall that $\qb = \Qb_m - R \eb_{\rm n}$ 
and $\qtb = \Qtb_m - R \eb_{\rm n}$ 
are the locations of the collision on the surface of the respective
atoms.  Note that 
$\frac{2m}{M+m} \frac{7 c_m}{8} \leq v_1 
            \leq v_2 \leq \frac{2m}{M+m} \frac{9 c_m}{8}$
from~\eqref{gm} and~\eqref{stop_time_mech}, 
and $v_2 - v_1 \leq \frac{2m}{M+m} \eps$ from~\eqref{anorm_small}.  

In the range $v_1 \leq \Vh \leq v_2,$ we can directly bound 
$r_{\rm ex}$ using the monotonicity of $f$ 
\begin{equation*}
\begin{split}
r_{\rm ex}(\Vh, \eb_{\rm n}, \Qtb_m, \Vtb_m,\Qb_m, \Vb_m) 
&= C_m m^{-5/2} \Vh 
    f^1_m\left(\frac{M+m}{2m} (\Vh-v_1) + c_m \right) \\
&\leq C_m m^{-5/2} \Vh 
    f^1_m\left(\frac{M+m}{2m} \Vh - \frac{c_m}{8} \right) 
\end{split}
\end{equation*}
where 
\begin{equation}
\label{Cm_couple}
C_m = \lambda R^{d-1} \left(\frac{M+m}{2}\right)^2.
\end{equation}
We integrate out $\eb_{\rm n},$ giving the factor $S_{d-1},$
 and define the rate 
\begin{equation*}
\begin{split}
r_1(\Vh) &= C_m S_{d-1} m^{-5/2} \Vh 
    f^1_m\left(\frac{M+m}{2m} \Vh - \frac{c_m}{8} \right).
\end{split}
\end{equation*}
To construct the Poisson process, we add additional jumps in $\Vh$ with the (positive) rate 
$$r_1(\Vh) - \int_{\SSph^{d-1}} 
        r_{\rm ex}(\Vh, \eb_{\rm n}, \Qtb_m, \Vtb_m,\Qb_m, \Vb_m) \ d \Omega.$$

For $v_2 \leq \Vh < \infty,$ we have the following estimate:
\begin{equation*}
\begin{split}
r_{\rm ex}(&\Vh, \eb_{\rm n}, \Qtb_m, \Vtb_m,\Qb_m, \Vb_m)\\ 
&= C_m m^{-5/2} \Vh   \left| 
    f^1_m\left(\frac{M+m}{2m}\Vh + \Vt_{\rm n} - (A \qtb)_{\rm n} \right) \right. 
\left. - f^1_m\left(\frac{M+m}{2m}\Vh + V_{\rm n} - (A \qb)_{\rm n} \right) \right|  \\
&= C_m m^{-5/2} \Vh   \left[ 
    f^1_m\left(\frac{M+m}{2m} (\Vh - v_2) + c_m \right) \right. 
     \left. - f^1_m\left(\frac{M+m}{2m} (\Vh -v_1) + c_m\right) \right] \\
&\leq C_m m^{-5/2} \Vh   \left[ 
    f^1_m\left(\frac{M+m}{2m} (\Vh - v_2) + c_m \right) \right. 
     \left. - f^1_m\left(\frac{M+m}{2m} (\Vh -v_2) + c_m + \eps \right) \right]
\end{split}
\end{equation*}
where the inequality uses the monotonicity of $f^1.$ We cannot
directly eliminate $v_2$ from the argument by relying on monotonicity
arguments since we
have a difference of $f,$ so we shift by replacing jumps generated by
the process corresponding to $r_{\rm ex}$ by larger jumps.   For any
jump of size $v_2 \leq \Vh < \infty,$ we replace it with a new, larger jump of
size $\check{V} = \Vh + \frac{2m}{M+m} \frac{9 c_m}{8} - v_2.$ We also
increase the rate of jumps by a factor of $\frac{\check{V}}{\Vh}$ and
integrate out $\eb_{\rm n},$ giving the rate
\begin{equation*}
\begin{split}
r_2(\check{V}) &= C_m S_{d-1} m^{-5/2} \check{V}   \left[ 
    f^1_m\left(\frac{M+m}{2m} \check{V} - \frac{c_m}{8} \right) 
  - f^1_m\left(\frac{M+m}{2m} \check{V} - \frac{c_m}{8} + \eps\right) \right].
\end{split}
\end{equation*}

The limits of the regions above, $v_1$ and $v_2,$ also depend on 
$(\Qb_m,\Vb_m)$ and $(\Qtb_m, \Vtb_m)$.  To remove this dependence
we extend the interval of definition for the two rate functions
defined above to 
$\frac{2m}{M+m} \frac{7 c_m}{8} \leq \Vh 
               \leq \frac{2m}{M+m} \frac{9 c_m}{8}$  for $r_1$ 
and to $\frac{2m}{M+m} \frac{7 c_m}{8} \leq \check{V} < \infty$
for $r_2.$  This can be done by adding additional jumps to the 
auxiliary Poisson processes above. 
All told, we have build realizations of two compound Poisson processes
\begin{equation*}
J_{m,k}= \sum_{i=1}^{N_{{\rm ex}, k}} \check{V}_i  
\ \one_{[t_0, t_0+z]}(t_i) 
\qquad \text{ for } k=1,2,
\end{equation*}
with the respective rate functions
\begin{equation*}
\begin{split}
\check{r}_{\rm ex,1}(\check{V}) &= r_1(\check{V}) \ \one_{[\frac{7 c_m}{8},\frac{9 c_m}{8}]} 
\left( \frac{M+m}{2m} \check{V}\right), \\
\check{r}_{\rm ex,2}(\check{V}) &= r_2(\check{V}) \ 
\one_{[\frac{7 c_m}{8},\infty)} \left( \frac{M+m}{2m} \check{V}\right),
\end{split}
\end{equation*}
where we have used the dummy variable $\check{V}$ for both
processes and where we suppress dependence on time in the notation $J_{m,k}$.

For trajectories in $G_m,$ the Poisson processes $J_{m,1}$ and $J_{m,2}$ provide
the bound
\begin{equation*}
\begin{split}
W_{\rm ex}(\tau^*_m) &\leq J_{m,1}  + J_{m,2}. 
\end{split}
\end{equation*} 
Therefore,
\begin{equation*}
\begin{split}
\PP(\{W_{\rm ex}(\tau^*_m) \geq \eps/6\} \cap G_m)
\leq& \, \PP(\{J_{m,1} \geq \eps/12\} \cap G_m) + \PP(\{J_{m,2} \geq \eps/12\} \cap G_m).
\end{split}
\end{equation*}
To show that 
$\PP(\{W_{\rm ex}(\tau^*_m) \geq \eps/6\} \cap G_m) \rightarrow 0$
in~\eqref{prob_decomp}, we show below that
$\PP(\{J_{m,k} \geq \eps/12\} \cap G_m) \rightarrow 0$ for $k=1,2.$

\subsubsection{Bounding $J_{m,1}$}
Applying Markov's inequality, we have 
\begin{equation*}
\begin{split}
\PP(\{J_{m,1} \geq \eps/12\} \cap G_m) &\leq \frac{12}{\eps} \EE_m(J_{m,1}).
\end{split}
\end{equation*}
We make the substitution 
$x = \frac{M+m}{2m^{1/2}} \check{V} - m^{1/2} \frac{c_m}{8}$ and
bound the integrand on the finite interval to obtain the bound
\begin{equation*}
\begin{split}
 \EE_m(J_{m,1}) &= ((t_0+z)-t_0)
\int_{0}^{\infty}
          \check{V} r_{\rm ex, 1}(\check{V}) \, d\check{V}          \\
&= C_m S_{d-1} z m^{-5/2}  \int_{\frac{2m}{M+m}\frac{7
c_m}{8}}^{\frac{2m}{M+m}\frac{9c_m}{8}}  \check{V}^2
f^1_m\left(\frac{M+m}{2m} \check{V}  - \frac{c_m}{8} \right) \, d \check{V} \\
&= \frac{8 C_m S_{d-1} z m^{1/2}}{(M+m)^3}
    \int_{\frac{3}{4} m^{1/2} c_m}^{m^{1/2} c_m}  
\left(\frac{x}{m^{1/2}} + \frac{c_m}{8}\right)^2 f^1\left(x\right) \, d x \\
&\leq \frac{8 C_m S_{d-1} z m^{1/2}}{(M+m)^3}
 \left(\frac{9 c_m}{8} \right)^2 
 \int_{0}^{\infty}  
f^1\left(x\right) \, d x \\
&\leq \frac{81 \lambda R^{d-1} S_{d-1} z}{32 (M+m)} m^{1/2} c_m^2,
\end{split}
\end{equation*}
where we recall the scaling of $f_m$~\eqref{fv1}.
We recall that $c_m = m^{-1/5}$ so that for any finite $z$ we have
$$\lim_{m \rightarrow 0} \EE_m(J_{m,1}) = 0.$$
Thus, we conclude
\begin{equation*}
\lim_{m \rightarrow 0} \PP(\{J_{m,1} \geq \eps/12\} \cap G_m) = 0.
\end{equation*}

\subsubsection{Bounding  $J_{m,2}$}
\label{sec:overlapping}
We now wish to show
\begin{equation*}
\lim_{m \rightarrow 0} \PP(\{J_{m,2} \geq \eps/12\} \cap G_m) = 0.
\end{equation*}
For this estimate, we first show the bound $\EE_m(J_{m,2}) \leq
\eps/24$ before invoking Chebyshev's inequality.  We have that
\begin{equation}
\label{jm2_1}
\begin{split}
\EE_m(J_{m,2}) &= z 
\int_{0}^\infty \check{V} \check{r}_{\rm ex, 2}(\check{V}) \, d\check{V} \\
&= C_m  S_{d-1} z m^{-5/2}  
\int_{\frac{2 m}{M+m} \frac{7 c_m}{8}}^\infty  \check{V}^2 \left[  f^1_m\left(\frac{M+m}{2m}\check{V} - \frac{c_m}{8} \right) 
 - f^1_m\left(\frac{M+m}{2m}\check{V} - \frac{c_m}{8} + \eps\right) \right] d \check{V} \\
&= \frac{8 C_m  S_{d-1} }{(M+m)^3} z m^{1/2}  
\left[ \int_{\frac{3}{4} m^{1/2} c_m}^\infty  
\left(\frac{x}{m^{1/2}} + \frac{c_m}{8}\right)^2 
f\left(x\right) \, dx \right. \\
&\qquad \qquad \left. 
- \int_{m^{1/2} (\frac{3}{4} c_m + \eps)}^\infty  
\left(\frac{x}{m^{1/2}} + \frac{c_m}{8} - \eps\right)^2 
f\left(x\right) \, dx \right] \\
&= \frac{8 C_m  S_{d-1} }{(M+m)^3} z m^{1/2}  
\left[ \int_{0}^\infty  
\left( 2 \eps \left(\frac{x}{m^{1/2}} + \frac{c_m}{8}\right) - \eps^2 \right)
f\left(x\right) \, dx \right. \\
&\qquad \qquad - \int_0^{\frac{3}{4} m^{1/2} c_m}  
\left(\frac{x}{m^{1/2}} + \frac{c_m}{8}\right)^2 
f\left(x\right) \, dx 
\left. + \int_{0}^{m^{1/2} (\frac{3}{4} c_m + \eps)}
\left(\frac{x}{m^{1/2}} + \frac{c_m}{8} - \eps \right)^2 
f\left(x\right) \, dx \right]  \\
&= \frac{4 \lambda R^{d-1}   S_{d-1}}{M} z \eps \Phi_1
+ O(c_m^2 m^{1/2}) 
\end{split}
\end{equation}
where we made two changes of variables, 
$\frac{x}{m^{1/2}}=\frac{M+m}{2m}\check{V} - \frac{c_m}{8}$ 
and $\frac{x}{m^{1/2}}=\frac{M+m}{2m}\check{V} - \frac{c_m}{8} + \eps.$ 
We recall $\lim_{m \rightarrow 0} c_m^2 m^{1/2} = 0,$ so that only
the leading order term remains in the limit.

Now, by our choice of 
$z = \min\left(M (192 \lambda R^{d-1} S_{d-1} \Phi_1)^{-1},
\frac{1}{2 (1+\|A\|)}\right)$ in
Lemma~\ref{thm:strong_extend}, we have
\begin{equation}
\label{jm_est}
\EE_m(J_{m,2}) \leq \eps/24
\end{equation} whenever $m$ is sufficiently small. 
We later need a lower bound on the expected value, $\EE_m(J_{m,2}) \geq C
\eps,$ where it is not necessary to precisely find the constant 
so long as $C$ does not depend on $\eps$ or $m.$  Such a bound follows
from the above equations and the definition of $z.$ 

Applying Chebyshev's
inequality and using~\eqref{jm_est}, we have
\begin{equation*}
\begin{split}
\PP(\{J_{m,2} \geq \eps/12\} \cap G_m) 
   &\leq \PP(\{J_{m,2} \geq \eps/12\})  \\
   &\leq \PP(\{J_{m,2} \geq 2 \EE_m(J_{m,2})\}) \\
   &= \PP(\{J_{m,2} - \EE_m(J_{m,2}) \geq \EE_m(J_{m,2})\}) \\
   &\leq \frac{ \EE_m( (J_{m,2} - \EE_m(J_{m,2}))^2)}{\EE_m(J_{m,2})^2}\\
   &= \frac{\EE_m(N_{\rm ex,2}) \EE_m(\check{V}^2)}{\EE_m(J_{m,2})^2},
\end{split}
\end{equation*}
where the final inequality follows from~\eqref{comp_var}.

We can then estimate $\EE_m(N_{\rm ex,2}) \EE_m(\check{V}^2)$ with a
similar computation as for $\EE_m(J_{m,2}):$
\begin{equation}
\label{jm2_2}
\begin{split}
\EE_m(N_{\rm ex,2}) &\EE_m(\check{V}^2) 
= \int_{t_0}^{t_0 + z} 
\int_{0}^\infty \check{V}^2 \check{r}_{\rm ex, 2}(\check{V}) \, d\check{V} dt \\
&= \frac{16 C_m  S_{d-1}}{(M+m)^4} z m^{3/2}  
\left[ \int_{\frac{3}{4} m^{1/2} c_m}^\infty  
\left(\frac{x}{m^{1/2}} + \frac{c_m}{8}\right)^3 
f\left(x\right) \, dx \right. \\
&\qquad \qquad \qquad \left. 
- \int_{m^{1/2} (\frac{3}{4} c_m + \eps)}^\infty  
\left(\frac{x}{m^{1/2}} + \frac{c_m}{8} - \eps\right)^3 
f\left(x\right) \, dx \right] \\
&\leq  \frac{12  S_{d-1}}{M^2} z m^{3/2}  
\int_{0}^\infty  
\eps \left(\frac{x}{m^{1/2}} + \frac{c_m}{8}\right)^2 
f\left(x\right) \, dx + o(m^{1/2}) \\
&\leq C z m^{1/2} + o(m^{1/2}).
\end{split}
\end{equation}
We see that this converges to zero as $m \rightarrow 0.$

Combining the above, for $m$ sufficiently small, 
we have the following estimate on $J_{m,2},$
\begin{equation*}
\begin{split}
\PP(\{ J_{m,2} \geq \eps/12 \} \cap G_m) 
&\leq \frac{\EE_m(N_{\rm ex,2}) \EE_m(\check{V}^2)}{\EE_m(J_{m,2})^2},
\end{split}
\end{equation*}
and so in view of~\eqref{jm2_1} and~\eqref{jm2_2},  
\begin{equation*}
\lim_{m \rightarrow 0} \PP(\{J_{m,2} \geq \eps/12\} \cap G_m) = 0.
\end{equation*}
Combining the estimates of this section, we conclude
\begin{equation*}
\begin{split}
\lim_{m \rightarrow 0} &\PP(\{W_{\rm ex}(\tau^*_m) \geq \eps/6\} \cap
G_m) \leq
\lim_{m \rightarrow 0} \PP(\{J_{m,1} \geq \eps/12\} \cap G_m) 
+ \lim_{m \rightarrow 0} \PP(\{J_{m,2} \geq \eps/12\} \cap G_m) = 0,
\end{split}
\end{equation*}
which completes the proof of Lemma~\ref{thm:wex}.
\end{proof}

\subsection{Slow collisions} 
\label{sec:slow}

In this section, we bound the effect of jumps of the mechanical
process caused by slow collisions.  To do so, we track the total change 
in the
relative velocity of a bath atom, $\vcen = \vb - A\qb.$  A bath
atom that undergoes a slow collision may collide with the large
particle many times; however, if the 
normal component of the bath atom's 
velocity ever exceeds $c_m$ after a collision, 
then no further collisions are possible, which follows
from arguments used in Appendix~\ref{norecollide}.  

Suppose a particular 
bath atom undergoes $k$ collisions with the large particle.
We denote by $\vcen_i$ (respectively $\vcen'_i$) for $i=1,\dots,k$ 
the relative velocity of the bath atom
before (respectively after) the $i$th collision.  We 
show below that the total change in velocity for a bath atom is bounded by
$|\vcen'_k - \vcen_1| \leq 8 c_m.$  By the preservation of linear 
momentum for an elastic collision, the change in velocity for the 
large particle due to all the collisions with this particular bath
atom is $\Vb' - \Vb = \frac{m}{M}
(\vcen'_k -\vcen_1).$  We also show below that the number of unique
bath atoms experiencing a slow collision with the large particle
in a given finite time interval is bounded
above by $C m^{-1/2} c_m$. Thus, the total expected 
change in the large particle's velocity is
bounded by $C m^{1/2} c_m^2$ which converges to zero as $m \rightarrow
0.$

\begin{lemma}
For $\tau^*_m$ be given by~\eqref{vel_close}, $G_m$ given by~\eqref{gm},
and $W_{\rm slow}$ defined above in~\eqref{wslow}, we have that
\begin{equation*} 
\lim_{m \rightarrow 0}
\PP(\{W_{\rm slow}(\tau^*_m) \geq \eps/6\} \cap G_m) = 0.
\end{equation*}
\end{lemma}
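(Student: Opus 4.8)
The plan is to bound $W_{\rm slow}(\tau^*_m)$ by its expectation and then apply Markov's inequality, reducing everything to two estimates: the velocity kick a single slow atom imparts to the large particle, and the number of atoms that slow-collide. First I would group the jumps in $\is'$ by the bath atom responsible. By Lemma~\ref{thm:norecollide}, any atom that undergoes a fast collision has no other collision, so every atom appearing in $\is$ interacts with the particle only through slow collisions. For such an atom with relative velocities $\vcen_1, \vcen'_1 = \vcen_2, \dots, \vcen'_k$ (recall $\vcen$ is conserved between collisions by~\eqref{nonHam2} and $A\qb$ is unchanged during an instantaneous collision), I would rewrite the elastic rule~\eqref{collision2} in relative-velocity form to get $\Vh_i \eb_{{\rm n},i} = -\frac{m}{M}(\vcen'_i - \vcen_i)$ for the $i$th collision. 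Summing and telescoping shows that the net kick from a single atom is $\frac{m}{M}(\vcen'_k - \vcen_1)$, of magnitude $\frac{m}{M}\,|\vcen'_k - \vcen_1|$.

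The second step is the per-atom bound $|\vcen'_k - \vcen_1| \le 8 c_m$. Here I would combine three facts: each collision is slow, so the incoming normal relative speed satisfies $|\vce_{{\rm n},i}| < c_m$; the stopping-time bound $\|A\|\,|\Qb_m| + |\Vb_m| \le c_m/8$ holds up to $\tau_m$, controlling $V_{{\rm n},i}$ and $(A\qb_i)_{\rm n}$ in the collision rule; and the principle from Appendix~\ref{norecollide} that once the outgoing normal relative speed exceeds $c_m$ no further collision with that atom can occur. These bound the outgoing normal relative speed at each collision, and the geometric constraint that a large tangential part of $\vcen$ forces all subsequent normal directions to be nearly orthogonal to it (otherwise the collision would be fast) is what prevents accumulation of the changes across the $k$ collisions and yields the stated constant.

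Third, I would bound the expected number of distinct atoms undergoing a slow collision during $[t_0, t_0 + z]$. Since the initial bath is a Poisson field with intensity $\mu_m$ and $\lambda_m = m^{-1/2}\lambda$, this expectation equals the $\mu_m$-measure of the phase-space set of atoms reaching the (confined) particle trajectory with slow normal relative velocity. The narrow slow-velocity window contributes a factor $O(c_m)$ and $\lambda_m$ contributes the factor $m^{-1/2}$, so the expected number of slow atoms is at most $C m^{-1/2} c_m$. Combining the three steps, on $G_m$ and before $\tau_m$ the triangle inequality over atoms gives $W_{\rm slow}(\tau^*_m) \le (\text{number of slow atoms})\cdot \frac{8 m c_m}{M}$, hence $\EE_m\!\left(W_{\rm slow}(\tau^*_m)\,\one_{G_m}\right) \le C' m^{1/2} c_m^2$. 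Because $c_m = m^{-1/5}$ satisfies $c_m^2 m^{1/2} = m^{1/10} \to 0$, Markov's inequality gives $\PP(\{W_{\rm slow}(\tau^*_m) \ge \eps/6\}\cap G_m) \le \frac{6}{\eps}\,\EE_m\!\left(W_{\rm slow}(\tau^*_m)\,\one_{G_m}\right) \to 0$.

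I expect the main obstacle to be the per-atom bound $|\vcen'_k - \vcen_1| \le 8 c_m$ when an atom undergoes many slow collisions: the collision normal $\eb_{{\rm n},i}$ changes from one collision to the next, so one cannot simply sum per-collision bounds, and the slow/fast dichotomy together with the stopping-time confinement must be used to rule out a drift in $\vcen$ growing with the number of collisions. The phase-space counting is also somewhat delicate, since the confinement region $\|A\|\,|\Qb| \le c_m/8$ grows with $c_m$, so the decisive smallness must come entirely from the $O(c_m)$ slow-velocity window rather than from the spatial extent.
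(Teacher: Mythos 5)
Your overall strategy coincides with the paper's: group the slow jumps by bath atom, use momentum conservation together with the fact that $\vcen$ is constant between collisions to telescope the net kick from one atom to $\frac{m}{M}\left|\vcen'_k - \vcen_1\right|$, bound the expected number of distinct slow-colliding atoms by $C\,m^{-1/2}c_m$ via the Poisson flux through the confined region, and conclude with Markov's inequality and $c_m^2 m^{1/2}\to 0$. Your steps 1, 3 and 4 are essentially the paper's argument verbatim, including the observation that the flux factor $(v_{\rm n}-V_{\rm n})_+ = O(c_m)$, not the spatial extent of the confinement region, is what supplies the smallness.

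The genuine gap is in your step 2, the per-atom bound $\left|\vcen'_k - \vcen_1\right|\le 8c_m$, which you correctly flag as the main obstacle but do not actually resolve. The mechanism you propose --- that a large tangential component of $\vcen$ forces later collision normals to be nearly orthogonal to it --- is neither sufficient nor the one that works. First, it is vacuous in the borderline regime $\left|\vcen_{{\rm t},1}\right|\approx c_m$: slowness of a collision, $\vce_{{\rm n},i}\le c_m$, then imposes no angular constraint on $\eb_{{\rm n},i}$ at all. Second, even genuine near-orthogonality does not prevent accumulation: each collision changes $\vcen$ by a vector of size $O(c_m)$ parallel to $\eb_{{\rm n},i}$, and nothing in your argument stops these from all pointing in a common direction orthogonal to the tangential component, producing a drift of order $k\,c_m$ over $k$ collisions. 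What closes the argument in the paper is a case split on the tangential speed. If $\left|\vcen_{{\rm t},i}\right|<c_m$ for all $i$, no geometry is needed: the telescoped difference $\vcen'_k-\vcen_1$ has both endpoints of size $O(c_m)$ (incoming normal components by slowness, the final outgoing one by the collision rule~\eqref{collision2} together with the confinement~\eqref{stop_time_mech}, tangential components by hypothesis), so the bound is uniform in $k$. If instead $\left|\vcen_{{\rm t},1}\right|\ge c_m$, the preserved tangential velocity sweeps the atom across the particle, so all subsequent collision points lie on a single hemisphere with $\langle \eb_{{\rm t},1},\eb_{{\rm n},i}\rangle\le 0$ decreasing in $i$; this yields the monotonicity $\vce_{{\rm n},i+1}\le\vce'_{{\rm n},i}$ of~\eqref{monotone}, and then it is the sum of the jump \emph{magnitudes} (not merely the vector sum) that telescopes as scalars to $\vce_{{\rm n},1}-\vce'_{{\rm n},k}\le 5c_m$; a third case concatenating the two regimes gives the constant $8$. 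Without this monotonicity, or some substitute that controls the number or alignment of collisions, your per-atom bound is unproven --- and it is precisely the quantity your Markov-inequality step multiplies against $\EE_m(N_{\rm slow})$, so the proof does not go through as written.
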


\begin{proof}
We focus now on a single bath atom and consider the set of times 
$I_{\rm rec} = \{ t_i \}_{i=1,\dots,k}$ of all collisions of this
single atom in the interval $[t_0, \tau^*_m]$ (the fact that
there are almost surely only finitely many collisions is shown in
Appendix~\ref{wellposed}).  We later estimate the total number of
distinct bath atoms that undergo slow collisions with the large
particle in the time interval.  We denote by $\eb_{{\rm n}, i}$ the
normal direction for the collision at time $t_i.$  We denote by
$\vcen_{{\rm n}, i},$ $\vcen_{{\rm n},i}'$ the normal velocity for the
particle before and after the collision at time $t_i$, and similarly
for the tangential velocity.  Finally, we let
$\vce_{{\rm n}, i} = \vcen_{{\rm n}, i} \cdot \eb_{{\rm n}, i}.$

We recall that in a slow collision, the normal velocity of the
bath atom satisfies $\vce_n \leq c_m$, whereas the tangential velocity cannot
be bounded, so we do not have a bound on $|\vb|.$   
In the following, we consider three cases, based on the size of the
tangential velocity of the bath
atom.

For the first case, we consider that $|\vcen_{{\rm t},i}| < c_m$ for
all $i=1,\dots,k.$  The sum of the change in $\vcen$ telescopes since
$\vcen_{{\rm n}, i}' - \vcen_{{\rm n}, i} = \vcen'_i - \vcen_i=
\vcen_{i+1} - \vcen_i,$ giving
\begin{equation*}
\begin{split}
\left| \sum_{i=1}^k \vcen_{{\rm n}, i}' - \vcen_{{\rm n}, i} \right| 
  &= \left| \sum_{i=1}^k \vcen_{i}' - \vcen_{i} \right| 
   = | \vcen'_k - \vcen_1| \leq 2 \sqrt{2} c_m.
\end{split}
\end{equation*}

As a second case, suppose that $|\vcen_{\rm t,1}| \geq c_m.$ 
Even with a large tangential velocity, it is possible that there is
a recollision with the large particle due to the curvature of the 
particle (see Figure~\ref{fig:slow}).  
We split the large particle into halves, with
the equator perpendicular to $\eb_{\rm t_1}$ as in
Figure~\ref{fig:slow}.  The bath atom passes the equator and due
to its high tangential velocity, which is unchanged by the collision, 
all subsequent collisions must occur
on the same hemisphere (the shaded hemisphere in
Figure~\ref{fig:slow}(a)).  We recall that the speed of the large particle
is bounded by $c_m/8$~\eqref{stop_time_mech}.  Between subsequent collisions, the bath atom
moves further in the $\eb_{{\rm t},1}$-direction, so that 
$\langle \eb_{\rm t,1}, \eb_{{\rm n},{i+1}} \rangle \leq \langle
\eb_{\rm t,1}, \eb_{{\rm n}, i} \rangle$ and therefore
$\langle \eb_{\rm t,1}, \eb_{{\rm n}, i} \rangle \leq 0$ for all $i.$  
This in turn implies monotonicity of the normal velocity for the bath
atom's collisions,
\begin{equation}
\label{monotone}
\vce_{{\rm n},{i+1}} \leq
\vce_{{\rm n}, i}'.
\end{equation}
\begin{figure}
\centerline{\input{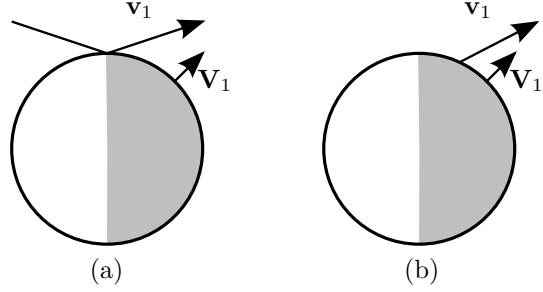} 
\hspace{1cm} 
\psset{xunit=.5pt,yunit=.5pt,runit=.5pt}
\begin{pspicture}(164.31575012,178.02960205)
{
\newrgbcolor{curcolor}{0 0 0}
\pscustom[linewidth=1.77175212,linecolor=curcolor]
{
\newpath
\moveto(103.03046,138.48478205)
\lineto(162.23974,169.07980205)
}
}
{
\newrgbcolor{curcolor}{0 0 0}
\pscustom[linestyle=none,fillstyle=solid,fillcolor=curcolor]
{
\newpath
\moveto(141.80373843,167.36893765)
\lineto(164.31575668,170.18766437)
\lineto(148.99287839,153.45608913)
\curveto(149.89286282,159.12481913)(146.97186601,164.73500467)(141.80373843,167.36893765)
\closepath
}
}
\rput(118,178){$\vb_1$}
\rput(155,125){$\Vb_1$}
{
\newrgbcolor{curcolor}{0 0 0}
\pscustom[linewidth=1.77175212,linecolor=curcolor]
{
\newpath
\moveto(121.60356,121.60381205)
\lineto(145.68359,145.68389205)
}
}
{
\newrgbcolor{curcolor}{0 0 0}
\pscustom[linestyle=none,fillstyle=solid,fillcolor=curcolor]
{
\newpath
\moveto(126.7315255,137.84957853)
\lineto(147.32528572,147.36972696)
\lineto(137.80517801,126.77594736)
\curveto(136.9417065,132.45035383)(132.45536903,136.90896126)(126.7315255,137.84957853)
\closepath
}
}
{
\newrgbcolor{curcolor}{0.74901962 0.74901962 0.74901962}
\pscustom[linestyle=none,fillstyle=solid,fillcolor=curcolor]
{
\newpath
\moveto(72.56126637,145.56281416)
\curveto(112.33014407,146.11401612)(145.01385454,114.16431802)(145.56237693,74.20116612)
\curveto(146.11089931,34.23801423)(114.31651933,1.39464118)(74.54764163,0.84343922)
\curveto(74.14896156,0.83791346)(73.7502427,0.83571498)(73.35152629,0.836844)
\lineto(73.554454,73.20312669)
\closepath
}
}
{
\newrgbcolor{curcolor}{0 0 0}
\pscustom[linewidth=2.4079912,linecolor=curcolor]
{
\newpath
\moveto(145.683472,73.44373694)
\curveto(145.683472,33.54683242)(113.34064052,1.20400094)(73.443736,1.20400094)
\curveto(33.54683148,1.20400094)(1.204,33.54683242)(1.204,73.44373694)
\curveto(1.204,113.34064147)(33.54683148,145.68347294)(73.443736,145.68347294)
\curveto(113.34064052,145.68347294)(145.683472,113.34064147)(145.683472,73.44373694)
\closepath
}
}
\end{pspicture}}
 \hspace{-.5cm} (a) \hspace{3.5cm} (b)
\caption{\label{fig:slow}
A slow collision with high tangential velocity in the plane defined by
the bath atom's tangential and normal velocities.  (a) If the
tangential velocity of the first collision is large all subsequent
collisions must occur in one hemisphere, in the positive
$\eb_{t,1}$-direction.  (b) The later collisions have normal velocity
smaller and smaller, since the large component in the $\eb_{t,1}$
direction is outward pointing from the surface of the particle at the
site of subsequent collisions.}
\end{figure}

From Lemma~\ref{thm:norecollide}, any bath atom involved in a fast collision can never undergo slow collisions, so in particular we have that $|\vcen_{{\rm n},i}| \leq c_m$ for $i=1,\dots,k.$
We use the monotonicity of the normal
velocities~\eqref{monotone} to form a telescopic sum on
the normal component of the change in velocity, which becomes 
\begin{equation}
\label{eq:slow1}
\begin{split}
\left| \sum_{i=1}^k  \vcen_{{\rm n}, i}' - \vcen_{{\rm n}, i} \right| 
&\leq  \sum_{i=1}^{k-1} \left| 
   \vcen_{{\rm n},{i+1}} - \vcen_{{\rm n}, i} \right| 
+ |\vcen_{{\rm n},k}' - \vcen_{{\rm n},k} |\\
&\leq   \vce_{{\rm n},1} - \vce_{\rm n,k}  
+ |\vcen_{{\rm n}, k}' - \vcen_{{\rm n}, k} |\\
&\leq 2 c_m + 2(c_m + c_m/16) \leq 5 c_m.
\end{split}
\end{equation}
To prove the third inequality, we use the bound on 
$|\vce_{{\rm n},i}| \leq c_m$ for
$i = 0,\dots,k,$ along with the bound on the large particle
in~\eqref{stop_time_mech}, and apply the collision rule~\eqref{collision2}.

For the third and final case, consider the case that after a certain
number of collisions, $|\vcen_{{\rm t}, \ell}| \geq c_m$ whereas until
that time, we have that $|\vcen_{{\rm t}, i}| < c_m$ and $|\vcen_{{\rm
n},i}| < c_m,$ for $i=1,\dots,\ell-1.$  This combines the previous two
cases.  The change in velocity satisfies
\begin{equation*}
\begin{split}
\left| \sum_{i=1}^k  \vcen_{{\rm n}, i}' - \vcen_{{\rm n}, i} \right| 
 &= \left| \sum_{i=1}^{\ell-1} \vcen_{{\rm n}, i}' - \vcen_{{\rm n}, i} \right| + \left| \sum_{i=\ell}^{k} \vcen_{{\rm n}, i}' - \vcen_{{\rm n}, i} \right|\\
&= | \vcen_{\ell-1}' - \vcen_1 | 
 + \left| \sum_{i=\ell}^{k} \vcen_{{\rm n}, i}' - \vcen_{{\rm n}, i} \right|\\
&\leq  2 \sqrt{2} c_m + 5 c_m \leq 8 c_m.
\end{split}
\end{equation*}  
We estimate
$|\vcen_1| \leq \sqrt{2} c_m$ and $|\vcen_{\ell-1}'| \leq \sqrt{2}
c_m$ as both normal and tangential components of the relative velocity
are bounded.  We bound the final term, involving collisions after the
fast tangential velocity using the previous argument~\eqref{eq:slow1}.

Taking the maximum over the above three estimates, we conclude that
the total momentum change for the large particle due to collisions
with this single, slow particle satisfies
\begin{equation}
\label{momentum_bound}
\left| \sum_{i=1}^k M \left(\Vb_{{\rm n}, i}' 
                          - \Vb_{{\rm n}, i}\right) \right| 
= \left| \sum_{i=1}^k m \left(\vcen_{{\rm n}, i}' 
                            - \vcen_{{\rm n}, i}\right) \right|
\leq 8 m c_m.
\end{equation}

We now define $N_{\rm slow}$ as the number of first-time collisions
that are slow collisions, and we employ a double-index notation
for the collisions, so that the $j$th atom's $i$th collision has
relative normal velocity $\vcen_{{\rm n},{i,j}}$ where $1\leq j \leq
N_{\rm slow}$ and $1 \leq i \leq k_j.$  While the slow collisions are
not a Poisson jump process, we note~\eqref{bg_measure1} acts as an upper
bound for the first-time slow collisions.   (We recall that in the
mechanical process certain slow collisions are made impossible by the past
history, though no first-time slow collisions are ignored in~\eqref{bg_measure1}.)  Then we may estimate the number
of distinct bath atoms involved in slow collisions by
\begin{equation}
\label{nslow}
\begin{split}
\EE_m(N_{\rm slow}) 
&\leq z \int_{\SSph^{d-1}} \int_0^{c_m} \frac{\lambda}{m^{1/2}} R^{d-1}    (v_{\rm n} -V_{\rm n})_+  f^1_m((\vb-A\qb)_{\rm n}) \, d v_{\rm n} \, d \Omega \\
&\leq \lambda R^{d-1} S_{d-1} m^{-1/2} z \int_0^{c_m} 2 c_m f^1_m((\vb-A\qb)_{\rm n}) \, d v_{\rm n} \\
&\leq 2 \lambda R^{d-1} S_{d-1} m^{-1/2} z c_m. \\
\end{split}
\end{equation}
Combining~\eqref{momentum_bound} and~\eqref{nslow}, we finally have the estimate
\begin{equation*}
\begin{split}
\PP(\{W_{\rm slow}(\tau_m^*) \geq \eps/6\} \cap G_m) 
&\leq \PP\left(  \ \frac{m}{M} \left| \sum_{j=1}^{N_{\rm slow}} \sum_{i=1}^{k_i} 
     \vcen_{{\rm  n},{i,j}}' - \vcen_{{\rm  n},{i,j}} \right|  \geq \eps/6
                   \right)\\
&\leq \PP\left(  \frac{8 m c_m}{M} N_{\rm slow} \geq \eps/6
                \right) \\
&\leq \frac{48 m c_m}{\eps} \EE_m(N_{\rm slow}) \\
&\leq \frac{96 \lambda R^{d-1} S_{d-1} z c_m^2 m^{1/2}}{\eps} 
\rightarrow 0  \text{ as } m \rightarrow 0. \qedhere
\end{split}
\end{equation*}
\end{proof}

\subsection{Conclusion of the proof}

Combining the above estimates on $W_{\rm slow}$ and $W_{\rm ex},$ 
we conclude that the left hand side of~\eqref{prob_decomp}
converges to zero in the limit as $m \rightarrow 0.$  This shows that
Lemma~\ref{thm:strong_extend} holds, and we then conclude
that Lemma~\ref{thm:strong} follows.  Over finite time intervals, we can then
conclude that $(\Qtb_m, \Vtb_m)_\tint - (\Qb_m, \Vb_m)_\tint \pconv 0$ 
as $m \rightarrow 0.$ Combined with the convergence result Lemma~\ref{thm:ma_to_ou}, we can therefore conclude the proof of our main result,
Theorem~\ref{thm:main}.

\section{The mechanical process is well-posed}
\label{wellposed}

In this section, we show that the mechanical process is almost surely
well-posed for all times $t < \tau_m$, where $\tau_m$ is the stopping
time defined in~\eqref{stop_time_mech}.  Particularly, we show that as
long as the large particle stays bounded, the degenerate cases of having a
multicollision or having infinitely many collisions in finite time
both occur with zero probability.  We further show that almost surely
$\tau_m > 0.$ We note that this well-posedness result is slightly
weaker than the corresponding result in~\cite{durr80,durr83} for a
heat bath with zero background flow since that result does not require
a stopping time and is valid over finite time intervals that are
uniform with respect to the initial condition.  In contrast, the
stopping time $\tau_m$ depends on the initial condition.

We first show that whenever the particle is restricted to a ball
$|\Qb_m| < \ell$ for $\ell \in \NN,$ then there are almost surely only
finitely many atoms that can possibly interact with the particle in a
given time interval $[0,T]$, since only finitely many enter the ball.
It is therefore sufficient to study the well-posedness of the 
dynamics for finite systems.  After restricting to a finite system, we show that the mechanical
process is almost surely well defined.  Since $|\Qb_m| <
c_m/8$ for $t \in [0,\tau_m),$ we conclude that the mechanical process
is well-posed at least until the stopping time $\tau_m.$

\subsection{Finitely many atoms within a finite radius}

In the following, we work with fixed mass $m$ for the bath atoms,
which we absorb into other constants.

\begin{lemma} 
\label{thm:finiteness}
For any finite time interval $[0,T]$ and radius $\ell \in \NN,$ 
there are almost surely only finitely many bath atoms that enter the ball
$|\qb| < \ell$ in the time interval  $[0,T].$
\end{lemma}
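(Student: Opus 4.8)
The plan is to reduce the statement to the finiteness of a single Poisson intensity. Since the initial bath is a Poisson field with intensity $\mu_m$, the number $N_m(S)$ of atoms whose initial datum $(\qb,\vb)$ lies in a measurable set $S\subset\RR^d\times\RR^d$ is a Poisson random variable with mean $\mu_m(S)$, and such a variable is almost surely finite as soon as $\mu_m(S)<\infty$. I would therefore take $S$ to be the set of initial data whose trajectory under the free bath dynamics \eqref{nonHam1} reaches the ball $\{|\qb|<\ell\}$ at some $t\in[0,T]$, and reduce the lemma to the single estimate $\mu_m(S)<\infty$.

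Before estimating $\mu_m(S)$ I would justify that the free dynamics is the right object. A bath atom obeys \eqref{nonHam1} up to its first collision, and a collision can only occur on the surface of the large particle, which does not displace the atom. Hence an atom cannot be brought into a ball by a collision, and before its first collision every atom follows \eqref{nonHam1} from its initial datum. In the regime in which the lemma is applied the particle is confined to $\{|\Qb_m|<\ell\}$, so every collision takes place inside the enlarged ball $\{|\qb|<\ell+R\}$; consequently any atom that ever collides has, before colliding, entered $\{|\qb|<\ell+R\}$ along its free trajectory. Thus the number of atoms entering the ball—and a fortiori the number that can interact with the particle—is dominated by $N_m(S)$ with $S$ taken for the radius $\ell+R$, and it remains to bound $\mu_m(S)$ for an arbitrary radius.

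For the estimate I would pass to the relative velocity $\vcen=\vb-A\qb$ of \eqref{relvel}, which is conserved along \eqref{nonHam2}. Solving \eqref{nonHam1} gives $\qb(t)=e^{At}\qb(0)+\int_0^t e^{A(t-s)}\vcen\,ds$, so that if $|\qb(t^*)|<\ell$ for some $t^*\in[0,T]$ then
\begin{equation*}
|\qb(0)|=\left|e^{-At^*}\qb(t^*)-\int_0^{t^*}e^{-As}\vcen\,ds\right|\le e^{\|A\|T}\bigl(\ell+T\,|\vcen|\bigr).
\end{equation*}
Hence, in the $(\qb,\vcen)$ variables (the change of variables $\vb\mapsto\vcen$ having unit Jacobian, and $f_m(\qb,\vb)=m^{d/2}f(m^{1/2}\vcen)$ by \eqref{fvq}), $S$ is contained in $\{|\qb|\le e^{\|A\|T}(\ell+T|\vcen|)\}$, and integrating out $\qb$ over this ball yields
\begin{equation*}
\mu_m(S)\le\lambda_m\,m^{d/2}\,\omega_d\,e^{d\|A\|T}\int_{\RR^d}\bigl(\ell+T|\vcen|\bigr)^d f(m^{1/2}\vcen)\,d\vcen,
\end{equation*}
where $\omega_d$ is the volume of the unit ball. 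After the substitution $w=m^{1/2}\vcen$ the right-hand side becomes $\lambda_m\,\omega_d\,e^{d\|A\|T}\int_{\RR^d}(\ell+Tm^{-1/2}|w|)^d f(w)\,dw$, which, upon expanding the $d$-th power, is a finite linear combination (with $m$-dependent coefficients) of the moments $\int_{\RR^d}|w|^k f(w)\,dw$ for $0\le k\le d$.

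The conceptual point to get right is the reduction of the second paragraph: one must ensure that the Poisson count over \emph{free} trajectories genuinely dominates the number of distinct atoms entering the ball in the true, collisional dynamics, and the potential circularity—trajectories depend on collisions, which depend on which atoms are present—is broken precisely by the observation that collisions occur only at the particle surface and conserve position, so every colliding atom has already entered $\{|\qb|<\ell+R\}$ freely. Granting this, the remaining work is routine: for each fixed $m$ the standing assumption that the first four moments of $f$ are finite (only moments up to order $d\le3$ are needed) gives $\mu_m(S)<\infty$, whence $N_m(S)<\infty$ almost surely, which is the assertion of the lemma.
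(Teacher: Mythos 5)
Your proof is correct, but it takes a genuinely different route from the paper's. The paper decomposes space into annuli $k < |\qb| \le k+1$, bounds the expected number $N_k$ of atoms in each annulus whose relative speed exceeds $\eps k$ by a Chebyshev-type estimate using the $(d+1)$-st moment of $f$ (giving $\EE_m(N_k) \le C/k^2$, hence $\sum_k N_k < \infty$ almost surely), and then uses a differential inequality for $|\qb(t)|^2$ under~\eqref{nonHam2} to show that the remaining atoms --- those starting in the $k$-th annulus with relative speed below $\eps k$, $k$ large --- can never reach the ball before time $T$. You instead invert the flow explicitly by variation of constants, showing that any atom whose free trajectory reaches the ball must have started in the region $|\qb| \le e^{\|A\|T}(\ell + T|\vcen|)$, and you verify directly that this region has finite $\mu_m$-measure; almost-sure finiteness of a Poisson count with finite mean then finishes the proof. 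Your route is more direct (a single intensity estimate replaces the annular decomposition and the summability argument), it needs only the $d$-th moment of the speed rather than the $(d+1)$-st, and it makes explicit the reduction from the collisional dynamics to the free flow (collisions occur only on the particle surface, hence inside an enlarged ball, so every atom that ever interacts has already entered that ball along its free trajectory) --- a point the paper leaves implicit. What the paper's differential-inequality argument buys is independence from any explicit solution formula: it transfers verbatim to the non-homogeneous incompressible flows $\ub(\qb)$ of Section~\ref{sec:gen}, where $e^{At}$ is unavailable, whereas your argument would there need the explicit inversion replaced by a backward Gronwall estimate.
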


\begin{proof}
For a given initial configuration, let $N_k$ be the number of bath 
atoms such that $k < |\qb| \leq k+1$ and $|\vcen|
\geq \eps k,$ where we choose $\eps = \frac{1}{2} e^{- ( ||A|| +
\frac{1}{2}) T}.$     
We then have
\begin{equation*}
\begin{split}
\EE_m(N_k) &= \int_{k \leq |\qb| \leq k+1} \int_{|\vcen| \geq \eps k} 
         \lambda_m f_m(\qb,\vcen + A\qb) \, d \vcen \, d \qb  \\
&= C [ (k+1)^d - k^d] S_{d-1} \int_{|\vcen| \geq \eps k} 
	\frac{|\vcen|^{d+1}}{|\vcen|^{d+1}} f(m^{1/2} \vcen) \, d \vcen  \\
&\leq \frac{C k^{d-1}}{(\eps k)^{d+1}} \int_{|\vcen| \geq m^{1/2} \eps k} 
	|\vcen|^{d+1} f(\vcen) \, d \vcen  \\
&\leq \frac{C}{k^2},
\end{split} 
\end{equation*} 
where we note that $C$ depends on $\eps,$ but not on $k.$
Then, almost surely $\sum_{k=1}^\infty N_k < \infty.$
Suppose an atom has initial coordinates $(\qb_0, \vcen_0)$ where $k \leq
|\qb_0| \leq k+1$ and $|\vcen_0| < \eps k$ for some $k \in \NN.$ 
 We have from the
dynamics~\eqref{nonHam2} that 
\begin{equation*}
\begin{split}
\frac{d}{dt} \left( \frac{|\qb|^2}{2} \right) &=\qb^T A \qb + \qb^T
\vcen_0 \\
&\geq - ||A|| \ |\qb|^2 - \frac{1}{2} |\qb|^2 - \frac{1}{2} |\vcen_0|^2.
\end{split}
\end{equation*} 
This implies 
\begin{equation*}
\begin{split}
|\qb(t)|^2 &\geq \left(|\qb_0|^2 + \frac{|\vcen_0|^2}{2 ||A|| + 1} \right) 
e^{- (2 ||A|| + 1) t} - \frac{ |\vcen_0|^2 }{ 2 ||A||+1}.\\
&\geq |\qb_0|^2 e^{- (2 ||A|| + 1) t} - |\vcen_0|^2 \\
&\geq k^2 e^{- (2 ||A|| + 1) t} - \eps^2 k^2 \\
&= \frac{3}{4} k^2 e^{- (2 ||A|| + 1) T}.
\end{split} 
\end{equation*}  
Let $\ell \in \NN$ be fixed.  
We choose $k_0$ sufficiently large such that $\frac{\sqrt{3}}{2}
k_0 e^{- (||A|| + \frac{1}{2}) T}  > \ell.$   Then, the only atoms that can
enter the ball $|\qb| < \ell$ must either be one of the finitely many such
that $|\qb_0| < k_0$ or be such that there exists $k \geq k_0$ such that 
$k < |\qb_0| \leq k+1$ and $|\vcen_0|
\geq \eps k.$
\end{proof}

In a finite mechanical system, the first collision occurs at a time
$t_0>0,$ 
since at time $t=0,$ no bath atoms are touching the large particle.
Upon choosing $ \ell = \frac{c_m}{8},$ this implies that the stopping time
$\tau_m$~\eqref{stop_time_mech} for the mechanical process satisfies
$\tau_m > 0.$  (Recall that that after Definition~\eqref{stop_time_mech},
we have assumed that $m$ is sufficiently small so that the initial 
conditions lie within the box that defines the stopping criteria.

\subsection{A finite system is well-posed}

In the following, we consider a system composed of the large particle
along with $N \in \NN$ bath atoms, in the finite time interval
$[0,\tau_m).$   We show that there are almost surely only finitely
many collisions, none of which are multicollisions.  For any initial
condition of the mechanical process we define the process on the time
interval $[0, \tau^*)$ where $\tau^*$ is the first occurrence of either a
multicollision or an accumulation of infinitely many collisions, or the
first time the large particle leaves the box used in the definition of
$\tau_m$ (see~\eqref{stop_time_mech}).  By definition, the mechanical
process is well-posed on the time interval $[0, \tau^*),$ and 
we have $\tau^* \leq
\tau_m.$  From Lemma~\ref{thm:finiteness}, we can conclude that 
$\tau^* > 0,$ and we show below that almost surely $\tau^* = \tau_m.$

We note that because of the bath atom dynamics~\eqref{nonHam1}, the
kinetic energy of the mechanical process is not preserved.  When
restricting to a finite system, we can control the growth of the
energy.
\begin{lemma}
\label{kin_en}
In the time interval $[0, \tau^*),$ the kinetic energy of the 
mechanical system at time $t$,
\begin{equation*}
K(t) = \frac{1}{2} M \Vb(t)^T \Vb(t) 
     + \frac{1}{2} \sum_{j=1}^N m \vb_j(t)^T \vb_j(t),
\end{equation*}
satisfies 
$K(t) \leq K(0) e^{2 \|A\| t}.$ 
\end{lemma}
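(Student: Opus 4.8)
The plan is to exploit two structural facts: the only mechanism that can raise the energy is the non-Hamiltonian acceleration of the bath atoms in \eqref{nonHam1}, while the elastic collisions \eqref{collision2} conserve kinetic energy exactly. Fix $t < \tau^*$. By the definition of $\tau^*$ (the first occurrence of a multicollision or of an accumulation of infinitely many collisions), the system undergoes only finitely many single-atom collisions in $[0,t]$, at times $0 < s_1 < \cdots < s_p < t$. Between consecutive collisions the large particle's velocity is constant, since \eqref{particle_dynamics} gives $d\Vb_m = 0$, whereas each bath atom obeys $d\vb_j = A\vb_j\,dt$ by \eqref{nonHam1}.

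First I would check that $K$ is continuous across each collision. The rule \eqref{collision2} leaves the tangential components unchanged and applies the standard one-dimensional elastic exchange in the normal direction, so a short computation shows that $\tfrac12 M V_{\rm n}^2 + \tfrac12 m v_{\rm n}^2$ is invariant under \eqref{collision2} (equivalently, both the normal momentum $M V_{\rm n} + m v_{\rm n}$ and the energy are conserved). Hence $K$ has no jump at any $s_i$ and is continuous on $[0,t]$.

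Next, on each open interval $(s_i, s_{i+1})$ the map $t \mapsto K(t)$ is smooth, the particle contribution to $dK/dt$ vanishes because $\Vb$ is constant, and
\[
\frac{dK}{dt} = \sum_{j=1}^N m\, \vb_j^T A\, \vb_j \le \sum_{j=1}^N m\, |\vb_j|\,|A\vb_j| \le \|A\| \sum_{j=1}^N m\,|\vb_j|^2 \le 2\|A\|\, K(t),
\]
where the last inequality uses $\tfrac12\sum_j m|\vb_j|^2 \le K$. Applying Gr\"onwall's inequality on each subinterval and then stitching the estimates together by the continuity of $K$ at the collision times yields $K(t) \le K(0)\, e^{2\|A\| t}$ for every $t < \tau^*$, which is the claim.

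The calculations involved are routine; the only points requiring care will be the verification that \eqref{collision2} conserves energy, so that $K$ is continuous and the differential inequality can be integrated across collision times without incurring positive jumps, together with the bookkeeping that there are only finitely many collisions before $\tau^*$ — which is exactly what the definition of $\tau^*$ (and, for the ambient finiteness, Lemma~\ref{thm:finiteness}) guarantees.
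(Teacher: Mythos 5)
Your proof is correct and follows essentially the same route as the paper's: the paper likewise notes that elastic collisions conserve kinetic energy, computes $\frac{d}{dt}K = \sum_j m\,\vb_j^T A \vb_j$ between collisions (using that $\Vb$ is constant there), bounds this by $2\|A\|K$, and integrates. Your version merely spells out the details the paper leaves implicit (finitely many collisions before $\tau^*$, continuity of $K$ at collision times, and the Gr\"onwall stitching across subintervals).
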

\begin{proof}
During an elastic collision, the kinetic energy is unchanged.  Between
collisions, we have
\begin{equation*}
\begin{split}
\frac{d}{dt} K(t) &= M \Vb(t)^T \frac{d}{dt} \Vb(t) 
     + \sum_{j=1}^N m \vb_j(t)^T \frac{d}{dt} \vb_j(t) \\
&= \sum_{j=1}^N m \vb_j(t)^T A \vb_j(t),
\end{split} 
\end{equation*}
and we thus have
$|\frac{d}{dt} K(t)| \leq 2 \|A\|
K(t).$  Integrating gives the desired result.  
\end{proof}
As a corollary, we have that for every initial condition, the speed
$\vmax = \left(\frac{K(0) e^{2 \|A\| \tau^*}}{m} \right)^{1/2}$ 
bounds the maximum speed of any bath atom:
\begin{equation} 
\label{vmax}
\max_{t \in [0,\tau^*)} \max_{j=1,\dots,N} |\vb_i(t)| \leq \vmax.
\end{equation} 

We let $B_N(\ell, K)$ denote the set of initial conditions where $\tau^*
< \tau_m$, that is, those that lead to either a multicollision or
infinitely many collisions, where $N$ is the number of bath atoms in
the system, where the large particle is restricted to $|\Qb|
\leq \ell$, and where the total kinetic energy satisfies $\sup_{[0,T]}
K(t) \leq K.$  If we show that the set $B_N(\ell, K)$ has zero measure,
then the set of initial conditions leading to such 'bad' collisions, which
is given by
\begin{equation*} 
\bigcup_{N=1}^\infty \bigcup_{\ell = 1}^\infty \bigcup_{K = 1}^\infty 
B_N(\ell, K),
\end{equation*}   
has zero measure.  

We first restrict ourselves to the case of a system with the large
particle and a single bath atom.

\begin{lemma}
\label{lem:tan}
For all initial conditions in $B_1(\ell,K),$ there are infinitely many
collisions in $[0, \tau^*)$ and the normal component of the atom and particle
velocities before the $i$th collision
satisfies
\begin{equation*}
\lim_{i \rightarrow \infty} |v_{{\rm n}, i} - V_{{\rm n}, i}| = 0.
\end{equation*}
\end{lemma}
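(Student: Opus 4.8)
The plan is to treat the two assertions separately, the first being essentially definitional and the second being the quantitative core. Recall that $\tau^*$ is the first time at which a multicollision occurs, infinitely many collisions accumulate, or the large particle exits the box appearing in~\eqref{stop_time_mech}. For an initial condition in $B_1(\ell,K)$ one has $\tau^* < \tau_m$, so $\tau^*$ is not the box-exit time; and with a single bath atom a genuine multicollision (two or more atoms striking the particle simultaneously) cannot occur. Hence $\tau^*$ must be an accumulation point of collision times $t_1 < t_2 < \cdots \uparrow \tau^*$, which already gives the first claim and shows that the inter-collision times $\Delta t_i := t_{i+1} - t_i$ tend to $0$.

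The heart of the argument is a lower bound on $\Delta t_i$ in terms of the pre-collision normal relative speed $g_i := |v_{{\rm n},i} - V_{{\rm n},i}|$. First I would record the exact reversal of the normal relative velocity under the collision rule~\eqref{collision2}: a direct computation shows $(\vb' - \Vb')\cdot\eb_{\rm n} = -(\vb - \Vb)\cdot\eb_{\rm n}$ for every mass ratio, so $g_i$ is preserved across the collision while its sign flips from approaching to receding. Then I would introduce the separation function $\phi(t) = \tfrac12\big(|\mathbf{r}(t)|^2 - R^2\big)$ with $\mathbf{r} = \qb - \Qb$, which vanishes exactly at the collision times and is nonnegative in between, since the point atom cannot penetrate the sphere of radius $R$. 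Between collisions the particle moves ballistically ($F=0$) while the atom obeys~\eqref{nonHam1}, so $\ddot{\mathbf{r}} = A\vb$ and
\[
\dot\phi = \mathbf{r}\cdot(\vb - \Vb), \qquad \ddot\phi = |\vb - \Vb|^2 + \mathbf{r}\cdot A\vb .
\]
Evaluating just after the $i$th collision, where $\mathbf{r} = -R\,\eb_{{\rm n},i}$ and the relative velocity points outward, yields $\dot\phi(t_i^+) = R\,g_i > 0$.

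Next I would extract a minimum flight time. Using the a priori bounds valid for a finite system on $[0,\tau^*)$---namely $|\vb|\le\vmax$ from~\eqref{vmax}, $|\Vb|\le (2K/M)^{1/2}$ from Lemma~\ref{kin_en}, and boundedness of $|\mathbf{r}|$ over the (short) excursions---I would produce a constant $C$, depending only on $\ell$, $K$, $A$, $R$, and $m$, with $\ddot\phi \ge -C$. Then $\dot\phi(t) \ge R g_i - C(t - t_i)$ stays strictly positive on $(t_i,\, t_i + R g_i/C)$, so $\phi$ is strictly increasing there and no new collision can occur, giving $\Delta t_i \ge R g_i / C$. Combining this with $\Delta t_i \to 0$ yields $g_i = |v_{{\rm n},i} - V_{{\rm n},i}| \le (C/R)\,\Delta t_i \to 0$, which is the assertion.

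I expect the only genuine obstacle to be the uniform lower bound $\ddot\phi \ge -C$: one must confirm that the velocities and the relative position stay bounded by constants independent of $i$ over the relevant excursions. This is exactly what the finite-system estimates (Lemma~\ref{kin_en} and~\eqref{vmax}) together with the restriction $|\Qb|\le\ell$ are designed to supply, the point being that once $\Delta t_i < 1$ the atom remains within $\vmax$ of the sphere and hence $|\mathbf{r}|$ is controlled. The remaining ingredients---the exact normal-velocity reversal and the sign bookkeeping that makes $\dot\phi(t_i^+)$ positive---are elementary but should be carried out carefully.
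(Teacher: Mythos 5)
Your proof is correct, and its overall strategy coincides with the paper's: with a single bath atom the only possible bad event is an accumulation of collision times, the elastic collision rule exactly reverses the normal relative velocity, and a bound on the atom's acceleration yields a minimum flight time proportional to the pre-collision normal relative speed $g_i$, which forces $g_i \to 0$ when the collisions accumulate. Where you differ is in how the flight-time bound is extracted. The paper works with the \emph{fixed} direction $\eb_{{\rm n},i}$ of the $i$th collision: just after that collision $(\vb - \Vb)\cdot\eb_{{\rm n},i} = -g_i$, a recollision requires this component to return to a nonnegative value (otherwise $(\qb - \Qb)\cdot\eb_{{\rm n},i}$ stays below $-R$, so $|\qb-\Qb|>R$), and since $\Vb$ is constant between collisions while $|\tfrac{d}{dt}\vb| = |A\vb| \leq \|A\|\,\vmax$ by~\eqref{vmax}, this takes time at least $g_i/(\|A\|\,\vmax)$. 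Your argument instead runs through the squared-distance function $\phi$ with $\dot\phi(t_i^+)=Rg_i$ and the one-sided bound $\ddot\phi \geq -C$. Both are sound, but the paper's version buys something concrete: the rate of change of $(\vb-\Vb)\cdot\eb_{{\rm n},i}$ is controlled by $\|A\|\,\vmax$ no matter where the atom is, so it avoids the a priori bound on $|\qb - \Qb|$ that your $\ddot\phi$ estimate requires, and with it the $\Delta t_i < 1$ bookkeeping you rightly flag as the delicate point. What your version buys in exchange is that the geometric step (no recollision while $\phi$ is strictly increasing from $0$) is completely automatic, whereas the paper's claim that a recollision forces $(\vb - \Vb)\cdot\eb_{{\rm n},i} \geq 0$ is stated without proof and needs precisely the small integration argument your $\phi$-computation replaces.
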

\begin{proof}
Since there is only one bath atom, the only possible bad collision is
the accumulation of infinitely many collisions. 
Suppose a collision occurs with relative velocity $v_{{\rm n}, i} -
V_{{\rm n}, i} = \eps,$ for some $\eps >0.$   Immediately after the
collision $v_{{\rm n}, i}' - V_{{\rm n}, i}' = -\eps.$
Since the system is composed of only a single particle and a single
bath atom, in order for another collision to occur, the bath atom must
accelerate until $(\vb(t) - \Vb(t))\cdot \eb_{{\rm n}, i} \geq 0.$  

From the bound on the bath atom velocity in~\eqref{vmax}, 
\begin{equation*}
\frac{d}{dt} \vb(t) = A \vb \leq \|A\| \vmax.
\end{equation*}
We conclude that there exists $\delta >0$ such that the time between
collisions satisfies $t_{i+1} - t_i \geq \delta \eps.$  Thus, an
accumulation of collisions implies that 
$\lim_{i \rightarrow \infty} |v_{{\rm n}, i} - V_{{\rm n}, i}| = 0.$
\end{proof}

\begin{lemma}
\label{thm:inf_collide}
The set $B_1(\ell,K)$ has measure zero.
\end{lemma}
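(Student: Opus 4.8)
The plan is to combine the velocity decay established in Lemma~\ref{lem:tan} with a Poincar\'e recurrence argument on the collision surface, transported back to initial data by the measure-preserving property of the flow. First I would record the two invariance facts on which everything rests. Between collisions the mechanical process $(\Qb,\Vb,\qb,\vb)$ obeys $\dot\Qb=\Vb$, $\dot\Vb=0$, $\dot\qb=\vb$, $\dot\vb=A\vb$, a vector field whose phase-space divergence equals $\tr A=0$; hence the free flow preserves Lebesgue measure. The elastic reflection~\eqref{collision2} keeps positions fixed and, as the computation giving $v_{\rm n}'-V_{\rm n}'=-(v_{\rm n}-V_{\rm n})$ shows, acts as a measure-preserving involution of the velocity variables. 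Writing $\Sigma=\{\,|\qb-\Qb|=R,\ (\vb-\Vb)\cdot\eb_{\rm n}\ge 0\,\}$ for the incoming collision surface, it follows, exactly as for standard billiards, that the collision (return) map $b:\Sigma\to\Sigma$ preserves the flux measure $d\mu=u\,d\sigma$, where $u=(\vb-\Vb)\cdot\eb_{\rm n}>0$ is the incoming relative normal speed (this is the quantity $u_i=v_{{\rm n},i}-V_{{\rm n},i}$ of Lemma~\ref{lem:tan}) and $d\sigma$ is the natural surface measure. On the bounded region $\mathcal R_{\ell,K}=\{\,|\Qb|\le\ell,\ \sup_t K(t)\le K\,\}$, with $m$ fixed, this measure is finite, which is precisely the role of the cutoffs $\ell$ and $K$ in the definition of $B_1(\ell,K)$.

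Next I would run the recurrence argument. Let $A_\infty\subset\Sigma$ be the set of collision states whose forward $b$-orbit remains in $\mathcal R_{\ell,K}$ and undergoes infinitely many collisions before the stopping time $\tau_m$. The set $A_\infty$ is $b$-invariant, and by Lemma~\ref{lem:tan} every orbit in $A_\infty$ satisfies $u(b^i x)=u_i\to 0$. Suppose, for contradiction, that $\mu(A_\infty)>0$. Since $u>0$ everywhere on $\Sigma$, we have $A_\infty=\bigcup_{\delta>0}(A_\infty\cap S_\delta)$ with $S_\delta=\{u>\delta\}$, so for some $\delta>0$ the set $A_\infty\cap S_\delta$ has positive $\mu$-measure. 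Applying Poincar\'e recurrence to the $\mu$-preserving transformation $b$ on the finite-measure invariant set $A_\infty$ forces $\mu$-almost every point of $A_\infty\cap S_\delta$ to return to $A_\infty\cap S_\delta$ infinitely often, i.e.\ $u(b^i x)>\delta$ for infinitely many $i$, contradicting $u_i\to 0$. Hence $\mu(A_\infty)=0$.

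Finally I would transport this back to initial conditions. Any initial condition in $B_1(\ell,K)$ has its first collision at a strictly positive time (by Lemma~\ref{thm:finiteness} the system is effectively finite and no atom touches the particle at $t=0$) and, by definition, infinitely many collisions thereafter; flowing it forward to that first collision lands it at a point of $A_\infty$. Thus $B_1(\ell,K)$ is contained in the set of phase points whose forward trajectory meets $\Sigma$ in finite time at a point of $A_\infty$. Using the flow-box change of variables, which identifies Lebesgue measure on a neighborhood of $\Sigma$ with $d\mu\,dt$ and which is valid because the free flow is volume-preserving, the tube of points flowing into the $\mu$-null set $A_\infty$ over bounded times is Lebesgue-null. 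Therefore $B_1(\ell,K)$ has Lebesgue measure zero.

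The main obstacles are two technical points I would treat carefully rather than the recurrence step itself. The first is establishing that the flux measure $\mu$ is genuinely $b$-invariant and finite on $\mathcal R_{\ell,K}$: invariance is the usual billiard statement, but here it must be verified for the \emph{non-Hamiltonian} free flow~\eqref{nonHam1}, where it still holds because that flow is volume-preserving, while finiteness relies on both cutoffs together with $m$ fixed. The second is making the flow-box/section correspondence precise even though $b$ is only partially defined (orbits may escape or reach $\tau_m$ without recolliding); I would circumvent this by phrasing every measure-theoretic statement on the invariant subset $A_\infty$, on which all forward iterates of $b$ exist by construction.
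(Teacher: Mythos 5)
Your proposal is correct in substance, but it is a genuinely different proof from the paper's. The paper argues by an elementary, quantitative covering estimate: by Lemma~\ref{lem:tan}, an accumulation of collisions forces a collision with relative normal speed below any $\eps$; the interval $[0,\tau_m)$ is cut into $P$ windows $\Gamma^P_{\eps,p}$ according to when this first slow collision occurs; Lebesgue-measure preservation of the flow map reduces every window to the first one, $|\Gamma^P_{\eps,p}|\le|\Gamma^P_{\eps,0}|$; and $|\Gamma^P_{\eps,0}|$ is bounded directly, since a slow first collision before time $\tau_m/P$ forces the bath atom to start within distance $\vmax\tau_m/P$ of the particle with normal velocity in an $O(\eps)$ window, giving $|\Gamma^P_{\eps,0}|\le C_1\eps P^{-1}e^{C_2/P}$; the union bound then yields $|B_1(\ell,K)|\le C_1\eps$ for every $\eps$. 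You instead run Poincar\'e recurrence for the collision return map $b$ with respect to the flux measure $u\,d\sigma$ on the section, whose invariance follows from the same two facts the paper uses (volume preservation of the non-Hamiltonian free flow because $\tr A=0$, and the fact that the elastic reflection has Jacobian of modulus one with $|u'|=|u|$), and you derive a contradiction between recurrence into $\{u>\delta\}$ and the decay $u_i\to 0$ of Lemma~\ref{lem:tan}. Your route is the classical billiard-theory argument that singular orbits form a null set: it is softer, avoids any explicit estimate, and generalizes easily; the paper's route avoids the section/flux-measure machinery altogether, needs only measure preservation of time-$t$ flow maps, and its windowing device is reused verbatim in the multicollision Lemma~\ref{thm:multi_collide}, which your section-based argument would not cover without modification. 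Three points in your sketch need more care than a flag: (i) since points of $A_\infty$ need not have $b$-preimages in $A_\infty$, recurrence must be run in its forward form (the sets $b^n(F)$, with $F$ the non-returning part of $A_\infty\cap S_\delta$, are pairwise disjoint of equal measure by injectivity and forward measure preservation), not the usual $T^{-n}$ form; (ii) the constraint ``before $\tau_m$'' in your definition of $A_\infty$ is not autonomous and should be replaced by ``collision times accumulate in finite time'' so that $A_\infty$ is genuinely $b$-invariant; (iii) the final flow-box argument degenerates on the tangential set $\{u=0\}$, which requires a separate (standard, dimension-counting) remark that the tube over it is Lebesgue-null.
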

\begin{proof}
For any $\eps >0$, let $\tau_\eps^*$ denote the time of the first collision where
$|v_{{\rm n}, i} - V_{{\rm n}, i}| < \eps.$   
For $P \in \NN,$ let $\Gamma^P_{\eps,p} \subset B_1(\ell,K)$ be the set of initial
conditions where $\tau_\eps^* \in [\frac{p}{P} \tau_m, \frac{p+1}{P} \tau_m),$ for
$p = 0, \dots, P-1.$  By Lemma~\ref{lem:tan}, an accumulation of collisions must 
lead to slow
collisions, so that $B_1(\ell, K) \subset \cup_{p=0}^{P-1} \Gamma^P_{\eps, p}.$

Up to time $\tau_\eps^*,$ the dynamics is well-defined, and 
the flow map $\phi_t$ for the mechanical process preserves Lebesgue measure.  
For $p=1,\dots,P-1,$ $\phi_{\frac{p}{P}\tau_m}(\Gamma^P_{\eps, p}) \subset \Gamma^P_{\eps, 0},$
so that $|\Gamma^P_{\eps, p}| \leq |\Gamma^P_{\eps, 0}|.$ 

For initial conditions in $\Gamma^P_{\eps, 0},$ we may suppose that $P^{-1}$ is 
sufficiently small so that $\tau_\eps^*$ is the time of the first 
collision (recall from Lemma~\ref{lem:tan} that a collision with $|v_{{\rm n}, i} - V_{{\rm n}, i}| \geq \eps$ leads to a minimum time between collisions 
$t_{i+1} - t_i \geq \delta \eps$). 
Mapping back from a collision at time $\tau^*_\eps,$ we see that the 
bath atom's initial position satisfies
$|\qb - \Qb| \leq \frac{\vmax \tau_m}{P}$ and the initial velocity satisfies  
$|v_{{\rm n}, i} - V_{{\rm n}, i}| \leq \eps e^{\|A\| \frac{\tau_m}{P}} \vmax.$
Because of the bounds on $\Qb$ and $K,$ we then conclude that
$|\Gamma^P_{\eps, 0}| \leq \frac{C_1 \eps}{P} e^{\frac{C_2}{P}},$
where $C_1, C_2\in \RR$ are independent of $\eps$ and $P.$
Summing over $p=0,\dots,P-1,$ we have
\begin{equation*}
| B_1(\ell,K)| \leq \lim_{\eps \rightarrow 0} \lim_{P \rightarrow \infty}
\left| \bigcup_{p=0}^{P-1} \Gamma^P_{\eps, p} \right| = 0. \qedhere
\end{equation*}
\end{proof}

\begin{lemma}
\label{thm:multi_collide}
The subset $M \subset B_N(\ell,K)$ of initial conditions where the first 
bad collision is a multicollision has measure zero.  
\end{lemma}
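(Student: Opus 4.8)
The plan is to exhibit $M$, pair of atoms by pair of atoms, as contained in the zero set of a piecewise-smooth submersion on the space of bath initial conditions, so that on each piece it lies in a $C^1$ hypersurface of codimension one and hence is Lebesgue-null. Since there are only finitely many bath atoms, it suffices to fix an ordered pair $(i,j)$ of distinct atoms and show that the set $M_{ij}\subset B_N(\ell,K)$ of initial conditions whose first bad collision is a simultaneous collision of atoms $i$ and $j$ with the large particle is null; summing over the finitely many pairs then gives $|M|=0$. Because the atom positions are unbounded I would also intersect with the event that all atoms lie in the ball $B(0,\ell')$ and let $\ell'\to\infty$ at the end. On $B_N(\ell,K)$ the energy bound of Lemma~\ref{kin_en} caps every speed by $\vmax$ of~\eqref{vmax}, and by Lemma~\ref{thm:finiteness} together with the finiteness of the collision count on the good (full-measure) subset, the trajectory undergoes only finitely many transversal single collisions before $\tau_m$. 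I would decompose $M_{ij}$ according to the combinatorial type of the collision sequence preceding the multicollision (how many times each of $i$ and $j$ has collided, and in what order relative to the other atoms); there are finitely many types, so it suffices to treat each separately.

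On a fixed type the elastic rule~\eqref{collision2} is linear in the velocities, the free flights between collisions solve the linear system~\eqref{nonHam2}, and each collision is transversal, so by the implicit function theorem the relevant collision times of atoms $i$ and $j$ are smooth functions $F_i$ and $F_j$ of the initial condition; on $M_{ij}$ one has $F_i=F_j=t^*$, so $M_{ij}\cap\{\text{type}\}\subseteq\{F_i-F_j=0\}$. It remains to show that $F_i-F_j$ is a submersion, i.e.\ that $\nabla(F_i-F_j)\neq 0$ on this set, for then the zero set is a $C^1$ hypersurface of codimension one, which is Lebesgue-null, and Fubini over the remaining coordinates finishes the type. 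This is transparent in the principal type in which neither $i$ nor $j$ collides before the multicollision: then atoms $i$ and $j$ do not touch the particle on $[0,t^*)$, so the particle trajectory there, and hence $F_i$, is independent of the initial position $\qb_j$, whereas $F_j$ depends on $\qb_j$ through the first-passage time to the moving sphere, with nonvanishing gradient at a transversal impact (differentiate $|\qb_j(t)-\Qb(t)|^2=R^2$ and use that the propagator $e^{At}$ is invertible and the normal relative velocity is nonzero). Hence $\nabla_{\qb_j}(F_i-F_j)=-\nabla_{\qb_j}F_j\neq 0$, as required.

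The main obstacle is the general type, in which atom $i$ or atom $j$ has collided with the particle before the multicollision: then perturbing $\qb_j$ shifts atom $j$'s earlier collisions, which perturb the particle trajectory and therefore move $F_i$ as well, so the clean decoupling above is lost and one must verify that the simultaneity $F_i=F_j$ nonetheless remains a genuine codimension-one condition. I would handle this by exploiting the measure-preserving structure of the flow (the flow between collisions and each collision map are volume-preserving diffeomorphisms, as already used in the proof of Lemma~\ref{thm:inf_collide}) to propagate a desynchronizing perturbation from the last free leg of atom $j$ before $t^*$ back to the initial time, along the lines of~\cite{durr80}: since on that final leg atom $j$ again travels freely without affecting the particle, a perturbation of its state there changes $F_j$ but not $F_i$, and pulling it back through the finitely many smooth, measure-preserving collision maps yields an admissible initial perturbation with $\nabla(F_i-F_j)\neq 0$. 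Carrying out this pullback while staying within $B_N(\ell,K)$ and before $\tau^*$ is the delicate point; once it is established, summing the resulting null sets over the finitely many types and pairs $(i,j)$ and letting $\ell'\to\infty$ gives $|M|=0$.
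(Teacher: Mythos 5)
Your route (stratify by collision history, then exhibit the simultaneity condition $F_i=F_j$ as the zero set of a submersion) is genuinely different from the paper's, but it has a gap that is not merely a matter of unfinished details. The stratification into \emph{finitely many} combinatorial types, and all of the implicit-function-theorem machinery built on it, presupposes that only finitely many transversal collisions precede the multicollision. That is exactly what cannot be assumed here: as the paper notes immediately before its own proof, the multicollision at $\tau^*$ may involve one or more bath atoms whose collisions with the particle \emph{accumulate} at $\tau^*$, so the history preceding the first bad collision can be infinite and such trajectories fall into none of your types. Your appeal to ``the finiteness of the collision count on the good (full-measure) subset'' is circular, since the fact that this subset has full measure is precisely what Lemma~\ref{thm:inf_collide}, Lemma~\ref{thm:multi_collide} and the concluding lemma of the appendix are jointly establishing. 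A second, related problem is transversality: your gradient computation for $F_j$ (differentiating $|\qb_j(t)-\Qb(t)|^2=R^2$ in $t$) requires a nonzero normal relative velocity at impact, but nothing excludes a grazing multicollision, or grazing earlier collisions, at which the collision times are not differentiable and the flow is not a local diffeomorphism on your strata; each such degenerate set would need its own null-set argument of the very kind you are trying to build. Finally, you yourself flag the general type (atom $i$ or $j$ has collided before $t^*$) as unresolved.

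For contrast, the paper's proof sidesteps all of this with a soft, quantitative argument that needs neither smooth dependence on initial data nor any bound on the number of prior collisions. Partition $[0,\tau_m)$ into $P$ subintervals and let $\Gamma_p$ be the set of initial conditions whose first multicollision occurs in the $p$-th subinterval; since the flow preserves Lebesgue measure up to the first bad collision, $|\Gamma_p|\leq|\Gamma_0|$. On $\Gamma_0$ the multicollision occurs within time $\tau_m/P$, so by the speed bound~\eqref{vmax} two distinct atoms must initially lie within distance $\vmax\tau_m/P$ of the particle, an event of measure $O(P^{-2})$; summing over $p$ gives $|M|\leq C/P\to 0$. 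Because this argument uses only containment in a thin shell, it is completely insensitive to accumulating collisions and to tangential impacts, i.e.\ to the two points where your stratification breaks down. To salvage your approach you would first have to prove separately that accumulation-type and tangency-type histories are null, which is essentially as hard as the lemma itself; the time-slicing/Liouville argument makes that unnecessary.
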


\begin{proof}
Intuitively, the set of all multicollisions at a given instant $\tau^*$ has phase space co-dimension greater 
than or equal to two, so that the set of initial conditions leading to a multicollision
has zero measure.  We note that a multicollision could involve one or more bath atoms
that have an accumulation of infinitely many collisions at time $\tau^*.$

For $P\in \NN,$ we let $\Gamma^P_{\eps, p}$ be the set of initial conditions such that
$\tau^* \in [\frac{p}{P} \tau_m, \frac{p+1}{P} \tau_m),$ where $p=0,\dots,P-1,$
where we recall that $\tau^*$ is the time of the first multicollision.  
We note that since the flow map of the mechanical process preserves Lebesgue measure,
$|\Gamma^P_{\eps, p}| \leq |\Gamma^P_{\eps, 0}|,$ so we only need to bound the measure of
$\Gamma^P_{\eps, 0}.$  

For initial conditions in $\Gamma^P_{\eps, 0},$ 
two or more bath atoms must satisfy
$| \qb - \Qb| \leq \frac{\vmax \tau_m}{P}$, 
so that we have $|\Gamma^P_{\eps, 0}| \leq \frac{C}{P^2}.$
We thus have
\begin{equation*}
| M | = \lim_{P \rightarrow \infty}
\left| \bigcup_{p=0}^{P-1} \Gamma^P_{\eps, p} \right| = 0. \qedhere
\end{equation*}
\end{proof}

\begin{lemma}
The set of initial conditions leading to a bad collision has
measure zero.
\end{lemma}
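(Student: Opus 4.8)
The plan is to reduce the statement to the measure-zero results already established for a single atom and for multicollisions. Since the set of initial conditions that lead to a bad collision is the countable union
\[
\bigcup_{N=1}^\infty \bigcup_{\ell=1}^\infty \bigcup_{K=1}^\infty B_N(\ell,K),
\]
it suffices to show that each $B_N(\ell,K)$ has Lebesgue measure zero. Lemma~\ref{thm:finiteness} justifies the restriction to a system of $N$ bath atoms interacting with the large particle inside the ball $|\Qb|\le \ell$, and Lemma~\ref{kin_en} (through the uniform speed bound~\eqref{vmax}) ensures that all velocities stay bounded up to the stopping time, so that the relevant phase space is a bounded set on which the flow map preserves Lebesgue measure.

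First I would decompose $B_N(\ell,K)$ according to the nature of the first bad event, occurring at the time $\tau^*$. By construction $\tau^*$ is either the time of a multicollision or the time at which infinitely many binary collisions accumulate. The multicollision part is exactly the set $M$ of Lemma~\ref{thm:multi_collide}, which has measure zero. It remains to treat the complementary set $A = B_N(\ell,K)\setminus M$, on which the first bad event is an accumulation of binary collisions at $\tau^*$. Since there are only finitely many atoms, on $A$ at least one atom must be involved in infinitely many collisions accumulating at $\tau^*$, so $A = \bigcup_{j=1}^N A_j$, where $A_j$ denotes the subset on which atom $j$ undergoes infinitely many collisions as $t \uparrow \tau^*$.

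The key step is to reduce each $A_j$ to the single-atom situation of Lemma~\ref{thm:inf_collide}. On $A_j$ I would first discard the subset on which a second atom $j'\neq j$ also accumulates infinitely many collisions at $\tau^*$: in that case both atoms lie on the surface of the large particle at time $\tau^*$, which is a limiting multicollision and is therefore already accounted for in $M$. On the remainder, exactly one atom, namely atom $j$, sits on the particle surface at $\tau^*$, while the finitely many other atoms are at positive distance from the surface there; since their speeds are bounded by $\vmax$ from~\eqref{vmax}, there is a left-neighborhood $(\tau^*-\eta,\tau^*)$ during which no other atom can reach the particle. On this neighborhood the dynamics is exactly that of the two-body system consisting of the large particle and atom $j$ evolving by~\eqref{particle_dynamics} and~\eqref{nonHam1}, so the velocity-localization and measure-preservation arguments of Lemma~\ref{thm:inf_collide} apply verbatim to show that the corresponding set of two-body initial data has measure zero. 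Pulling this back through the measure-preserving flow up to the last collision with an atom other than $j$, and integrating out the remaining atoms' coordinates by Fubini, yields $|A_j|=0$.

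Combining the pieces gives $|B_N(\ell,K)|\le |M| + \sum_{j=1}^N |A_j| = 0$ for every $N,\ell,K$, and the countable union above then also has measure zero. The main obstacle I anticipate is the reduction in the third paragraph: one must argue carefully that a non-multicollision accumulation genuinely localizes to a single colliding atom on a left-neighborhood of $\tau^*$ (so that the ballistic motion of the particle between collisions makes the Lemma~\ref{lem:tan}-style estimate valid), while the competing scenario of two atoms accumulating simultaneously is correctly absorbed into the multicollision set $M$.
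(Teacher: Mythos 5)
Your proposal is correct and takes essentially the same approach as the paper: the paper likewise splits $B_N(\ell,K)$ into the multicollision set handled by Lemma~\ref{thm:multi_collide} and the set where, for some $\delta>0$, only the single accumulating atom collides with the particle on $[\tau^*-\delta,\tau^*)$, to which the argument of Lemma~\ref{thm:inf_collide} is applied before taking the countable union over $N,\ell,K$. Your write-up merely makes explicit what the paper leaves terse --- the decomposition $A=\bigcup_j A_j$, the absorption of simultaneous accumulations into the multicollision set (which the paper's notion of multicollision indeed covers), and the flow-pullback/Fubini step.
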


\begin{proof}
We first consider the full set $B_N(\ell,K).$  If the first bad
collision in such a system is not a multicollision, then there
is a time $\delta$ such that in the interval $[\tau^* - \delta, \tau^*),$
the only collisions that the large particle experiences are with the
bath atom that experiences the bad collision at $\tau^*.$  In this case,
the arguments from Lemma~\ref{thm:inf_collide} show that the set of 
such initial conditions has measure zero.  Therefore, using Lemma~\ref{thm:multi_collide},
we conclude that $B_N(\ell,K)$ has measure zero as well.

We then conclude that 
\begin{equation*}\left|\bigcup_{N=1}^\infty \bigcup_{\ell = 1}^\infty \bigcup_{K = 1}^\infty 
B_N(\ell, K)
  \right| = 0,
\end{equation*}
and that almost surely $\tau^* = \tau_m$ and so the mechanical 
process is well-posed until at least the
stopping time $\tau_m.$
\end{proof}

\section{Convergence of the generator for the laminar heat bath}
\label{lam_gen}

In this section, we compute the limiting SDE for the two laminar heat
bath models defined in Section~\ref{sec:lam1}.  The computations
parallel those of Appendix~\ref{ma_to_ou}, and we omit some of the
details found there.  The first heat bath model has velocity
distribution of the form~\eqref{fvq_shear}, and the second has
distribution of the form~\eqref{fvq_multi_shear} where we consider the
specific shear flow
\begin{equation*}
A = \left[ \begin{array}{ccc}
0&s&0\\
0&0&0\\
0&0&0\\
\end{array} \right],
\end{equation*}
for some $s \in \RR.$
We do not carry out a full convergence proof as we do for the model
proposed in Section~\ref{meas_dyn} --- although such a proof could likely
be carried out based on the same arguments as above --- since the limiting
dynamics obtained by the following formal computations has undesirable
traits for sampling nonequilibrium states.  

\subsection{Unidirectional flow}
\label{uni_flow}
We start with a background bath whose atoms are distributed initially
according to~\eqref{fvq_shear}, where all bath atom velocities at initial
time are oriented in the $\eb_1$-direction.  In fact, rather than working 
with~\eqref{fvq_shear} and its Gaussian distribution, we work with
\begin{equation}
\label{lam_law}
d \mu_m(\qb,v_1) = \lambda_m f_m( v_1 - s q_2) \, d \qb \, d v_1,
\end{equation}  
where we have the scaling $f_m(\vce) = m^{1/2} f(m^{1/2} \vce)$,
where the probability density function $f(\vce)$ is assumed to be
decreasing in $\vce$ with four finite moments~\eqref{phi_i_2}, and
where we recall $q_2 = Q_2 - R \eb_n \cdot \eb_2$.  Note
that we have eliminated $v_2$ and $v_3$ in the above expression since
they are identically zero.  We recover~\eqref{fvq_shear} upon choosing
$f(x) = Z^{-1} \exp\left(-\frac{\beta}{2} x^2 \right).$

The normal velocity for a bath atom before a collision
is $v_n = v_1 \eb_1 \cdot \eb_{\rm n},$ and we keep the distinction
between slow and fast collisions of Definition~\ref{fast}.  The rate of fast collisions
experienced by the particle is 
\begin{equation*}
\begin{split} 
r_m(v_1, \eb_{\rm n}, \Qb, \Vb) \, dv_1 \, d \Omega \, dt 
=& \lambda_m R^2 \ 
 H( (v_1 \eb_1 - s q_2)\cdot \eb_{\rm n} - c_m) \\
& \max(v_{\rm n} - V_{\rm n}, 0) f_m( v_1 - s q_2) \, dv_1 \, d\Omega
\, dt.
\end{split} 
\end{equation*}
In contrast to~\eqref{bg_measure1}, the rate function here includes a
Heaviside function to ignore slow collisions. 
We let $\Vh = V_n' - V_n$, which satisfies, in view
of~\eqref{collision2}
\begin{equation}
\label{collision_lam}
\begin{split}
\Vh &= \frac{2m}{M+m} (v_1 \ \eb_1 \cdot \eb_{\rm n} - V_{\rm n}),  \\
\Vhm &= \frac{2m}{M+m} (c_m + s q_2 \  \eb_1 \cdot \eb_{\rm n}- V_{\rm n}),  \\
\end{split}
\end{equation}
where $\Vhm$ denotes the minimum velocity jump size (due to the cutoff for slow
collisions).  For $\Vh \geq \Vhm,$ the measure on jumps 
$\Vh \eb_{\rm n}$ is
\begin{equation*}
\begin{split} 
\hat{r}_m(\Vh, \eb_{\rm n}, \Qb, \Vb)  &= 
	\lambda_m R^2 \left(\frac{M+m}{2m}\right)^2  
	\frac{\Vh}{|\eb_1 \cdot \eb_{\rm n}|} \  
    f_m\left(\frac{1}{\eb_1 \cdot \eb_{\rm n}} 
	\left[\frac{M+m}{2m} \Vh + V_{\rm n}\right] 
  -  s q_2 \right) .
\end{split} 
\end{equation*}
Using this jump measure, we define a Markov process as in
Definition~\ref{def:mark}.

We write the generator for the Markov process 
\begin{equation*}
L_m \psi = \Vb \cdot \nabla_Q \psi + \int_{\SSph^{2}} \int_\Vhm^\infty 
              [\psi(\Qb, \Vb + \Vh \eb_{\rm n}) -\psi(\Qb, \Vb)] \ 
		  \hat{r}_m(\Vh, \eb_{\rm n}, \Qb, \Vb) \, d \Vh \, d \Omega.
\end{equation*}
From the scaling of bath atom velocities and associated
law~\eqref{lam_law}, $\EE_m(|\vb|)$ is
$O(m^{-1/2})$ and by~\eqref{collision_lam} we have that   
$\EE_m(\Vh)$ is $O(m^{1/2})$. We expand the generator for the Markov
process in powers of
$\Vh$ and find to leading order  
\begin{equation}
\label{Lm_expand_lam}
\begin{split}
L_m \psi =& \Vb \cdot \nabla_Q \psi(\Qb, \Vb) 
+ C_m \left(  I_1 \cdot \nabla_V \psi(\Qb,\Vb) + \frac{1}{2} I_2 : \nabla_V^2 \psi(\Qb,\Vb) \right) + O(m^{1/2}),
\end{split}
\end{equation}
where 
\begin{align} 
\label{cm2}
C_m &= \lambda R^2
\left(\frac{M+m}{2}\right)^2,  \\
\notag
I_1 &= 
m^{-5/2} \int_{\SSph^{2}} \int_\Vhm^\infty 
	\eb_{\rm n} \frac{\Vh^2}{|\eb_1 \cdot \eb_{\rm n}|} 
f_m\left( \frac{1}{\eb_1 \cdot \eb_{\rm n}} \left[\frac{M+m}{2m} \Vh 
	 + V_{\rm n}\right] - s q_2\right) 
	d\Vh \, d\Omega,\\
\notag
I_2 &= 
m^{-5/2} \int_{\SSph^{2}} \int_\Vhm^\infty 
	(\eb_{\rm n} \otimes \eb_{\rm n}) \frac{\Vh^3}{|\eb_1 \cdot \eb_{\rm n}|}  
f_m\left( \frac{1}{\eb_1 \cdot \eb_{\rm n}} \left[\frac{M+m}{2m} \Vh 
	  + V_{\rm n}\right] - s q_2\right) 
	d\Vh \, d\Omega,
\end{align}
where $q_2 = Q_2 - R (\eb_{\rm n} \cdot \eb_2).$  
Note that estimating the remainder term follows as in
Appendix~\ref{sec:mark_rem}.

In both integrals $I_1$ and $I_2$, we substitute 
\begin{equation*}
\begin{split} 
x &= \frac{1}{|\eb_1 \cdot \eb_{\rm n}|} 
   \left[ \frac{M+m}{2 m^{1/2}} \Vh + m^{1/2} V_{\rm n} \right] 
- m^{1/2} s \frac{\eb_1 \cdot \eb_{\rm n}}{|\eb_1 \cdot \eb_{\rm n}|}
q_2,\\
\xm &= \frac{m^{1/2} c_m}{|\eb_1 \cdot \eb_n|},
\end{split} 
\end{equation*}
where $\xm > 0$ is the minimum value of $x.$
The drift term, $I_1,$ becomes
\begin{equation}
\label{i1_lam}
\begin{split}
I_1 =&  m^{-1/2} \int_{\SSph^{2}} \int_{\xm}^\infty 
	\eb_{\rm n} {\left(|\eb_1 \cdot \eb_{\rm n}| x - m^{1/2} V_{\rm n} 
	      + m^{1/2} s (\eb_1 \cdot \eb_{\rm n}) q_2 \right)^2}  
  \frac{8}{(M+m)^3}  f\left( \frac{|\eb_1 \cdot \eb_{\rm n}|}
             {\eb_1 \cdot \eb_{\rm n}} x\right) \, dx \, d\Omega\\
=&\frac{8 m^{-1/2}}{M^3}  \left[ \int_{\SSph^{2}} \int_{\xm}^\infty 
	\eb_{\rm n} {\left(|\eb_1 \cdot \eb_{\rm n}| x \right)^2} 
	f\left( \frac{|\eb_1 \cdot \eb_{\rm n}|}{\eb_1 \cdot \eb_{\rm n}} 
        x\right) \,	dx \, d\Omega \right. \\
&+ \left. 2 \int_{\SSph^{2}} \int_{0}^\infty \eb_{\rm n} 
     \left(- m^{1/2} V_{\rm n} + m^{1/2} s (\eb_1 \cdot \eb_{\rm n}) q_2  \right)   \left(|\eb_1 \cdot \eb_{\rm n}| x \right)  
	f\left( \frac{|\eb_1 \cdot \eb_{\rm n}|}{\eb_1 \cdot \eb_{\rm n}} 
				 x\right) \, dx \, d\Omega \right] + O(m^{3/10}) \\
=&\frac{16 \Phi_1}{M^3}  \left[ \int_{\SSph^{2}} \eb_{\rm n} 
     \left(- \Vb \cdot \eb_{\rm n} + s (\eb_1 \cdot \eb_{\rm n}) q_2  \right)  
     |\eb_1 \cdot \eb_{\rm n}|   
	 \, d\Omega \right] + O(m^{3/10}),
\end{split} 
\end{equation}
where we have used the fact that $O(m^{-1/2})$-term integrates to zero 
and have extended the integration region with the estimate 
\mbox{$\int_0^{\xm} |\eb_1 \cdot \eb_{\rm n}| x f(x) \, dx
\leq  m^{1/2} c_m \int_\RR f(x) \, dx = O(m^{3/10}).$}  Note that $q_2
= Q_2 - R \eb_n \cdot \eb_2$
depends on $\eb_n,$ but the $\eb_n$-dependence annihilates as we show below
in~\eqref{sphere_lam_3}.

Substituting $x$ into $I_2,$ we compute the diffusion coefficient to be
\begin{equation}
\label{i2_lam}
\begin{split} 
I_2 &= \frac{16}{M^4} \int_{\SSph^{2}} \int_{\xm}^\infty 
     (\eb_{\rm n} \otimes \eb_{\rm n}) \left(|\eb_1 \cdot \eb_{\rm n}| x\right)^3 
	f\left( \frac{\eb_1 \cdot \eb_{\rm n}}{|\eb_1 \cdot
              \eb_{\rm n}|} x\right) 
	dx \, d\Omega + O(m^{1/2})\\
&= \frac{16 \Phi_3}{M^4} 
\int_{\SSph^{2}} (\eb_{\rm n} \otimes \eb_{\rm n}) |\eb_1 \cdot \eb_{\rm n}|^3  
	\, d\Omega + O(m^{1/2}),\\
\end{split} 
\end{equation}
where we have used the inequality $\int_0^{\xm} (|\eb_1 \cdot
\eb_{\rm n}| x)^3 f(x) \, dx
\leq  m^{3/2} c_m^3 \int_\RR f(x) \, dx = O(m^{1/2}).$

Using the following expressions for the spherical integrals
\begin{align}
\label{sphere_lam_1}
\int_0^{2\pi} \int_0^\pi  
	(\eb_{\rm n} \otimes \eb_{\rm n}) |\eb_1 \cdot \eb_{\rm n}|  \ \sin \phi  
	\, d\phi \, d\theta 
&= \left(
  \begin{array}{ccc}
    \pi& 0 & 0 \\
    0& \frac{\pi}{2}  & 0 \\
    0& 0&\frac{\pi}{2} \\
  \end{array}
\right), \\
\label{sphere_lam_2}
\int_0^{2\pi} \int_0^\pi  
	(\eb_{\rm n} \otimes \eb_{\rm n}) |\eb_1 \cdot \eb_{\rm n}|^3  \ \sin \phi  
	\, d\phi \, d\theta
&= \left(
  \begin{array}{ccc}
    \frac{2\pi}{3}& 0 & 0 \\
    0& \frac{\pi}{6}  & 0 \\
    0& 0&\frac{\pi}{6} \\
  \end{array}
\right), \\
\label{sphere_lam_3}
\int_0^{2\pi} \int_0^\pi  
	(\eb_{\rm n} \otimes \eb_{\rm n}) |\eb_1 \cdot \eb_{\rm n}|  
    (\eb_2 \cdot \eb_{\rm n}) \ \sin \phi \ d\phi \, d\theta &= 0,
\end{align}
we deduce from~\eqref{i1_lam} and~\eqref{i2_lam} that 
\begin{equation*}
\begin{split}
I_1 &= - \frac{16 \pi \Phi_1}{M^3}  
\left(
  \begin{array}{ccc}
    1& 0 & 0 \\
    0& \frac{1}{2}  & 0 \\
    0& 0&\frac{1}{2} \\
  \end{array}
\right) (\Vb - s (\eb_1 \otimes \eb_2) \Qb), \qquad 
I_2 = \frac{16 \pi \Phi_3}{M^4} 
\left(
  \begin{array}{ccc}
    \frac{2}{3}& 0 & 0 \\
    0& \frac{1}{6}  & 0 \\
    0& 0&\frac{1}{6} \\
  \end{array}
\right).
\end{split}
\end{equation*}

Combining these terms with~\eqref{Lm_expand_lam} and~\eqref{cm2},
writing the matrices in terms 
of tensor products, choosing $f$ from~\eqref{gauss1},
and taking the limit as $m \rightarrow 0,$ 
the generator converges in the sense of~\eqref{gen_converge}
to the generator
\begin{equation*}
\begin{split}
L \psi &= \Vb \cdot \nabla_Q \psi + 
\frac{2 \sqrt{2 \pi} \lambda R^2}{\sqrt{\beta} M} 
\left[ - \left(\eb_1 \otimes \eb_1 + \frac{1}{2}  
                   (I - \eb_1 \otimes \eb_1)\right)
(\Vb - s (\eb_1 \otimes \eb_2) \Qb) \cdot \nabla_V \psi \right. \\ 
&\qquad + \left. \frac{1}{\beta M} 
\left(\frac{2}{3} \eb_1 \otimes \eb_1 + \frac{1}{6} (I - \eb_1 \otimes
\eb_1)\right) : \nabla_V^2 \psi) \right].\\
\end{split}
\end{equation*}
This is the generator for the anisotropic SDE
\begin{equation}
\label{ou_shear_app}
\begin{split}
d \Qb &= \Vb dt \\
M d \Vb &= - \gamma (\Vb - s (\eb_1 \otimes \eb_2) \Qb) dt + \sigma d\Wb 
       = - \gamma (\Vb - A \Qb) dt + \sigma d\Wb \\
\end{split}
\end{equation}
where
\begin{equation*}
\gamma =  \frac{2 \sqrt{2\pi} \lambda R^2}{\sqrt{\beta}}
    \left( \begin{array}{ccc} 
            1 &0&0\\0&\frac{1}{2}&0\\0&0&\frac{1}{2}
           \end{array}\right) 
\qquad \text{ and } \qquad 
\sigma = \left[
       \frac{4 \sqrt{2\pi} \lambda R^2}{\sqrt{\beta^3}}
    \left( \begin{array}{ccc} 
            \frac{2}{3} &0&0\\0&\frac{1}{6}&0\\0&0&\frac{1}{6}
           \end{array}\right)          \right]^{1/2}.
\end{equation*} 
This is the dynamics~\eqref{ou_shear}  discussed in
Section~\ref{sec:lam1}.
As mentioned there, we find the inconsistency of~\eqref{ou_shear_app}
when $s=0$ with the Langevin
dynamics for zero background flow~\eqref{lang1} unacceptable.  
As an attempt to remove the
anisotropy observed in~\eqref{ou_shear_app},  
we add laminar flows in {\em{each}} coordinate
direction in the following section.

\subsection{Superimposed baths}
\label{sec:superimposed}

We now consider having three different laminar flows, one for each coordinate 
direction $\eb_i,$  for $i=1,2,3.$   We define the probability
distributions 
\begin{equation*}
f_i(v) = 
 Z_i^{-1} \exp\left(-\frac{1}{2 \theta_i^2}  v^2\right),      
\qquad \text{ where } Z_i={\sqrt{2 \pi} \theta_i}, \qquad 
i = 1,2,3,
\end{equation*}
where we allow for different temperatures for
each of the bath measures. 
The initial condition of the bath atoms is chosen according to the
measure
\begin{equation*}
\begin{split}
d \mu_m(\qb,\vb) &= 
 \frac{\lambda_m m^{1/2}}{3} \left( f_{1}(m^{1/2} (v_1 - s q_2)) \delta(v_{2})
\delta(v_{3})  \right.\\
&\left. +f_{2}(m^{1/2} v_2) \delta(v_{1}) \delta(v_{3})
 +f_{3}(m^{1/2} v_3) \delta(v_{1}) \delta(v_{2}) \right) 
d \vb  d \qb, \qquad \qb, \vb \in \RR^3,
\end{split}
\end{equation*}
where $\delta(x)$ is the Dirac distribution. 

Upon splitting the integral linearly, the calculations of
Appendix~\ref{uni_flow} give us the limiting generator
\begin{equation*}
\begin{split}
L \psi &= \Vb \cdot \nabla_Q \psi 
   - M^{-1} (\gamma \Vb - \tilde u)\cdot \nabla_V \psi 
   + \frac{1}{2} M^{-2} \sigma \sigma^T :\Delta_V \psi 
\end{split}
\end{equation*}
where 
\begin{equation*}
\begin{split}
\gamma &=  \frac{2 \sqrt{2\pi} \lambda R^2}{3}
	    \left( \begin{array}{ccc} 
	  \theta_1+ \frac{1}{2} \theta_2+
	\frac{1}{2}\theta_3&0&0\\
	0& \frac{1}{2} \theta_1+  \theta_2+
	\frac{1}{2}\theta_3&0\\
	0&0&\frac{1}{2} \theta_1+ \frac{1}{2} \theta_2+
	\theta_3 
           \end{array}\right), \\
\tilde u &=  \frac{2 \sqrt{2\pi} \lambda R^2}{3}
    \left( \begin{array}{c} 
      {s Q_2}{\theta_1}\\
			{0}{}\\
			{0}{}\\
           \end{array}\right), \\
\sigma \sigma^T &=   
	\frac{4 \sqrt{2\pi} \lambda R^2}{3}
	    \left( \begin{array}{ccc} 
	  \frac{2}{3}\theta^3_1 + \frac{1}{6} \theta^3_2+
	\frac{1}{6}\theta^3_3&0&0\\
	0& \frac{1}{6}\theta^3_1 + \frac{2}{3}\theta^3_2 +
	\frac{1}{6}\theta^3_3&0\\
	0&0&\frac{1}{6} \theta^3_1+ \frac{1}{6} \theta^3_2+
	\frac{2}{3}\theta^3_3
           \end{array}\right). 
\end{split}
\end{equation*} 
The corresponding SDE is
\begin{equation*}
\begin{split}
d \Qb &= \Vb dt, \\
M d \Vb &= - \gamma \Vb dt + \tilde u dt + \sigma d\Wb.
\end{split}
\end{equation*} 
As an example, let us take $\theta_i =
\frac{1}{\sqrt{\beta}}$ for $i=1,2,3.$  Then 
we have for the limiting equation
\begin{equation*}
\begin{split}
d \Qb &= \Vb dt,\\
M d\Vb &= - \gamma \left(\Vb - \frac{1}{2} s Q_2 \eb_1\right) dt +
\left(\frac{\gamma}{\beta}\right)^{1/2} d\Wb,
\end{split}
\end{equation*} 
where 
$\gamma = \frac{4 \lambda R^2 \sqrt{2 \pi}}{3 \sqrt{\beta}}.$
This is~\eqref{ou_shear2}, which is discussed in
Section~\ref{sec:lam1}.  This system satisfies the
fluctuation-dissipation relation only for inverse temperature
$\tilde{\beta}= 2 \beta.$   That is, the temperature of the large
particle is half that of the bath.  More problematically, there is a
factor of $1/2$ on $s,$  which means that the large particle only
feels half the average velocity of the  heat bath.  This is caused by
the superposition of multiple laminar flows.  We note that there is no
choice of $\theta_1, \theta_2,$ and $\theta_3$ so that the resulting
SDE is both isotropic and has response $\tilde{u}$ equal to the input
background motion.

\section*{Acknowledgements}

This work is supported in part by the NSF Mathematical Sciences
Postdoctoral Research Fellowship  and by the Agence Nationale de la
Recherche, under grants ANR-06-CIS6-006 (PARMAT) and
ANR-09-BLAN-0216-01 (MEGAS).  
 The authors would like to thank Claude Le Bris for many fruitful
discussions.

\bibliographystyle{abbrv}

\end{document}